\def\T{{ \mathrm{\scriptscriptstyle T} }}
\numberwithin{equation}{section}
\theoremstyle{plain}
\newtheorem{thm}{Theorem}[section]
\newtheorem{lemma}[thm]{Lemma}
\def\bibindent{1em}
\renewcommand\@biblabel[1]{} % No brackets for the references
\renewenvironment{thebibliography}[1]
{\section*{\refname}%
	\@mkboth{\MakeUppercase\refname}{\MakeUppercase\refname}%
	\list{\@biblabel{\@arabic\c@enumiv}}%
	{\settowidth\labelwidth{\@biblabel{}}%
		\leftmargin\labelwidth
		\advance\leftmargin15pt% change 20 pt according to your needs
		\advance\leftmargin\labelsep
		\setlength\itemindent{-10pt}% change using the inverse of the length used before
		\@openbib@code
		\usecounter{enumiv}%
		\let\p@enumiv\@empty
		\renewcommand\theenumiv{\@arabic\c@enumiv}}%
	\sloppy
	\clubpenalty4000
	\@clubpenalty \clubpenalty
	\widowpenalty4000%
	\sfcode`\.\@m}
{\def\@noitemerr
	{\@latex@warning{Empty `thebibliography' environment}}%
	\endlist}
\renewcommand\newblock{\hskip .11em\@plus.33em\@minus.07em}
\newenvironment{bottompar}{\par\vspace*{\fill}}{\clearpage}
\renewcommand{\refname}{REFERENCES}
\begin{document}

\title{\textbf{Group Sequential Crossover Trial Designs with Strong Control of the Familywise Error Rate}}
\author{\textbf{M. J. Grayling\textsuperscript{1}, J. M. S. Wason\textsuperscript{1,2}, A. P. Mander\textsuperscript{1}}\\
	\small 1. Hub for Trials Methodology Research, MRC Biostatistics Unit, Cambridge, UK, \\ \small 2. Institute of Health and Society, Newcastle University, Newcastle, UK.}
\date{}
\maketitle

\noindent \textbf{Running Head:} Group Sequential Crossover Trials.\\

\noindent \textbf{Abstract:} Crossover designs are an extremely useful tool to investigators, whilst group sequential methods have proven highly proficient at improving the efficiency of parallel group trials. Yet, group sequential methods and crossover designs have rarely been paired together. One possible explanation for this could be the absence of a  formal proof of how to strongly control the familywise error rate in the case when multiple comparisons will be made. Here, we provide this proof, valid for any number of initial experimental treatments and any number of stages, when results are analysed using a linear mixed model. We then establish formulae for the expected sample size and expected number of observations of such a trial, given any choice of stopping boundaries. Finally, utilising the four-treatment, four-period TOMADO trial as an example, we demonstrate group sequential methods in this setting could have reduced the trials expected number of observations under the global null hypothesis by over 33\%.\\

\noindent \textbf{Keywords:} Clinical trial; Crossover; Familywise error rate; Group sequential; Linear mixed model.\\

%\noindent \textbf{Subject Classifications:} 62K99; 62L10.

\begin{bottompar}
	\noindent Address correspondence to M. J. Grayling, MRC Biostatistics Unit, Forvie Site, Robinson Way, Cambridge CB2 0SR, UK; Fax: +44-(0)1223-330365; E-mail: mjg211@cam.ac.uk. 
\end{bottompar}

\section{INTRODUCTION}
\label{s:intro}

The efficiency of crossover trials often makes them the best design for a clinical trial. Administering multiple treatments to patients reduces the standard error of the estimated treatment effects compared to a parallel trial design with an equal number of patients. Therefore, whilst restrictions to their use exist; such as a requirement for patients to begin each new treatment period in a comparable state to those completed, crossover trials are the design of choice in many settings (Jones and Kenward, 2014; Senn, 2002), resulting in them accounting for 22\% of all published trials in December 2000 for example (Mills et al., 2009).

In a parallel design setting, group sequential methods are frequently utilized to improve a clinical trials efficiency (Jennison and Turnbull, 2000). These designs incorporate interim analyses which allow for early rejection of null hypotheses; efficacy stopping, or early stopping for lack of benefit; futility stopping. This way, the expected sample size required can be reduced over the more classical single-stage approach. Moreover, multi-arm multi-stage designs, which allow multiple experimental treatments to share a control group, can increase efficiency even further (Parmar et al., 2014).

Group sequential methods are not frequently used in crossover trial settings however, in particular ones with multiple experimental treatments. Hauck et al. (1997) investigated the performance of group sequential trials for average bioequivalence employing an AB/BA crossover design, whilst Jennison and Turnbull (2000) provided one possible analysis method for a group sequential AB/BA crossover with a normally distributed endpoint. No one, to the best of our knowledge, has explored group sequential theory for crossover trials with more than one experimental treatment being compared to a shared control.

Thus, one possible explanation for the lack of group sequential crossover trials may be that there is not yet available a formal proof of how to strongly control the familywise error rate of such a trial with multiple experimental treatments; since such a proof is usually required for regulatory approval (Wason et al., 2014). In comparison to a proof for a parallel multi-arm multi-stage design (Magirr et al., 2012), proving strong control of the familywise error rate is complicated here due to difficulties associated with the covariance structure implied by mixed model analysis. As has been remarked, multiple testing corrections for mixed models are only presently available for certain specific circumstances (Bender and Lange, 2001). Extension to this setting is particularly significant though given the noted advantages of comparing multiple experimental treatments to a shared control, both in terms of trial management and sample size (Parmar et al., 2014).

Potential exists, given a proof, for the efficiency of crossover trial designs to be improved. In this work, we begin by providing such a proof for a linear mixed model using period and treatment as fixed effects, and individuals as random effects. Following this, using the four treatment, four-period TOMADO trial (Quinnell et al., 2014) as an example, we explore and discuss the efficiency gains that group sequential designs could bring in a crossover setting.

\section{METHODS}
\label{s:model}

\subsection{Notation, Hypotheses and Analysis}

The trial is assumed to have \(D \ge 2\) treatments initially, indexed \(d=0,\dots,D-1\). Treatments \(d=1,\dots,D-1\) are experimental, to be compared to the control \(d=0\). A maximum of \(L\) stages are planned for the trial. At each stage patients are allocated to each of a set of treatment sequences, which specify an order in which a patient receives treatments. The sequences used at each stage are determined by the number of treatments remaining in the trial at that stage. Without loss of generality, we will assume that if a treatment, or treatments, are dropped, it is treatment \(D-1\) dropped first, then \(D-2\), and so on, since treatments can always be re-labelled at each interim analysis. Then, we denote by \(S_r=\{s_{ri} : i=1,\dots,\left|S_r\right|\}\), \(r = 2,\dots,D\), the set of sequences for patient treatment allocation when $r$ treatments remain in the trial, with each $S_r$ written in the form assuming it is exactly treatments $d = 0,\dots, r-1$ that remain. We further constrain each \(S_r\) to contain only complete block sequences that are balanced for period. Specifically, complete block allocation requires all sequences to contain each treatment remaining in the trial exactly once, and period balance requires an equal number of patients to receive each treatment remaining in the trial in each period. These constraints allow the use of the popular Latin and Williams squares (Jones and Kenward, 2014).

A fixed group size \(n\) is used for each stage of the trial, and is chosen such that at every stage each sequence is used an equal number of times. Thus \(n\) must be divisible by the lowest common multiple of \(\left|S_2\right|,\dots,\left|S_D\right|\). Designing the trial in this manner ensures each treatment is considered equally.

Outcome data is assumed to be normally distributed, and a linear mixed model is used for analysis, given by
\[y_{ijkl} = \mu_0 + \pi_j + \tau_{d[j,k,l]} + s_{ikl} + \epsilon_{ijkl},\]
or
\[\boldsymbol{Y} = \boldsymbol{X}\boldsymbol{\beta} + \boldsymbol{Z}\boldsymbol{b} + \boldsymbol{\epsilon},\]
where
\begin{itemize}
	\item \(\boldsymbol{Y}\) is the vector of responses, containing the values of the \(y_{ijkl}\); the response for individual \(i\), in period \(j\), on sequence \(k\), in stage \(l\),
	\item \(\boldsymbol{\beta}\) is the vector of fixed effects, of length \(2D-1\), consisting of
	\begin{itemize}
		\item \(\mu_0\) the mean response on treatment 0 in period 1, an intercept term,
		\item \(\pi_j\) the fixed period effect for period \(j\), with the identifiability constraint \(\pi_1=0\). Note that period is reset to 1 for each new stage of the trial. That is, the first period of stage 2 is treated as period 1 rather than period $D+1$, and similarly for later stages. Thus, we have exactly \(D-1\) non-zero period effects given our restriction to complete block sequences.
		\item \(\tau_{d[j,k,l]}\)  is the fixed direct treatment effect for an individual in period \(j\), on sequence \(k\), in stage \(l\), with the identifiability constraint \(\tau_0=0\),
	\end{itemize}
	\item \(\boldsymbol{X}\) is the matrix linking the fixed effects to the vector of responses,
	\item \(\boldsymbol{b}\) is the vector of random effects, consisting of the \(s_{ikl}\); the random effect for individual \(i\), on sequence \(k\), in stage \(l\),
	\item \(\boldsymbol{Z}\) is the matrix linking the random effects to the vector of responses,
	\item \(\boldsymbol{\epsilon}\) is the vector of residuals, consisting of the \(\epsilon_{ijkl}\); the residual for individual \(i\), in period \(j\), on sequence \(k\), in stage \(l\).
\end{itemize}
Additionally, denoting by \(\sigma_b^2>0\) and \(\sigma_e^2>0\) the between and within subject variances respectively, we take
\begin{align*}
\text{cov}(s_{i_1k_1l_1},s_{i_2k_2l_2}) &= \sigma_b^2 \delta_{i_1i_2} \delta_{k_1k_2} \delta_{l_1l_2},\\ \text{cov}(\epsilon_{i_1j_1k_1l_1},\epsilon_{i_2j_2k_2l_2}) &= \sigma_e^2 \delta_{i_1i_2} \delta_{j_1j_2} \delta_{k_1k_2} \delta_{l_1l_2},
\end{align*}
where \(\delta_{ij}\) is the Kronecker Delta function. Incorporation of fixed effects for period and treatment only, and our chosen covariance structure above, are the conventional choices for a crossover trial (Jones and Kenward, 2014).

We test \(D-1\) hypotheses. Since we are interested in testing the efficacy of experimental treatments in comparison to a control, we consider the case of one-sided alternative hypotheses $H_{0d} : \tau_d \le 0, H_{1d} : \tau_d > 0,$ for  \(d = 1,\dots,D-1\).

At each interim analysis the above model is used to compute an estimate, \(\hat{\boldsymbol{\beta}}_l\) \((l=1,\dots,L)\), for \(\boldsymbol{\beta}\) through the standard maximum likelihood estimator of a linear mixed model
\[\hat{\boldsymbol{\beta}}_l = (\boldsymbol{X}^\T \boldsymbol{\Sigma}^{-1} \boldsymbol{X} )^{-1}\boldsymbol{X}^\T\boldsymbol{\Sigma}^{-1}\boldsymbol{Y} \sim MVN\left\{\boldsymbol{\beta}, (\boldsymbol{X}^\T \boldsymbol{\Sigma}^{-1} \boldsymbol{X} )^{-1}\right\},\]
where $\boldsymbol{\Sigma} = \boldsymbol{Z}\text{cov}(\boldsymbol{b},\boldsymbol{b})\boldsymbol{Z}^\T + \text{cov}(\boldsymbol{\epsilon},\boldsymbol{\epsilon})$ (Fitzmaurice et al., 2011). From this we acquire \(\hat{\boldsymbol{\tau}}_l = (\hat{\tau}_{1l},\dots,\hat{\tau}_{D-1 l})^\T\), which consists of the maximum likelihood estimates for each \(\tau_d\). Then, each \(\hat{\tau}_{dl}\) is standardized to give \(D-1\) test statistics $Z_{dl} = \hat{\tau}_{dl}I_{dl}^{1/2}$, \(d=1,\dots,D-1\), with \(I_{dl} = \{\text{var}(\hat{\tau}_{dl})\}^{-1}\) the information level for treatment \(d\) at interim analysis \(l\). Since \(\hat{\boldsymbol{\tau}}_l\) is estimated via a normal linear model we know that $E(Z_{dl}) = \tau_dI_{dl}^{1/2}$ (Jennison and Turnbull, 2000).

Given fixed futility boundaries, \(f_{dl}\), and efficacy bounds, \(e_{dl}\), the following stopping rules are used at each analysis $l=1,...,L$, for each experimental treatment \(d=1,\dots,D-1\) satisfying \(f_{dm} \le Z_{dm} < e_{dm}\) for $m=1,\dots,l-1$
\begin{itemize}
	\item if \(Z_{dl}<f_{dl}\) treatment \(d\) is dropped without rejecting \(H_{0d}\),
	\item if \(f_{dl} \le Z_{dl}<e_{dl}\) the trial is continued with treatment \(d\) still present,
	\item and if \(e_{dl} \le Z_{dl}\) treatment \(d\) is dropped and \(H_{0d}\) rejected.
\end{itemize}
The control treatment, \(d=0\), remains present at every undertaken stage, and we only proceed to an additional stage if there is at least one experimental treatment remaining in the trial. It is convenient to take \(f_{dl}=f_l\) and \(e_{dl}=e_l\) for all \(d\) and \(l\), as well as \(f_L=e_L\) in order to ensure the trial conforms to the desired maximum number of stages and so that a conclusion is made for each $H_{0d}$. Note that rejection of one treatment's null hypothesis does not end the trial. Furthermore, with this formulation, once a treatment is dropped from the trial its standardised treatment effect is not tested at any future analyses.

In what follows, we will make use of the vectors \(\boldsymbol{\omega}_R = (\omega_{R1},\dots,\omega_{R D-1})^\T\) and \(\boldsymbol{\psi}_R = (\psi_{R1},\dots,\psi_{R D-1})^\T\). Here, \(\omega_{Rd} \in \{1,\dots,L\}\) is the analysis at which experimental treatment \(d\) was dropped from the trial. Moreover, \(\psi_{Rd} \in \{0,1\}\); with \(\psi_{Rd}=1\) if experimental treatment \(d\) was dropped for efficacy, and is 0 otherwise. Prior to a trials commencement $\boldsymbol{\omega}_R$ and $\boldsymbol{\psi}_R$ are unknown random variables. However, the probability that the trial progresses according to some particular $\boldsymbol{\omega}=(\omega_1,\dots,\omega_{D-1})^\T$ and $\boldsymbol{\psi}=(\psi_1,\dots,\psi_{D-1})^\T$, given a vector of true response rates \(\boldsymbol{\tau} = (\tau_1,\dots,\tau_{D-1})^\T\), can be computed using multivariate normal integration. More specifically, given this particular \((\boldsymbol{\omega},\boldsymbol{\psi})\) pair the covariance between, and the information level of, the test statistics can be computed and the following integral evaluated (see Jennison and Turnbull (2000) or Wason (2015) for further details)

\begin{small}
	\begin{equation*}
	\begin{split}
	\text{pr}(\boldsymbol{\omega}_R=\boldsymbol{\omega},\boldsymbol{\psi}_R=\boldsymbol{\psi}\mid\boldsymbol{\tau}) &= \int_{\text{l}(1,\omega_1,\psi_1)}^{\text{u}(1,\omega_1,\psi_1)} \dots\\
	& \qquad \dots \int_{\text{l}(L,\omega_{D-1},\psi_{D-1})}^{\text{u}(L,\omega_{D-1},\psi_{D-1})} \! \phi\left\{\boldsymbol{x}, \boldsymbol{r}(\boldsymbol{\tau},L) \circ \boldsymbol{I}^{1/2}_{(\boldsymbol{\omega},\boldsymbol{\psi})}, \boldsymbol{\Lambda}_{(\boldsymbol{\omega},\boldsymbol{\psi})}\right\} \,\\
	& \qquad \qquad \qquad \qquad \qquad \qquad 
	\qquad \mathrm{d}x_{L(D-1)}\dots\mathrm{d}x_{11},
	\end{split}
	\end{equation*}
\end{small}
where
\begin{itemize}
	\item  $\boldsymbol{x}=(x_{11},\dots,x_{1(D-1)},\dots,x_{L1},\dots,x_{L(D-1)})^\T$,
	\item \(\phi\{\boldsymbol{x},\boldsymbol{\mu},\boldsymbol{\Lambda}\}\) is the probability density function of a multivariate normal distribution with mean \(\boldsymbol{\mu}\) and covariance matrix \(\boldsymbol{\Lambda}\), evaluated at vector $\boldsymbol{x}$,
	\item \(\boldsymbol{r}(\boldsymbol{\tau},L)\) is the vector formed by repeating \(\boldsymbol{\tau}\) \(L\) times,
	\item $\boldsymbol{I}_{(\boldsymbol{\omega},\boldsymbol{\psi})} = \left(\boldsymbol{I}_{1,(\boldsymbol{\omega},\boldsymbol{\psi})}^\T, \dots, \boldsymbol{I}_{L,(\boldsymbol{\omega},\boldsymbol{\psi})}^\T\right)^\T$, where \(\boldsymbol{I}_{l,(\boldsymbol{\omega},\boldsymbol{\psi})} = (I_{1l},\dots,I_{(D-1) l})_{(\boldsymbol{\omega},\boldsymbol{\psi})}^\T\) is the vector of information levels for the estimated treatment effects at interim analysis \(l\), according to (conditional on) the particular \((\boldsymbol{\omega},\boldsymbol{\psi})\) being considered,
	\item \(\circ\) denotes the Hadamard product of two vectors,
	\item the square root of the vector $\boldsymbol{I}_{(\boldsymbol{\omega},\boldsymbol{\psi})}$ is taken in an element wise manner,
	\item $\text{l}$ and $\text{u}$ are functions that tell us the lower and upper integration limits for the test statistic $Z_{dl}$ given values for $l$, $\omega_d$ and $\psi_d$. For example, $\text{l}(1,2,1)=f_1$ and $\text{u}(1,2,1)=e_1$, whilst $\text{l}(2,2,1)=e_2$ and $\text{u}(2,2,1)=\infty$, and then $\text{l}(l,2,1)=-\infty$ and $\text{u}(l,2,1)=\infty$ for $l>2$,
	\item \(\boldsymbol{\Lambda}_{(\boldsymbol{\omega},\boldsymbol{\psi})}\) is the covariance matrix between the standardized test statistics at and across each interim analysis according to \((\boldsymbol{\omega},\boldsymbol{\psi})\). Thus, using \(\boldsymbol{Z}_l=(Z_{1l},\dots,Z_{D-1 l})^\T\), we have
	\[ \boldsymbol{\Lambda}_{(\boldsymbol{\omega},\boldsymbol{\psi})} =
	\begin{pmatrix}
	\text{cov}\left(\boldsymbol{Z}_1,\boldsymbol{Z}_1 \mid \boldsymbol{\omega},\boldsymbol{\psi}\right) & \dots & \text{cov}\left(\boldsymbol{Z}_1,\boldsymbol{Z}_L \mid \boldsymbol{\omega},\boldsymbol{\psi}\right) \\[0.3em]
	\vdots & \ddots & \vdots \\[0.3em]
	\text{cov}\left(\boldsymbol{Z}_L,\boldsymbol{Z}_1 \mid \boldsymbol{\omega},\boldsymbol{\psi}\right) & \dots & \text{cov}\left(\boldsymbol{Z}_L,\boldsymbol{Z}_L \mid \boldsymbol{\omega},\boldsymbol{\psi}\right)
	\end{pmatrix}. \]
	However, \(\boldsymbol{Z}_l=\hat{\boldsymbol{\tau}}_l \circ \boldsymbol{I}_{l,(\boldsymbol{\omega},\boldsymbol{\psi})}^{1/2} \), and by the properties of normal linear models \(\text{cov}(\hat{\boldsymbol{\tau}}_{l_1},\hat{\boldsymbol{\tau}}_{l_2} \mid \boldsymbol{\omega},\boldsymbol{\psi}) = \text{cov}(\hat{\boldsymbol{\tau}}_{l_2},\hat{\boldsymbol{\tau}}_{l_2} \mid \boldsymbol{\omega},\boldsymbol{\psi})\) \((l_1,l_2=1,\dots,L;\ l_1 \le l_2)\) (Jennison and Turnbull, 2000), giving
	\begin{footnotesize}
		\begin{equation} \label{eq:eq1}
		\text{cov}(\boldsymbol{Z}_{l_1},\boldsymbol{Z}_{l_2} \mid \boldsymbol{\omega},\boldsymbol{\psi}) = \text{diag}(\boldsymbol{I}_{l_1,(\boldsymbol{\omega},\boldsymbol{\psi})}^{1/2}) \text{cov}(\hat{\boldsymbol{\tau}}_{l_2},\hat{\boldsymbol{\tau}}_{l_2} \mid \boldsymbol{\omega},\boldsymbol{\psi}) \text{diag}(\boldsymbol{I}_{l_2,(\boldsymbol{\omega},\boldsymbol{\psi})}^{1/2}),
		\end{equation}
	\end{footnotesize}
	for \(l_1,l_2=1,\dots,L,\ l_1 \le l_2\), and where \(\text{diag}(\boldsymbol{v})\) is the matrix formed by placing the elements of vector \(\boldsymbol{v}\) along the leading diagonal.
\end{itemize}
Note that Equation~(\ref{eq:eq1}) in conjunction with the expectations of our standardised test statistics, and the observation that $(\boldsymbol{Z}_1^\T,\dots,\boldsymbol{Z}_L^\T)^\T$ is multivariate normal, can be restated simply as that our test statistics follow the canonical joint distribution (Jennison and Turnbull, 2000).

\subsection{Familywise Error Rate Control}

It is a common requirement of clinical trial designs that the probability of one or more false rejections within the family of null hypotheses is not greater than some \(\alpha\). This is known as strong control of the familywise error rate. In this section we establish strong control for our considered trial design.

To evaluate the familywise error rate of a design, for any $\boldsymbol{\tau}$, the above integral can be evaluated for all $\boldsymbol{\omega}$ and $\boldsymbol{\psi}$ that would imply a type-I error is made, and the results summed. In order to demonstrate how to strongly control though, it is essential to know the forms of the \(\boldsymbol{I}_{l,(\boldsymbol{\omega},\boldsymbol{\psi})}\) and \(\boldsymbol{\Lambda}_{(\boldsymbol{\omega},\boldsymbol{\psi})}\) for each \((\boldsymbol{\omega},\boldsymbol{\psi})\). However, by Equation~(\ref{eq:eq1}), the \(\boldsymbol{I}_{l,(\boldsymbol{\omega},\boldsymbol{\psi})}\) and \(\boldsymbol{\Lambda}_{(\boldsymbol{\omega},\boldsymbol{\psi})}\) can be determined if $\text{cov}(\hat{\boldsymbol{\beta}}_l,\hat{\boldsymbol{\beta}}_l \mid \boldsymbol{\omega},\boldsymbol{\psi})$ is known for $l=1,\dots,L$.

Thus, consider the matrix $\text{cov}(\hat{\boldsymbol{\beta}}_l,\hat{\boldsymbol{\beta}}_l \mid \boldsymbol{\omega},\boldsymbol{\psi})$ for some $l \le L$ and any $(\boldsymbol{\omega},\boldsymbol{\psi})$. We compute values for \(L_{lr}\) \((r=1,\dots,D)\); the number of stages of the trial, up to analysis $l$, in which \(r\) treatments were remaining. Since we do not continue the trial unless at least one experimental treatment remains, $L_{l1}=0$ always. It will be convenient however to still include this value. Moreover, it is clear that the $L_{lr}$ are uniquely determined given $(\boldsymbol{\omega},\boldsymbol{\psi})$. Now, $\text{cov}(\hat{\boldsymbol{\beta}}_l,\hat{\boldsymbol{\beta}}_l \mid \boldsymbol{\omega},\boldsymbol{\psi})$ can always be decomposed to be a sum over the determined \(L_{lr}\) and the pre-specified sequences \(S_r\) (see Fitzmaurice et al. (2011) for details)
\begin{align*}
\text{cov}(\hat{\boldsymbol{\beta}}_l,\hat{\boldsymbol{\beta}}_l \mid \boldsymbol{\omega},\boldsymbol{\psi}) &= \text{cov}(\hat{\boldsymbol{\beta}}_l,\hat{\boldsymbol{\beta}}_l \mid L_{l1},\dots,L_{lD}),\\
&= \left( \sum_{r=1}^{D} L_{lr} \frac{n}{\left|S_r\right|} \sum_{i=1}^{|S_r|} \boldsymbol{X}^\T_{s_{ri}} \boldsymbol{\Sigma}_r^{-1} \boldsymbol{X}_{s_{ri}} \right)^{-1}.
\end{align*}
Here \(\boldsymbol{X}_{s_{ri}}\) is the uniquely defined \( r \times (2D-1) \) design matrix for a single patient allocated to sequence \(s_{ri}\), and \(\boldsymbol{\Sigma}_r\) is the easily computed \(r \times r\) covariance matrix of the responses for a single patient allocated \(r\) treatments in total. The factor \(n/\left|S_r\right|\) arises from the number of patients allocated to each sequence \(s_{ri}\) by our choice of period balance.

We now establish two key results about $\text{cov}(\hat{\boldsymbol{\beta}}_l,\hat{\boldsymbol{\beta}}_l\mid L_{l1},\dots,L_{lD})$. Following this, we provide a proof detailing how to strongly control the familywise error rate.

\begin{thm}
	\label{thm1}
	Let \( \boldsymbol{\beta} = (\mu_0,\pi_2,\dots,\pi_D,\tau_1,\dots,\tau_{D-1})^\T \). Consider an analysis to be performed after some number of stages $l$. Then 
	\begin{enumerate}
		\item We have
		\begin{footnotesize}
			\begin{align}
			\text{cov}(\hat{\boldsymbol{\beta}}_l,\hat{\boldsymbol{\beta}}_l \mid L_{l1},\dots,L_{lD-1}=0,L_{lD}=l) &= \left( \frac{ln}{\left|S_D\right|} \sum_{i=1}^{\left|S_D\right|} \boldsymbol{X}_{s_{Di}}^\T \boldsymbol{\Sigma}_D^{-1} \boldsymbol{X}_{s_{Di}} \right)^{-1},\nonumber\\
			&= \frac{1}{ln} \begin{pmatrix} F & \boldsymbol{G}^\T & \boldsymbol{G}^\T \\[0.3em]
			\boldsymbol{G} & \boldsymbol{H} & \boldsymbol{0}_{D-1,D-1} \\[0.3em]
			\boldsymbol{G} & \boldsymbol{0}_{D-1,D-1} & \boldsymbol{H}
			\end{pmatrix},\label{eq:eq2}
			\end{align}
		\end{footnotesize}
		where
		\begin{align*}
		F &= \sigma_b^2 + \frac{2D-1}{D}\sigma_e^2,\\
		\boldsymbol{G}_{pq} &= -\sigma_e^2 & (p&=1,\dots,D-1;\ q=1), \\
		\boldsymbol{H}_{pq} &= \sigma_e^2(1 + \delta_{pq}) & (p&=1,\dots,D-1;\ q=1,\dots,D-1).
		\end{align*}
		\item If $q \ge 2$ is the largest integer such that $L_{lr} = 0$ for $r=1,\dots,q-1$, then the covariance of the estimates of the fixed effects \(\hat{\pi}_{2l},\dots,\hat{\pi}_{ql}, \hat{\tau}_{1l},\dots,\hat{\tau}_{q-1 l}\) is identical to that it would be for \(L_{l1}=\dots=L_{lD-1}=0\). Moreover, the covariance between the estimates of \(\hat{\pi}_{2l},\dots,\hat{\pi}_{ql}, \hat{\tau}_{1l},\dots,\hat{\tau}_{q-1 l}\) and the estimates of \(\hat{\pi}_{q+1l},\dots,\hat{\pi}_{Dl}, \hat{\tau}_{ql},\dots,\hat{\tau}_{D-1 l}\) is also identical to that it would be for \(L_{l1}=\dots=L_{lD-1}=0\).
	\end{enumerate}
\end{thm}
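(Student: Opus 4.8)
The plan is to prove the two parts in order, since the second uses the first. For Part 1, I would start from the decomposition of $\text{cov}(\hat{\boldsymbol{\beta}}_l\mid L_{l1},\dots,L_{lD})$ displayed immediately before the theorem: setting $L_{lD}=l$ and every other $L_{lr}=0$ leaves only the $r=D$ term, so the covariance is $\bigl\{(ln/|S_D|)\sum_{i=1}^{|S_D|}\boldsymbol{X}_{s_{Di}}^\T\boldsymbol{\Sigma}_D^{-1}\boldsymbol{X}_{s_{Di}}\bigr\}^{-1}$, and the factor $l$ only rescales, so it suffices to verify the $l=1$ statement. Then I would invert the compound-symmetric $\boldsymbol{\Sigma}_D=\sigma_b^2\boldsymbol{1}_D\boldsymbol{1}_D^\T+\sigma_e^2\boldsymbol{I}_D$ by the Sherman--Morrison formula and split $\sum_i\boldsymbol{X}_{s_{Di}}^\T\boldsymbol{\Sigma}_D^{-1}\boldsymbol{X}_{s_{Di}}$ into two pieces. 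The piece $\sum_i\boldsymbol{X}_{s_{Di}}^\T\boldsymbol{X}_{s_{Di}}$ is computed entry by entry from the two design constraints --- complete-block allocation fixes how often the intercept, period and treatment columns are nonzero within a sequence, and period balance forces each treatment into each period in exactly $|S_D|/D$ of the sequences --- giving $|S_D|$ times an explicit matrix $\boldsymbol{A}$; the piece $\sum_i(\boldsymbol{X}_{s_{Di}}^\T\boldsymbol{1}_D)(\boldsymbol{1}_D^\T\boldsymbol{X}_{s_{Di}})$ is immediate since every sequence shares the column-sum vector $\boldsymbol{c}=(D,\boldsymbol{1}_{D-1}^\T,\boldsymbol{1}_{D-1}^\T)^\T$, so it equals $|S_D|\,\boldsymbol{c}\boldsymbol{c}^\T$. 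The information matrix is thus a rank-one modification of $\boldsymbol{A}$, and I would finish by multiplying the claimed right-hand side of~(\ref{eq:eq2}) against it in the $1+(D-1)+(D-1)$ block partition and checking the product is $\boldsymbol{I}_{2D-1}$ (equivalently, by a second Sherman--Morrison step from $\boldsymbol{A}^{-1}$).

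For Part 2, the same decomposition gives $\text{cov}(\hat{\boldsymbol{\beta}}_l\mid L_{l1},\dots,L_{lD})^{-1}=\sum_{r=q}^{D}L_{lr}\boldsymbol{M}_r$, where $\boldsymbol{M}_r=(n/|S_r|)\sum_i\boldsymbol{X}_{s_{ri}}^\T\boldsymbol{\Sigma}_r^{-1}\boldsymbol{X}_{s_{ri}}$ is the per-stage information when $r$ treatments remain and $\sum_{r=q}^{D}L_{lr}=l$ because $L_{lr}=0$ for $r<q$ and $\sum_r L_{lr}=l$. The key structural point is that, by the relabelling convention and complete-block allocation, $\boldsymbol{X}_{s_{ri}}$ has identically zero columns for $\pi_{r+1},\dots,\pi_D$ and $\tau_r,\dots,\tau_{D-1}$, so $\boldsymbol{M}_r$ vanishes outside the parameter sub-block $\{\mu_0,\pi_2,\dots,\pi_r,\tau_1,\dots,\tau_{r-1}\}$, and on that sub-block it is exactly the matrix of Part 1 with $r$ in place of $D$ (the same computation). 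I would record its entries explicitly: with $a_r=n/(\sigma_e^2+r\sigma_b^2)$, the $(\mu_0,\mu_0)$ entry is $ra_r$, every $(\mu_0,\pi_j)$ and $(\mu_0,\tau_d)$ entry equals $a_r$, the $(\pi,\tau)$ entries carry a factor $1/r$ coming from period balance, and $a_r(\sigma_e^2+r\sigma_b^2)=n$.

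The heart of the argument is the claim that, for each ``core'' parameter $\theta\in\{\pi_2,\dots,\pi_q,\tau_1,\dots,\tau_{q-1}\}$, the $\theta$-column $\boldsymbol{w}=\boldsymbol{M}_D^{-1}\boldsymbol{e}_\theta$ of the ($l=1$) full-model covariance~(\ref{eq:eq2}) satisfies $\boldsymbol{M}_r\boldsymbol{w}=\boldsymbol{e}_\theta$ for \emph{every} $r=q,\dots,D$. Granting this, $\sum_{r=q}^{D}L_{lr}\boldsymbol{M}_r\boldsymbol{w}=l\boldsymbol{e}_\theta$, so $\boldsymbol{w}/l$ is simultaneously the $\theta$-column of $\text{cov}(\hat{\boldsymbol{\beta}}_l\mid L_{l1},\dots,L_{lD})$ and, by the rescaling already used in Part 1, the $\theta$-column of the $l$-stage full-model covariance; reading off the entries indexed by $\pi_2,\dots,\pi_D$ and $\tau_1,\dots,\tau_{D-1}$ then yields both assertions at once (the covariance of the core estimates among themselves and their covariance with the remaining estimates). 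To verify the claim I would insert the explicit form of $\boldsymbol{w}$ from~(\ref{eq:eq2}) --- namely $w_{\mu_0}=-\sigma_e^2/n$, a $\sigma_e^2(1+\delta)/n$ pattern on whichever of the $\pi$-block or $\tau$-block contains $\theta$, and zero on the other --- and evaluate $(\boldsymbol{M}_r\boldsymbol{w})_\eta$ coordinate by coordinate, using the constancy of the $\mu_0$-row of $\boldsymbol{M}_r$, the period-balance factor $1/r$, and $\sum_{j'=2}^{r}(1+\delta_{jj'})=r$ (valid because $2\le j\le q\le r$); every factor of $r$ then cancels through $a_r(\sigma_e^2+r\sigma_b^2)=n$, leaving precisely the Kronecker deltas that reconstruct $\boldsymbol{e}_\theta$.

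I expect the main obstacle to be twofold. In Part 1 it is the bookkeeping of the $(2D-1)$-dimensional block inversion, in particular keeping straight the off-diagonal zero block between the two $\tau$-related halves. In Part 2 it is the stage-by-stage identity $\boldsymbol{M}_r\boldsymbol{w}=\boldsymbol{e}_\theta$: this is the step at which all $r$-dependence of $\boldsymbol{M}_r$ must cancel, and it holds only because $\theta$ is a core parameter --- so its period index never exceeds any $r$ that can occur and its treatment index never reaches $r$ --- which is exactly why the hypothesis on $q$ is needed; getting these index ranges and the $1/r$ normalisation right is the delicate point. I would also note in passing that the argument in fact shows the entire $\theta$-column of the covariance, including its $\mu_0$ entry, is unchanged, the stated result being its restriction to the period and treatment coordinates.
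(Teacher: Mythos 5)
Your Part 1 is essentially the paper's own argument: the paper's Lemma~\ref{lemma1} is your Sherman--Morrison inversion of the compound-symmetric $\boldsymbol{\Sigma}_D$, its Lemma~\ref{lemma2} is your entry-by-entry computation of the per-stage information matrix from the complete-block and period-balance constraints, and the final step in both cases is a block multiplication of the computed information matrix against the claimed covariance to recover $\boldsymbol{I}_{2D-1}$; your rank-one reorganisation of the sum is cosmetic. Part 2, however, takes a genuinely different and leaner route. The paper proceeds by induction on the number of completed stages: it writes the updated inverse covariance as $\boldsymbol{A}+\boldsymbol{B}$ with $\boldsymbol{B}$ supported on the surviving-parameter block, applies the Henderson--Searle identity $(\boldsymbol{A}+\boldsymbol{U}\boldsymbol{C}\boldsymbol{V})^{-1}=\boldsymbol{A}^{-1}\{\boldsymbol{I}-\boldsymbol{U}\boldsymbol{C}\boldsymbol{V}\boldsymbol{A}^{-1}(\boldsymbol{I}+\boldsymbol{U}\boldsymbol{C}\boldsymbol{V}\boldsymbol{A}^{-1})^{-1}\}$ together with a $2\times 2$ block inversion, shows the core block of the covariance is multiplied by $l/(l+1)$ at each step, and tracks determinants (Lemma~\ref{lemma3}) to keep every intermediate matrix invertible. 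You instead exploit the linearity of the decomposition $\sum_{r}L_{lr}\boldsymbol{M}_r$ directly, verifying that each core column $\boldsymbol{w}$ of $\boldsymbol{M}_D^{-1}$ satisfies $\boldsymbol{M}_r\boldsymbol{w}=\boldsymbol{e}_\theta$ for every admissible $r$; I have checked this identity coordinate by coordinate against the explicit entries of $\boldsymbol{M}_r$ and it holds, with the cancellation through $a_r(\sigma_e^2+r\sigma_b^2)=n$ and $\sum_{j'}(1+\delta_{jj'})=r$ working exactly as you describe, and failing (as it must) precisely when $\theta$ lies outside the core block. Your route dispenses with the induction and the block-inversion bookkeeping, and yields slightly more, namely that the $\mu_0$ entry of each core column is also preserved. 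The one point you have elided that the paper works for explicitly is invertibility of $\sum_r L_{lr}\boldsymbol{M}_r$, which you need in order to pass from $\bigl(\sum_r L_{lr}\boldsymbol{M}_r\bigr)(\boldsymbol{w}/l)=\boldsymbol{e}_\theta$ to the assertion that $\boldsymbol{w}/l$ \emph{is} the $\theta$-column of the covariance; in your framework this is cheap --- each $\boldsymbol{M}_r$ is positive semi-definite, $\boldsymbol{M}_D$ is positive definite by Part 1, and $L_{lD}\ge 1$ because the first stage always runs all $D$ treatments --- but it should be stated.
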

\begin{proof}
	See Appendix C.
\end{proof}

Note that part (1) of the above theorem implies
\[ cov(\hat{\tau}_{d_1l},\hat{\tau}_{d_2l} \mid L_{l1}=\dots=L_{lD-1}=0,L_{lD}=l) = \frac{\sigma_e^2}{ln}(1 + \delta_{d_1d_2}), \]
for \(d_1,d_2 \in \{1,\dots,D-1\}\). This is the familiar result for complete block sequences that there is no dependence upon the between patient variance \(\sigma_b^2\) (Jones and Kenward, 2014).

\begin{thm}
	\label{theorem2}
	A group sequential crossover trial of the type considered, with $D \ge 2$, testing the \(D-1\) hypotheses \(H_{0d} : \tau_d \le 0\), \(H_{1d} : \tau_d > 0\), attains a maximal value of its familywise error rate for \(\tau_1=\dots=\tau_{D-1}=0\).
\end{thm}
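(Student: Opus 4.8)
The plan is to run the standard least favourable configuration argument for multi-arm multi-stage designs (cf.\ Magirr et al., 2012, for the parallel group case), with the crossover-specific difficulties absorbed into Theorem~\ref{thm1}. Fix an arbitrary vector of true effects \(\boldsymbol{\tau}\) and partition the experimental treatments into the true nulls \(\mathcal{T}=\{d:\tau_d\le 0\}\) and the remainder; the familywise error rate equals \(\mathrm{pr}\big(\bigcup_{d\in\mathcal{T}}A_d\mid\boldsymbol{\tau}\big)\), where \(A_d\) is the event that \(H_{0d}\) is rejected at some analysis. I would first observe that, by the stopping rules, the fate of treatment \(d\) — whether \(H_{0d}\) is rejected, and at which stage — is a function of the single sequence \(Z_{d1},\dots,Z_{dL}\) alone: continuation of treatment \(d\) at each analysis is governed only by its own statistic, and whenever \(d\) is still present a further stage is in any case undertaken, so \(A_d\) depends on \((Z_{d1},\dots,Z_{dL})\) only.

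The key step, and the one I expect to be the main obstacle, is to show that the joint law of \(\{(Z_{d1},\dots,Z_{dL}):d\in\mathcal{T}\}\) is a Gaussian family whose covariance depends neither on \(\boldsymbol{\tau}\) nor on which of the remaining treatments happen to be dropped. This is precisely what Theorem~\ref{thm1} supplies: for any realised dropping pattern \((\boldsymbol{\omega},\boldsymbol{\psi})\), part~(2) shows the covariance matrix of the estimates \(\hat{\tau}_{dl}\) for the treatments still present at analysis \(l\) — and hence, via \(\mathrm{cov}(\hat{\boldsymbol{\tau}}_{l_1},\hat{\boldsymbol{\tau}}_{l_2})=\mathrm{cov}(\hat{\boldsymbol{\tau}}_{l_2},\hat{\boldsymbol{\tau}}_{l_2})\), their cross-analysis covariances too — agrees with the value taken in the ``full'' design that retains all \(D\) treatments through all \(L\) stages, while part~(1) identifies that value and in particular shows the information levels \(I_{dl}\) to be strictly positive constants free of \(\boldsymbol{\tau}\). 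Feeding this, analysis by analysis, into the canonical joint distribution of the test statistics noted above, and conditioning on the dropping pattern, lets one couple every configuration \(\boldsymbol{\tau}\) onto a single Gaussian vector \((\tilde Z_{dl})\) with the fixed full-design covariance and mean \(E(\tilde Z_{dl})=\tau_d I_{dl}^{1/2}\), under which \(\mathrm{pr}\big(\bigcup_{d\in\mathcal{T}}A_d\mid\boldsymbol{\tau}\big)\) becomes the probability that \((\tilde Z_{dl})_{d\in\mathcal{T}}\) lands in a set that does not involve \(\boldsymbol{\tau}\). The care needed here is in handling that the dropping pattern is itself random and \(\boldsymbol{\tau}\)-dependent, and in reconciling the relabelling convention of Theorem~\ref{thm1} with the set of treatments actually surviving; once the conditional joint law has been identified with the full-design one on the coordinates that matter, everything that follows is routine.

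It then remains to exploit monotonicity. Inspecting the stopping rule shows each \(A_d\) is an increasing event in \((Z_{d1},\dots,Z_{dL})\): raising any \(Z_{dl}\) can only make an efficacy crossing occur weakly earlier and a futility crossing weakly later, so a path that leads to rejection still does. Hence \(\bigcup_{d\in\mathcal{T}}A_d\) is an increasing set in the coordinates \((\tilde Z_{dl})_{d\in\mathcal{T}}\), and because the covariance is fixed, the common-noise coupling \(\boldsymbol{\mu}+\boldsymbol{\xi}\le\boldsymbol{\mu}'+\boldsymbol{\xi}\) (with \(\boldsymbol{\xi}\) shared) shows \(\boldsymbol{\mu}\mapsto\mathrm{pr}\{N(\boldsymbol{\mu},\boldsymbol{\Lambda})\in U\}\) is non-decreasing coordinatewise for any increasing set \(U\). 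Since \(E(\tilde Z_{dl})=\tau_d I_{dl}^{1/2}\le 0\) for every \(d\in\mathcal{T}\), replacing each such \(\tau_d\) by \(0\) can only raise the familywise error rate; and once \(\tau_d=0\) for all \(d\in\mathcal{T}\) we have \(\bigcup_{d\in\mathcal{T}}A_d\subseteq\bigcup_{d=1}^{D-1}A_d\), whose probability, now evaluated with all means equal to \(0\) and the full-design covariance, is exactly the familywise error rate at \(\tau_1=\dots=\tau_{D-1}=0\). Chaining the three inequalities gives that the familywise error rate under any \(\boldsymbol{\tau}\) is at most its value at the global null; the degenerate case \(D=2\) is contained in this, since then either \(\tau_1>0\) and the error rate vanishes or \(\tau_1\le 0\) and the one-dimensional version of the monotonicity step applies directly.
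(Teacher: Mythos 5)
Your proof is correct, and it rests on exactly the same two pillars as the paper's: Theorem~\ref{thm1} to show that the covariance (and information) of every test statistic that actually enters a stopping decision coincides with its value in the full design where no treatment is ever dropped, and the monotonicity of the rejection event in the underlying statistics. Where you differ is in how the final maximisation over \(\boldsymbol{\tau}\) is organised. The paper marginalises out all treatments other than \(d=1\) to show that \(\mathrm{pr}(\text{Reject } H_{01}\mid\boldsymbol{\tau})\) depends on \(\tau_1\) alone, reduces to the \(D=2\) problem, shows \(\tau_1=0\) is least favourable there by the same ``shift every path upward'' argument you use, and then asserts that perturbing any single coordinate of \((\tau_*,\dots,\tau_*)^\T\) can only decrease the familywise error rate --- a claim about a union probability that the paper justifies only via statements about the marginals. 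You instead work with the union \(\bigcup_{d\in\mathcal{T}}A_d\) over the true nulls directly, observe it is an increasing set in the jointly Gaussian vector with fixed full-design covariance, and apply the common-noise coupling to raise all null means to zero at once before enlarging the union to all \(D-1\) treatments. This buys you two things: it supplies the rigorous justification for the step the paper leaves informal (monotonicity of an increasing-set probability in the mean vector is exactly what is needed to pass from marginal monotonicity to monotonicity of the union), and it handles arbitrary mixed configurations \(\boldsymbol{\tau}\) in one pass rather than via the detour through the common value \(\tau_*\). The one point to keep an eye on when writing this up fully is the cross-analysis covariances between a statistic for a treatment still present and one already dropped; as you note, part (2) of Theorem~\ref{thm1} covers the present-by-dropped block as well as the present-by-present block, and the remaining dropped-by-dropped entries never matter because those coordinates do not enter any \(A_d\).
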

\begin{proof}
	Theorem~\ref{thm1} implies that elements of the covariance matrix \(\text{cov}(\hat{\boldsymbol{\tau}}_l,\hat{\boldsymbol{\tau}}_l)\) that differ from the case where no treatments have been dropped are exactly those corresponding to unstandardized test statistics no longer of importance. Consequently, the values of \( \boldsymbol{I}_{l,(\boldsymbol{\omega},\boldsymbol{\psi})} \) and \(\boldsymbol{\Lambda}_{(\boldsymbol{\omega},\boldsymbol{\psi})}\) that differ from the case \(\boldsymbol{\omega}=(L,\dots,L)^\T\) are only ever those corresponding to limits of integration given by \((-\infty,\infty)\) in our computation of \(\text{pr}(\boldsymbol{\omega}_R=\boldsymbol{\omega},\boldsymbol{\psi}_R=\boldsymbol{\psi}\mid\boldsymbol{\tau})\). By the marginal distribution properties of the multivariate normal distribution, we therefore need only consider one matrix \( \boldsymbol{\Lambda}_{(\boldsymbol{\omega},\boldsymbol{\psi})} \), and one set of vectors \( \boldsymbol{I}_{l,(\boldsymbol{\omega},\boldsymbol{\psi})} \) \((l=1,\dots,L)\); exactly those given by the case \(\boldsymbol{\omega}=(L,\dots,L)^\T\). Denote these by \(\boldsymbol{\Lambda}\) and \( \boldsymbol{I}_l\), and set \(\boldsymbol{I}=(\boldsymbol{I}_1^\T,\dots,\boldsymbol{I}_L^\T)^\T\). For more information on this, see Appendix A. We now have
	
	\begin{small}
		\begin{equation*}
		\begin{split}
		\text{pr}(\boldsymbol{\omega}_R=\boldsymbol{\omega},\boldsymbol{\psi}_R=\boldsymbol{\psi}\mid\boldsymbol{\tau}) &= \int_{\text{l}(1,\omega_1,\psi_1)}^{\text{u}(1,\omega_1,\psi_1)} \dots\\
		& \qquad \dots \int_{\text{l}(L,\omega_{D-1},\psi_{D-1})}^{\text{u}(L,\omega_{D-1},\psi_{D-1})} \! \phi\left\{\boldsymbol{x}, \boldsymbol{r}(\boldsymbol{\tau},L) \circ \boldsymbol{I}^{1/2}, \boldsymbol{\Lambda}\right\} \,\\ & \qquad \qquad \qquad \qquad \qquad \qquad \qquad \mathrm{d}x_{L(D-1)}\dots\mathrm{d}x_{11}.
		\end{split}
		\end{equation*}
	\end{small}
	Now, consider without loss of generality the probability we reject \(H_{01}\), and denote by \(\Omega\) and \(\Psi\) the sets of all possible \(\boldsymbol{\omega}\) and \(\boldsymbol{\psi}\) respectively. By integrating over all possible values of \(\omega_2,\dots,\omega_{D-1}\) and \(\psi_2,\dots,\psi_{D-1}\), we have that the probability we reject each \(H_{01}\) does not depend on the values of $\tau_2,\dots,\tau_{D-1}$, i.e. on the other treatments tested
	
	\begin{small}
		\begin{align*}
		\text{pr}\left( \text{Reject } H_{01} \mid \boldsymbol{\tau} \right)&= \sum_{\{\boldsymbol{\psi} \in \Psi : \psi_1=1\}} \sum_{\boldsymbol{\omega} \in \Omega} \text{pr}(\boldsymbol{\omega}_R=\boldsymbol{\omega},\boldsymbol{\psi}_R=\boldsymbol{\psi}\mid\boldsymbol{\tau}), \displaybreak[0] \\
		&= \sum_{\{\boldsymbol{\psi} \in \Psi : \psi_1=1\}} \sum_{\boldsymbol{\omega} \in \Omega} \int_{\text{l}(1,\omega_1,\psi_1)}^{\text{u}(1,\omega_1,\psi_1)} \dots \\
		& \qquad \qquad \qquad \dots \int_{\text{l}(L,\omega_{D-1},\psi_{D-1})}^{\text{u}(L,\omega_{D-1},\psi_{D-1})} \! \phi\left\{\boldsymbol{x}, \boldsymbol{r}(\boldsymbol{\tau},L) \circ \boldsymbol{I}^{1/2}, \boldsymbol{\Lambda}\right\} \,\\ & \qquad \qquad \qquad \qquad \qquad \qquad \qquad \qquad \qquad \mathrm{d}x_{L(D-1)}\dots\mathrm{d}x_{11}, \displaybreak[0] \\
		&= \sum_{\omega_1=1}^L \int_{\text{l}(1,\omega_1,1)}^{\text{u}(1,\omega_1,1)} \dots \\
		& \qquad \qquad \dots \int_{\text{l}(L,\omega_1,1)}^{\text{u}(L,\omega_1,1)} \! \phi\left\{(x_{11},\dots,x_{L1})^\T, \boldsymbol{r}(\tau_1,L) \circ \boldsymbol{I}^{1/2}_{\tau_1}, \boldsymbol{\Lambda}_{\tau_1}\right\} \,\\ & \qquad \qquad \qquad \qquad \qquad \qquad \qquad \mathrm{d}x_{L1}\dots\mathrm{d}x_{11},
		\end{align*}
	\end{small}
	where \(\boldsymbol{I}_{\tau_1}\) and \(\boldsymbol{\Lambda}_{\tau_1}\) are the restrictions of \(\boldsymbol{I}\) and \(\boldsymbol{\Lambda}\) to rows and columns corresponding to experimental treatment \(d=1\) respectively. This final form for $\text{pr}\left( \text{Reject } H_{01} \mid \boldsymbol{\tau} \right)$ is identical to that it would be in the case $D=2$. Therefore to ascertain the $\boldsymbol{\tau}$ giving the maximal familywise error rate of a trial with $D \ge 2$, it suffices to consider which $\tau_* \le 0$ maximises the probability $H_{01}$ is rejected in a trial with $D=2$ initial treatments. For then, $\boldsymbol{\tau} = (\tau_*,\dots,\tau_*)^\T$ using this $\tau_*$, will provide the maximum probability of rejecting at least one true $H_{0d}$ for some $d$, i.e. the maximum familywise error rate. To see this, consider the familywise error rate for $\boldsymbol{\tau} = (\tau_*,\dots,\tau_*)^\T$. If one changes some individual element $\tau_{d_1}$ of this vector, this does not effect the probability that $H_{0d_2}$ is rejected for $d_2 \neq d_1$, and it can only decrease the probability that $H_{0d_1}$ is incorrectly rejected. Thus overall, straying from this $\boldsymbol{\tau} = (\tau_*,\dots,\tau_*)^\T$ can only decrease the familywise error rate.
	
	Thus, now consider all possible realisations of the test statistics of a trial with $D=2$, and their associated values of \((\boldsymbol{\omega},\boldsymbol{\psi})=(\omega_1,\psi_1)\). We have \(\boldsymbol{Z}=(Z_{11},\dots,Z_{1L})^\T \in \mathbb{R}^L\), with \(Z_{1L} = \dots = Z_{1 \omega_1} \) if the trial was stopped at stage \(\omega_1\). Now consider increasing the value of the test statistics by some \(\eta > 0\). All instances before where \(H_{01}\) was rejected will still exceed the efficacy bound of that stage, or earlier, and so \(H_{01}\) will still be rejected. Therefore, the probability of rejecting \(H_{01}\) is at least as large as before. Thus, increasing the value of \(\tau_1 \le 0 \) causes a non-decreasing change in the value of the type-I error rate. Therefore, the probability of rejecting $H_{01}$ is maximized by \(\tau_1=0\); implying in turn that the maximal familywise error rate of a trial with $D \ge 2$ is given by \(\boldsymbol{\tau}=(\tau_1,\dots,\tau_{D-1})^\T=(0,\dots,0)^\T\).
\end{proof}

\subsection{Design Characteristics}

A trial will now be fully specified given values for $D$, $L$, $\sigma_e^2$ and $n$, as well as choices for the $S_r$, and the futility and efficacy boundaries, \(f_1,\dots,f_L\) and \(e_1,\dots,e_L\) respectively. Given these, \(\boldsymbol{\Lambda}\) and \(\boldsymbol{I}\) can be computed using the results above. Then, by Theorem~\ref{theorem2} we can strongly control the familywise error rate to \(\alpha\) for this design using the following sum of integrals
\begin{align*}
\alpha &= \sum_{\{\boldsymbol{\psi} \in \Psi : \Sigma_d\psi_d>0\}}\sum_{\boldsymbol{\omega} \in \Omega}\int_{\text{l}(1,\omega_1,\psi_1)}^{\text{u}(1,\omega_1,\psi_1)} \dots \\
& \qquad \qquad \dots \int_{\text{l}(L,\omega_{D-1},\psi_{D-1})}^{\text{u}(L,\omega_{D-1},\psi_{D-1})} \! \phi\left\{\boldsymbol{x}, \boldsymbol{r}(0,L(D-1)), \boldsymbol{\Lambda}\right\} \, \mathrm{d}x_{L(D-1)}\dots\mathrm{d}x_{11},
\end{align*}
Additionally, suppose that we wish to power this trial to reject a particular null hypothesis, without loss of generality \(H_{01}\), at some clinically relevant difference \(\tau_1=\delta\). The type-II error rate \(\beta\) for $H_{11}$ is then given by
\begin{align*}
\beta &= 1 - \sum_{\omega_1=1}^L \int_{\text{l}(1,\omega_1,1)}^{\text{u}(1,\omega_1,1)} \dots \\ & \qquad \qquad \qquad \dots \int_{\text{l}(L,\omega_1,1)}^{\text{u}(L,\omega_1,1)} \! \phi\left\{(x_{11},\dots,x_{L1})^\T, \boldsymbol{r}(\delta,L) \circ \boldsymbol{I}^{1/2}_{\tau_1}, \boldsymbol{\Lambda}_{\tau_1}\right\} \,\\ & \qquad \qquad \qquad \qquad \qquad \qquad \qquad \qquad \mathrm{d}x_{L1}\dots\mathrm{d}x_{11}.
\end{align*}
Moreover, denoting by $N$ and $O$ the total number of patients and observations required by the trial respectively, we can compute the expected sample size, \(E(N\mid\boldsymbol{\tau})\), or expected number of observations, \(E(O\mid\boldsymbol{\tau})\), for any \(\boldsymbol{\tau}\), according to
\begin{align*}
E(N\mid\boldsymbol{\tau}) &= \sum_{\boldsymbol{\psi}\in\Psi} \sum_{\boldsymbol{\omega}\in\Omega} \text{pr}(\boldsymbol{\omega}_R=\boldsymbol{\omega},\boldsymbol{\psi}_R=\boldsymbol{\psi}\mid\boldsymbol{\tau})\text{N}(\boldsymbol{\omega},\boldsymbol{\psi}),\\ E(O\mid\boldsymbol{\tau}) &= \sum_{\boldsymbol{\psi}\in\Psi} \sum_{\boldsymbol{\omega}\in\Omega} \text{pr}(\boldsymbol{\omega}_R=\boldsymbol{\omega},\boldsymbol{\psi}_R=\boldsymbol{\psi}\mid\boldsymbol{\tau})\text{O}(\boldsymbol{\omega},\boldsymbol{\psi}).
\end{align*}
Here, \(\text{N}(\boldsymbol{\omega},\boldsymbol{\psi})\) and \(\text{O}(\boldsymbol{\omega},\boldsymbol{\psi})\) are functions that give the number of patients and observations respectively, required by a trial that progresses according to \((\boldsymbol{\omega},\boldsymbol{\psi})\). Specifically
\begin{align*}
\text{N}(\boldsymbol{\omega},\boldsymbol{\psi}) &= n \max_{\{d=1,\dots,D-1\}} \omega_d,\\ \text{O}(\boldsymbol{\omega},\boldsymbol{\psi}) &= n \sum_{l=1}^L \left(\sum_{d=1}^{D-1}\mathbb{I}_{\{\omega_d \ge l\}} + 1\right),
\end{align*}
where $\mathbb{I}_{\{\omega_d \ge l\}} = 1$ if $\omega_d \ge l$, and is 0 otherwise.

\section{EXAMPLE: TOMADO}
\label{s:example}

As an example of how to design a group sequential crossover trial with strong control of the familywise error rate, we will make use of the TOMADO crossover randomized controlled trial (Quinnell et al., 2014). This open-label trial compared three experimental treatments to a single control for the treatment of sleep apnoea-hypopnoea using a four-treatment four-period crossover design. The normally distributed secondary endpoint, Epworth Sleepiness Scale, hoped to observe negative test statistics. Therefore, we consider the decrease as the endpoint in order to retain the same hypothesis tests as before $H_{0d} : \tau_d \le 0, H_{1d} : \tau_d > 0,$ \(d=1,2,3\). The trial planned to recruit 90 patients, and utilising restricted error maximum likelihood estimation, the final analysis calculated that \(\sigma_e^2=6.51\). Taking this variance as the truth, the trial had a familywise error rate \(\alpha=0.05\) for \(\boldsymbol{\tau}=(\tau_1,\tau_2,\tau_3)^\T=(0,0,0)^\T\), and \(\beta=0.2\) for \(H_{11}\) at \(\tau_1=1.11\).

Many methods exist for determining boundaries for a one-sided group sequential trial with parallel treatment arms. Here, we consider analogues of the power family boundaries of Pampallona and Tsiatis (1994). For this, values for the desired type-I and type-II error rates, a clinically relevant difference \(\delta\), the maximum number of stages \(L\), the within person variance $\sigma_e^2$, and a shape parameter \(\Delta\) must be specified. A 2-dimensional grid search is then used to find the exact required maximal sample size. From this a suitable value of \(n\) is identified by rounding up to the nearest integer such that \(n\) is as required divisible by \(|S_2|,\dots,|S_D|\). Utilizing Williams squares for our designs, \(n\) was forced to be divisible by 12.

Taking \(\alpha=0.05\), \(\beta=0.2\), \(\delta=1.11\), $\sigma_e^2=6.51$, \(L=3\), and \(\Delta=-0.25\), 0, 0.5, 0.5 as examples, group sequential crossover trial designs were determined and compared to the single-stage design used by TOMADO. All computations were done in R (R Core Team, 2016) using the package groupSeqCrossover, available from the corresponding author upon request. Matlab (The Mathworks Inc., 2016) code employing symbolic algebra is also available to return the matrices given by several of the equations in the text. Use of both the R and Matlab code is detailed in Appendix D.

A summary of the performance of the designs is provided in Table~\ref{tab1}, and their computed boundaries are displayed in Figure~\ref{fig1}. We can see that, as is the case for two-arm parallel trial designs, there is a trend that larger values of \(\Delta\) result in larger maximum sample sizes and lower expected sample sizes due to their larger stopping regions. However, this is not the case for \(\Delta=0.25\) because of the requirement to round to a suitable integer value of \(n\).

Plots of the probability of rejecting \(H_{01}\), and rejecting \(H_{0d}\) for some \(d=1,2,3\), are provided for a range of values of \(\theta\) when \(\boldsymbol{\tau}=(\theta,\theta,\theta)^\T\) in Figure~\ref{fig2}. The power curves are similar for all the designs, with the only differences a result of rounding in the group sequential designs to achieve suitable values of \(n\).

As is to be expected for group sequential designs, the maximum sample size and maximum number of observations is larger than for the single-stage design. However, the group sequential designs have lower expected sample sizes under the global null hypothesis \((\boldsymbol{\tau}=\boldsymbol{0}=(0,0,0)^\T)\); up to a maximum of 23\% for \(\Delta=0.5\). Though, this comes at the expense of an increased expected sample size under the global alternative hypothesis \((\boldsymbol{\tau}=\boldsymbol{\delta}=(\delta,\delta,\delta)^\T)\).

From Figure~\ref{fig3}, the expected sample sizes of the group sequential designs can be seen to be far lower than the single-stage design for more extreme values of \(\theta\). A similar statement holds for the expected number of observations. However, in this instance for \(\Delta=0\), 0.5, the performance of the group sequential designs is better than the single-stage design across all values of \(\theta\).

\begin{table*}
	\caption{Example Design Performance. Summary of the performance of the single-stage and considered group sequential designs. The number of decimal places displayed in each row indicates the number to which rounding was performed}
	\label{tab1}
	\resizebox{\textwidth}{!}{
		\begin{tabular}{lrrrrr}
			\hline
			& \multicolumn{5}{c}{Design} \\
			& Single-stage & \(\Delta=-0.25\) & \(\Delta=0\) & \(\Delta=0.25\) & \(\Delta=0.5\) \\%[5pt]
			\hline
			\(n\) & 90 & 36 & 36 & 48 & 48 \\
			\(\text{pr}\left(\text{Reject } H_{01}\mid\boldsymbol{\tau}=\boldsymbol{0}\right)\) & 0.02 & 0.02 & 0.02 & 0.02 & 0.02 \\
			\(\text{pr}\left(\text{Reject } H_{01}\mid\boldsymbol{\tau}=\boldsymbol{\delta}\right)\) & 0.80 & 0.85 & 0.83 & 0.90 & 0.83 \\
			\(\text{pr}\left(\text{Reject } H_{0d} \text{ for some } d\mid\boldsymbol{\tau}=\boldsymbol{0}\right)\) & 0.05 & 0.05 & 0.05 & 0.05 & 0.05 \\
			\(\text{pr}\left(\text{Reject } H_{0d} \text{ for some } d\mid\boldsymbol{\tau}=\boldsymbol{\delta}\right)\) & 0.95 & 0.97 & 0.97 & 0.98 & 0.97 \\
			\(E\left(N\mid\boldsymbol{\tau}=\boldsymbol{0}\right)\) & 90.0 & 76.8 & 70.0 & 82.6 & 69.6 \\
			\(E\left(N\mid\boldsymbol{\tau}=\boldsymbol{\delta}\right)\) & 90.0 & 100.3 & 95.7 & 110.7 & 98.9 \\
			\(E\left(O\mid\boldsymbol{\tau}=\boldsymbol{0}\right)\) & 360.0 & 269.3 & 240.3 & 283.1 & 244.5 \\
			\(E\left(O\mid\boldsymbol{\tau}=\boldsymbol{\delta}\right)\) & 360.0 & 367.2 & 341.8 & 380.4 & 327.7 \\
			\(\max N\) & 90 & 108 & 108 & 144 & 144 \\
			\(\max O\) & 360 & 432 & 432 & 576 & 576 \\
			\hline
		\end{tabular}}
	\end{table*}
	
	\begin{figure}
		\vspace{6pc}
		\includegraphics[width=\textwidth]{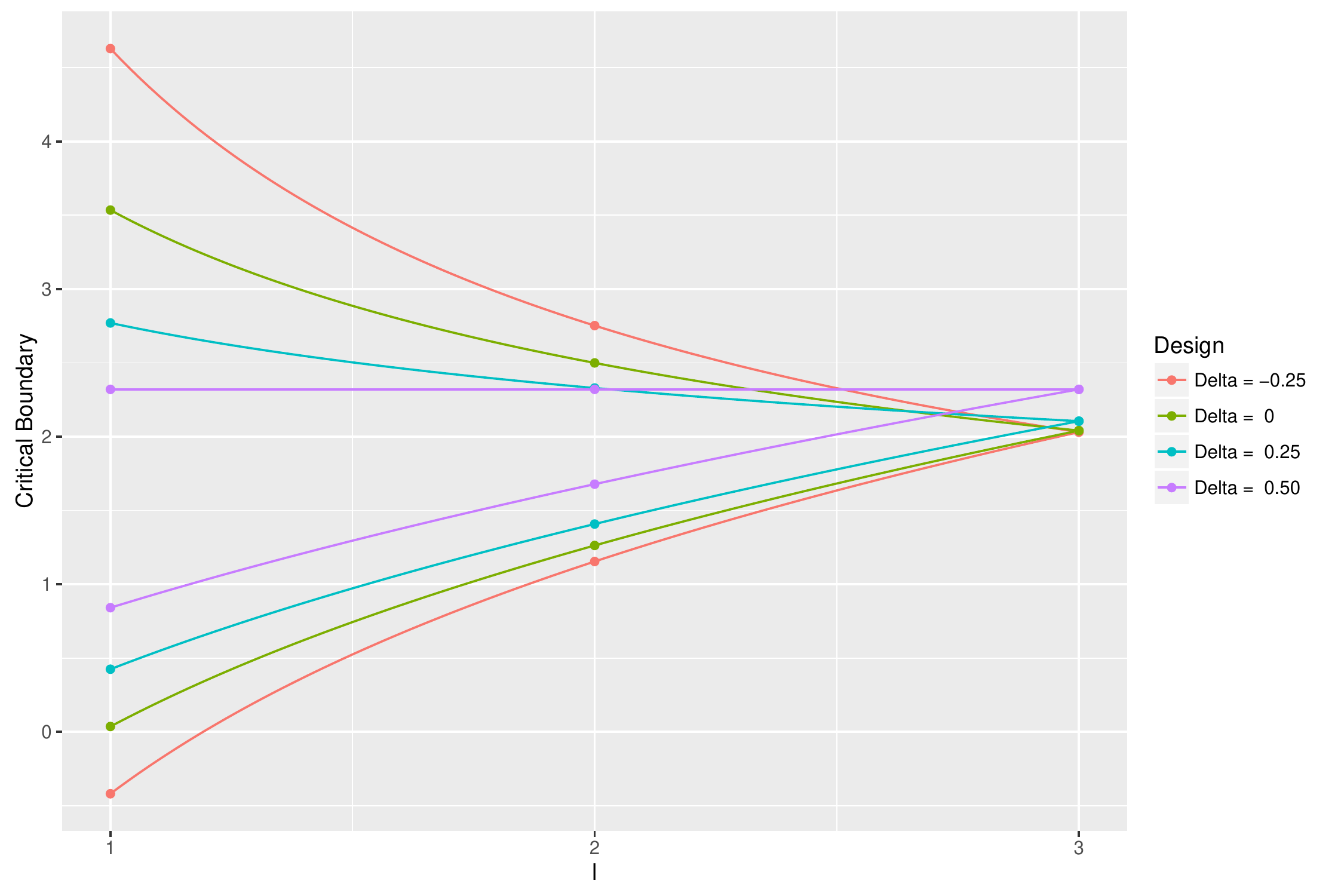}
		\caption{Stopping Boundaries. Computed efficacy and futility boundaries of the considered group sequential designs.}
		\label{fig1}
	\end{figure}
	
	\begin{figure}
		\vspace{6pc}
		\subfloat{\includegraphics[width=0.9\textwidth]{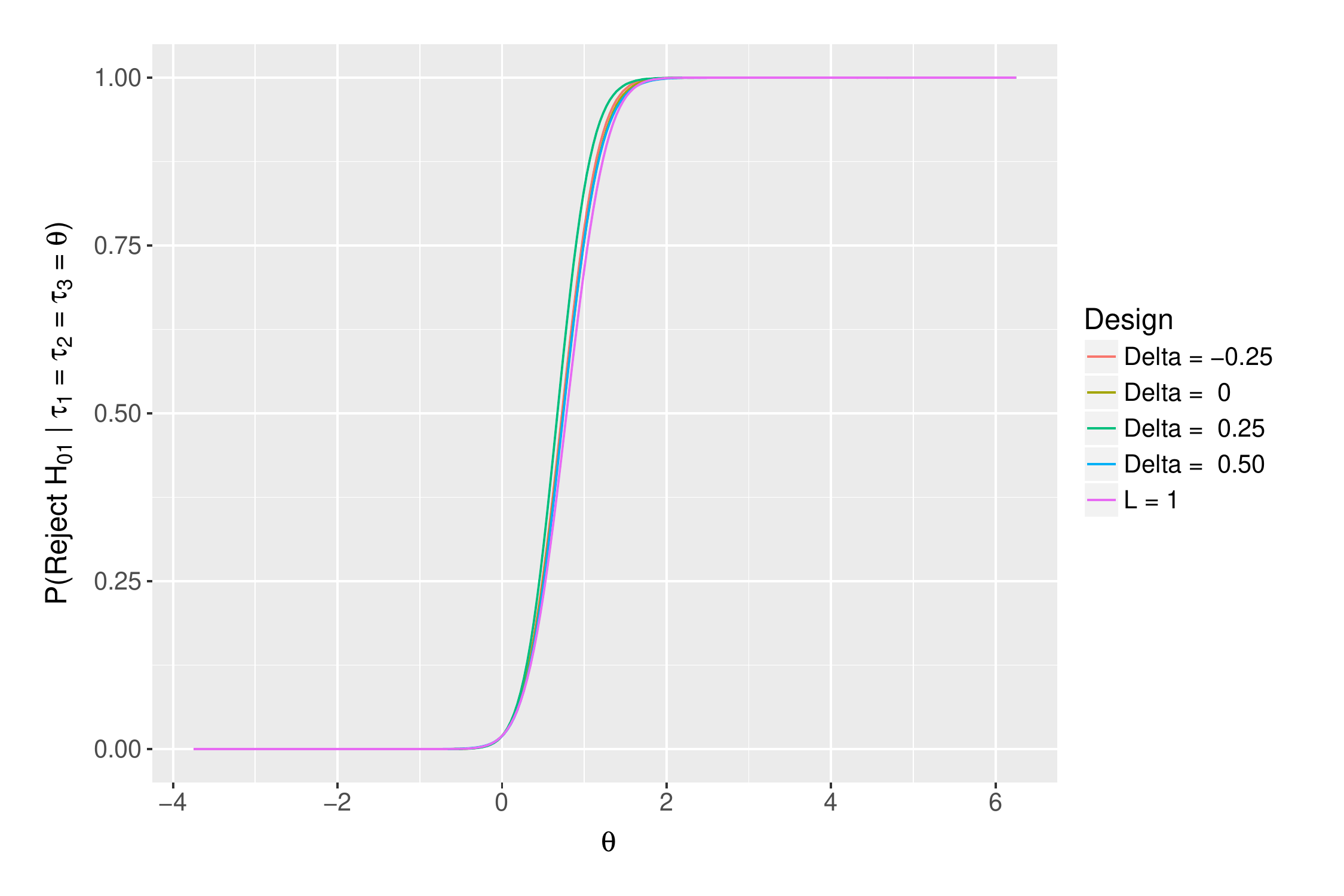}}
		\qquad
		\qquad
		\subfloat{\includegraphics[width=0.9\textwidth]{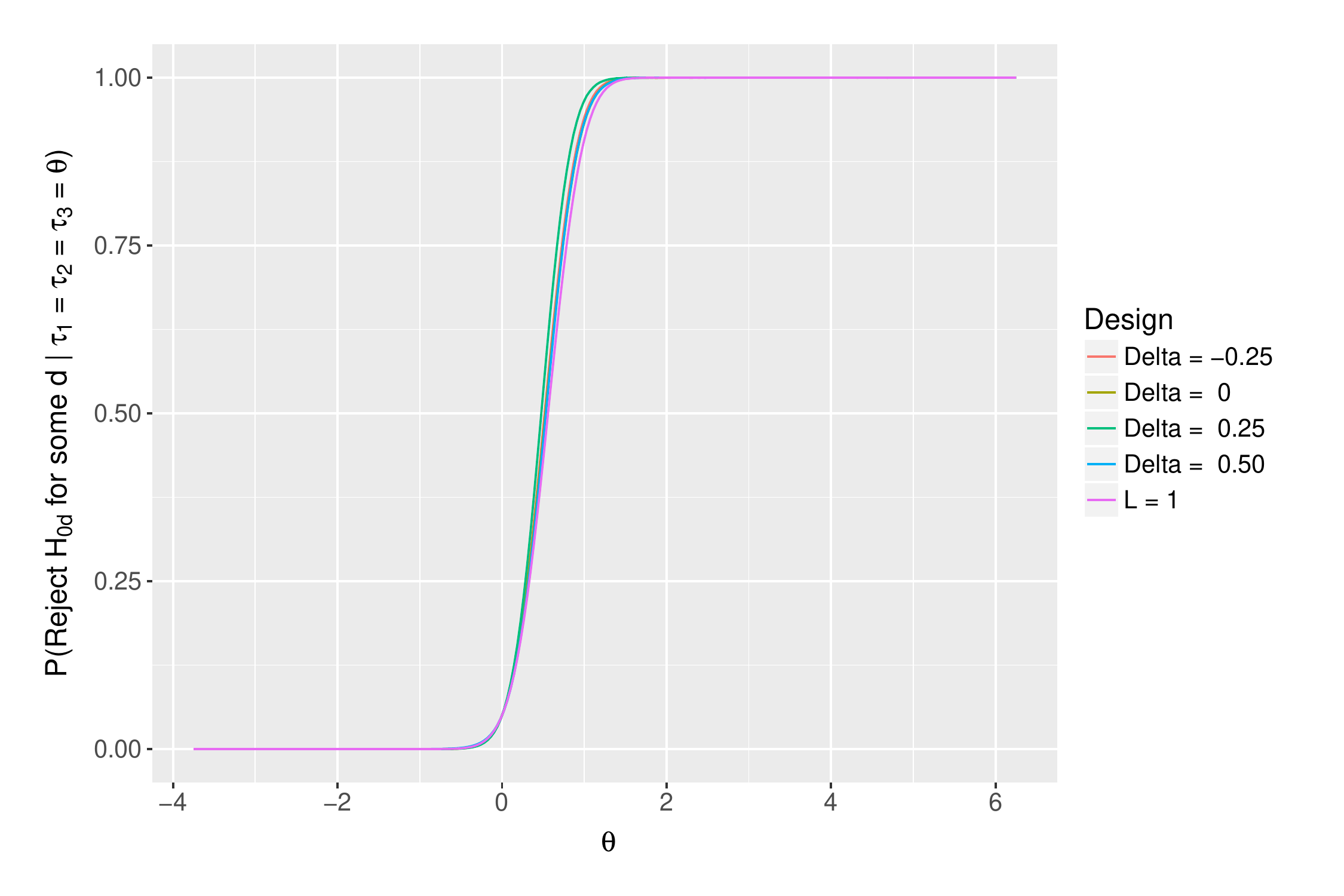}}
		\caption{Power Curves. Power curves of the single-stage ($L=1$) and considered group sequential designs across a range of values of the true response rate in the experimental treatment arms \(\theta\).}
		\label{fig2}
	\end{figure}
	
	\begin{figure}
		\vspace{6pc}
		\subfloat{\includegraphics[width=0.9\textwidth]{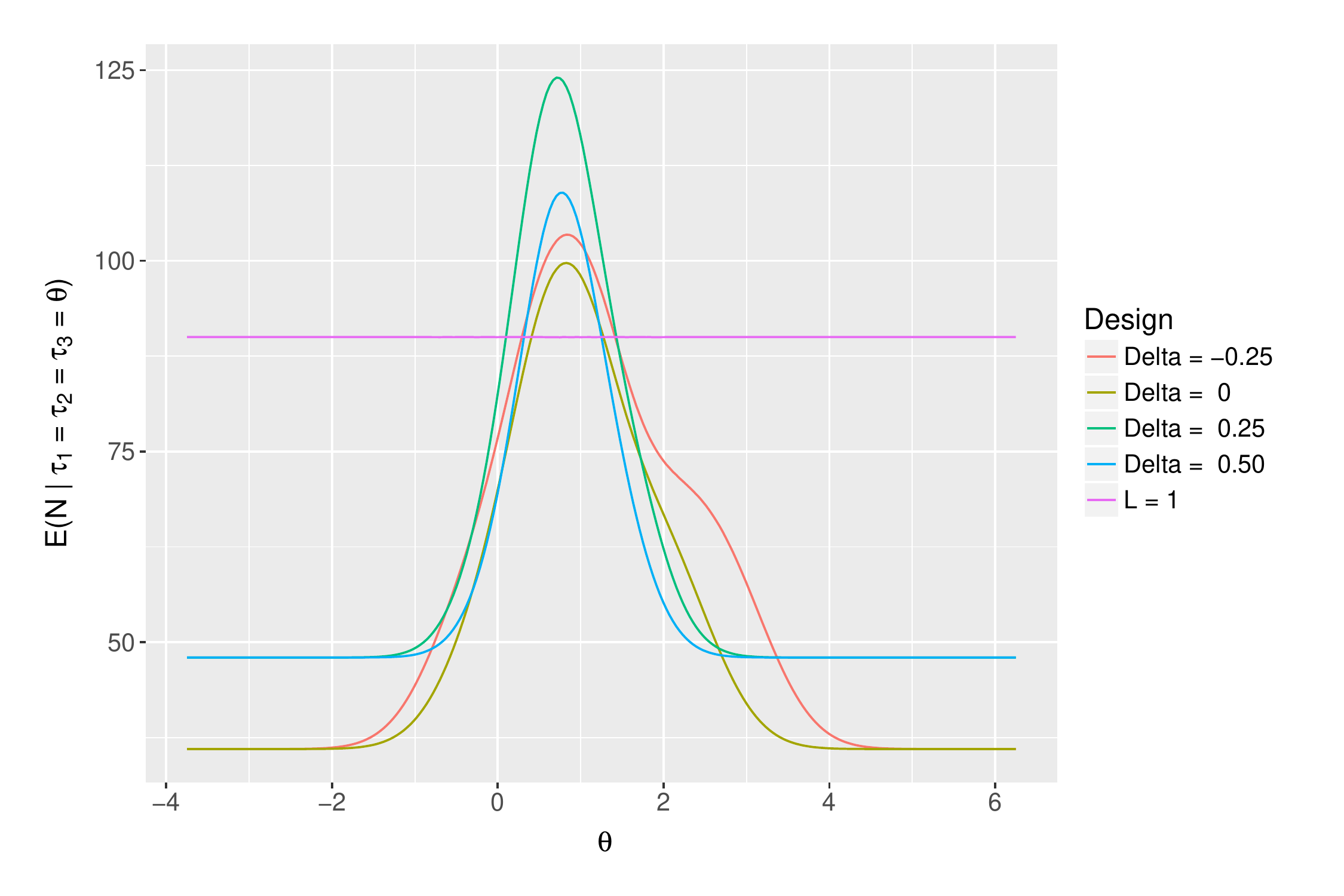}}
		\qquad
		\qquad
		\subfloat{\includegraphics[width=0.9\textwidth]{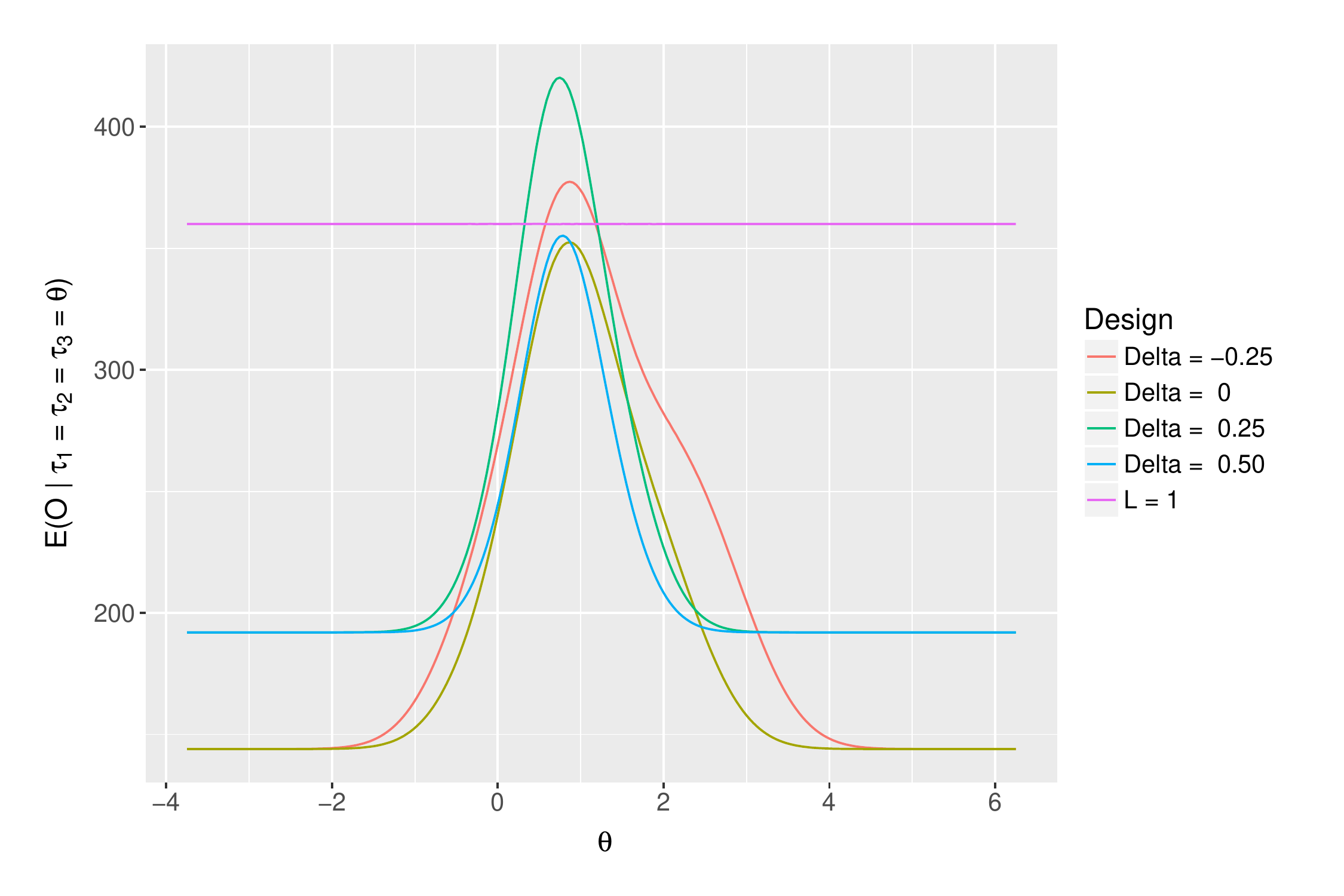}}
		\caption{Performance Measurement Curves. Expected sample size and expected number of observations curves of the single-stage ($L=1$) and considered group sequential designs across a range of values of the true response rate in the experimental treatment arms \(\theta\).}
		\label{fig3}
	\end{figure}
	
	\section{DISCUSSION}
	\label{s:discussion}
	
	There is a long history on group sequential clinical trials. Very few however utilise a crossover design. This may at least in part be due to no formal proof existing for how to strongly control the familywise error rate of such a trial. Here, we provided such a proof and then explored the performance of several sequential designs for the TOMADO trial.
	
	The expected sample size of the sequential designs was observed to be far lower than that of the single-stage design for a large range of values of the true response rate on all experimental treatments. Unfortunately, but unsurprisingly given the trial is not stopped unless all experimental treatments are dropped, there are regions in which the sequential designs are less efficient. Indeed, this region includes some values of $\theta$ between 0 and $\delta$, which may be more realistic observed treatment effects. However, for some considered designs this region is very small and does not include values near 0, which is notable for ethical reasons. This issue could even be further alleviated by utilising optimal stopping boundaries, as has been proposed for parallel arm designs (Wason and Jaki, 2012; Wason et al., 2012). Importantly, several of the designs always performed better than the single-stage design in terms of the expected number of observations required, which could be a significant factor in the cost and length of a trial. Consequently, we can conclude that a group sequential approach to a crossover trial improves efficiency in some circumstances.
	
	Several possible extensions to our work present themselves. For example, we assumed that period was reset in each trial stage. This could reflect a scenario where it is believed being enrolled in the trial will alter a patient's behaviour. However, in some cases, such as to deal with seasonal effects, it would be preferential to have different period effects in each stage.
	
	One simple extension would be to non-inferiority tests, from our present superiority testing framework. Non-inferiority tests, seeking to determine if new treatments are not clinically worse than an established control, would have hypotheses shifted by some factor from the ones presented here. Theorem~\ref{theorem2} could easily be altered to accommodate this, and then popular methods for boundary determination in this setting applied.
	
	Here, we have worked under an idealised scenario, assuming the within patient variance to be known prior to trial commencement. Though this is a common assumption in group sequential theory, it does bring limitations, since often a good estimate for the key variance parameter cannot be provided at the design stage. In this instance group sequential t-tests would almost certainly be required. Furthermore, simulation is required to quantify error rates accurately in the case of small sample sizes. To explore this scenario we analysed the true familywise error rate under the global null hypothesis of a particular design motivated again by the TOMADO trial, but with $L=2$ and $n=12$. We found that provided restricted error maximum likelihood was utilised, there was very little inflation in the familywise error rate over the nominal level $\alpha$. Details of this are provided in Appendix B.
	
	Moreover, we have only explored designing group sequential crossover trials. It is well known that if a final analysis is performed on data acquired in a sequential trial, not taking in to account the sequential nature, then biased treatment effects will be acquired. Extending established methodology for parameter estimation to our scenario will thus be important.
	
	Finally, we have implicitly assumed that there will be no patient drop out, and have not discussed the issue of patient recruitment rates. Though these are problems for all adaptive designs it is important to give them note. Owing to our need for one stages data to be analysed before the commencement of the following stage, it is likely the length of a trial using our approach would be longer for certain recruitment rates. It could be that recruitment is paused at interim, or that patients are continually recruited under the old scheme until results are available, which would lead to overrun and an increase in the expected number of observations and sample size. Thus this would be an important factor to consider when choosing an appropriate design for a trial.
	
	Nevertheless, for future crossover trials, consideration should be given to a group sequential approach. This may assist substantially in the efficient prioritisation of efficacious treatments.
	
	\section*{APPENDIX A: FURTHER TECHNICAL DETAILS}\label{appA}

As discussed in Section \ref{s:model}, part (1) of Theorem~\ref{thm1} implies that
\[ \text{cov}(\hat{\tau}_{d_1l},\hat{\tau}_{d_2l} \mid L_{l1},\dots,L_{lD-1}=0,L_{lD}=l) = \frac{\sigma_e^2}{ln}(1 + \delta_{d_1d_2}), \]
for \(d_1,d_2 \in \{1,\dots,D-1\}\). Alternatively, it tells us that in this case $I_{dl} = ln/(2\sigma_e^2)$, for $d=1,\dots,D-1$.

Moreover, using the above along with Equation~(\ref{eq:eq1}), in conjunction with part (2) of Theorem~\ref{thm1}, we have that if \(f_{p} \le Z_{d_1p} < e_{p}\) for $p=0,\dots,l_1-1$ (i.e. if treatment $d_1$ is present up to stage $l_1$) and \(f_{q} \le Z_{d_2q} < e_{q}\) for $q=0,\dots,l_2-1$ (i.e. if treatment $d_2$ is present up to stage $l_2$), with $l_1 \le l_2$, $l_1,l_2 \in \{1,\dots,L\}$ and $d_1,d_2 \in \{1,\dots,D-1\}$ (taking $f_0=-\infty$ and $e_0=\infty$) then
\begin{align*}
I_{d_1a} &= \frac{an}{2\sigma_e^2}, \displaybreak[0]\\
I_{d_2b} &= \frac{bn}{2\sigma_e^2}, \displaybreak[0]\\
\text{cov}(Z_{d_1a},Z_{d_2b} \mid \boldsymbol{\omega},\boldsymbol{\psi}) &= I_{d_1a}^{1/2}\text{cov}(\hat{\tau}_{d_1b},\hat{\tau}_{d_2b} \mid L_{b1},\dots,L_{bD})I_{d_2b}^{1/2}\nonumber\\ &= \left(\frac{an}{2\sigma_e^2}\right)^{1/2}\frac{\sigma_e^2}{bn}(1 + \delta_{d_1d_2})\left(\frac{bn}{2\sigma_e^2}\right)^{1/2},\nonumber\\
&= \frac{1}{2}\left(\frac{a}{b}\right)^{1/2}(1 + \delta_{d_1d_2}),
\end{align*}
for $a \le b$, $a = 1,\dots,l_1$, and $b = 1,\dots,l_2$.

For further clarity, as an example, consider the case $D=3$, $L=2$, and the associated value of $\text{pr}\left(\boldsymbol{\omega}_R=\boldsymbol{\omega},\boldsymbol{\psi}_R=\boldsymbol{\psi} \mid \boldsymbol{\tau} \right)$ when $\boldsymbol{\omega}=(2,1)^\T$ and $\boldsymbol{\psi}=(1,0)^\T$. Using the above we know the following elements of the matrix $\boldsymbol{\Lambda}_{(\boldsymbol{\omega},\boldsymbol{\psi})}$ and vector $\boldsymbol{I}_{(\boldsymbol{\omega},\boldsymbol{\psi})}=(\boldsymbol{I}_{1,(\boldsymbol{\omega},\boldsymbol{\psi})}^\T,\boldsymbol{I}_{2,(\boldsymbol{\omega},\boldsymbol{\psi})}^\T)^\T$
\begin{align*}
\boldsymbol{\Lambda}_{(\boldsymbol{\omega},\boldsymbol{\psi})} &= \begin{pmatrix}
1 &                               \frac{1}{2} &            \left(\frac{1}{2}\right)^{1/2} & \bullet \\[0.3em]
\frac{1}{2} &                                         1 & \frac{1}{2}\left(\frac{1}{2}\right)^{1/2} & \bullet \\[0.3em]
\left(\frac{1}{2}\right)^{1/2} & \frac{1}{2}\left(\frac{1}{2}\right)^{1/2} &                                         1 & \bullet \\[0.3em]
\bullet &                                         \bullet &                                         \bullet & \bullet \end{pmatrix},\\
\boldsymbol{I}_{(\boldsymbol{\omega},\boldsymbol{\psi})} &= \left(\frac{n}{2\sigma_e^2},\frac{n}{2\sigma_e^2},\frac{2n}{2\sigma_e^2},\bullet\right)^\T,
\end{align*}
where we have used $\bullet$ to signify an element we do not know the value of.

Now consider our computation of $\text{pr}\left(\boldsymbol{\omega}_R=\boldsymbol{\omega},\boldsymbol{\psi}_R=\boldsymbol{\psi} \mid \boldsymbol{\tau} \right)$. We have

\begin{small}
	\begin{align*}
	\text{pr}\left(\boldsymbol{\omega}_R=\boldsymbol{\omega},\boldsymbol{\psi}_R=\boldsymbol{\psi} \mid \boldsymbol{\tau} \right) = \int_{f_1}^{e_1}\int_{-\infty}^{f_1}\int_{e_2}^{\infty}\int_{-\infty}^{\infty} \ \phi\left\{\boldsymbol{x},\boldsymbol{r}(\boldsymbol{\tau},2)\circ \boldsymbol{I}_{(\boldsymbol{\omega},\boldsymbol{\psi})}, \boldsymbol{\Lambda}_{(\boldsymbol{\omega},\boldsymbol{\psi})} \right\} \mathrm{d}x_{22}\mathrm{d}x_{12}\mathrm{d}x_{21}\mathrm{d}x_{11}.
	\end{align*}
\end{small}
As we have seen we do not know the values of the final row and column of the matrix $\boldsymbol{\Lambda}_{(\boldsymbol{\omega},\boldsymbol{\psi})}$, or the final element of the vector $\boldsymbol{I}_{(\boldsymbol{\omega},\boldsymbol{\psi})}$. But, the fact mentioned in Theorem 2.2 becomes clear: this does not matter as the limits of integration corresponding to this variable are $(-\infty,\infty)$. Indeed, by the marginal distribution properties of the multivariate normal distribution, we need only as stated consider one matrix \( \boldsymbol{\Lambda}_{(\boldsymbol{\omega},\boldsymbol{\psi})} \), and one set of vectors \( \boldsymbol{I}_{l,(\boldsymbol{\omega},\boldsymbol{\psi})} \); exactly those given by the case \(\boldsymbol{\omega}=(L,\dots,L)^\T\). We denote these by \(\boldsymbol{\Lambda}\) and \( \boldsymbol{I}_l\), and set \(\boldsymbol{I}=(\boldsymbol{I}_1^\T,\dots,\boldsymbol{I}_L^\T)\). Explicitly, we have
\begin{align*}
\text{cov}(Z_{d_1l_1},Z_{d_2l_2}) &= \boldsymbol{\Lambda}_{d_1 + (D - 1)(l_1 - 1), d_2 + (D - 1)(l_2 - 1)},\\
&= \frac{1}{2}\left(\frac{l_1}{l_2}\right)^{1/2}(1 + \delta_{d_1d_2}),\\
I_{dl} &= \frac{ln}{2\sigma_e^2},
\end{align*}
for any $d,d_1,d_2 \in \{1,\dots,D-1\}$ and $l_1 \le l_2$, $l,l_1,l_2 \in \{1,\dots,L\}$.

    \section*{APPENDIX B: SMALL SAMPLE SIZE PERFORMANCE} \label{appB}

For small sample sizes, simulation is required to accurately determine a designs performance. Since crossover trials are routinely conducted with small sample sizes, we here explore the impact this has upon the familywise error rate under the global null hypothesis.

We determined a design corresponding to the TOMADO example that would require only 12 patients in each of two stages: the smallest allowable maximum sample size for a group sequential crossover trial with $D=4$ treatments initially, given our restrictions on $n$. Taking $\Delta=0$ as an example, a trial with $n=12$ and $L=2$ with $f_1=0.768, f_2=2.036$ and $e_1=2.879$ to 3 decimal places, would using our multivariate normal calculations have a maximal familywise error rate of $\alpha=0.05$ under the global null hypothesis, and $\beta=0.2$ for $\delta=2.2$ when $\sigma_e^2=6.51$.

Ten-thousand of these trials were simulated in order to ascertain the true probability of rejecting $H_{0d}$ for some $d=1,2,3$, when $\boldsymbol{\tau}=(0,0,0)^\T$. For simplicity, $\pi_j$ was set to 0 for $j=2,\dots,D$, and $\mu_0$ was set to 0. Incorporating non-zero period effects however would not be expected to greatly effect the results.

Whitehead et al. (2009) proposed a quantile substitution procedure for adapting the boundaries of a sequential trial to be more suitable to the case of unknown variance. We additionally considered employing this procedure. Given there is no consensus on how to determine the degrees of freedom when analysing using linear mixed models, we took the degrees of freedom at any analysis to be the classical decomposition of degrees of freedom in balanced, multilevel ANOVA designs (Pinheiro and Bates, 2009). Moreover, we also assessed the performance of the sequential design when the linear mixed model was fitted through either maximum likelihood or restricted error maximum likelihood estimation. Therefore in total these simulations were performed for each of four possible analysis procedures: maximum likelihood or restricted error maximum likelihood estimation, with or without boundary adjustment through quantile substitution.

Thus, for each simulated study, patient response data for each stage $l$ was randomly generated according to the distribution implied by their allocated treatment sequence (assigned according to the rules of the trial design), using the function \texttt{rmvnorm} (Genz et al., 2016) in R. The between person variance was set to $\sigma_b^2=10.12$; the value ascertained in the final analysis of the TOMADO trial data. Following this, our linear mixed model was fitted on all accumulated data (with either maximum likelihood or restricted error maximum likelihood estimation according to the particular analysis procedure being considered) and $Z_{dl}=\hat{\tau_{dl}}\hat{I}_{dl}^{1/2}$ determined for $d=1,2,3$, where $\hat{I}_{dl}^{1/2}$ is the observed Fisher information for $\hat{\tau_{dl}}$. Then, each $Z_{dl}$ was compared to $e_l$ and $f_l$ and our stopping rules applied (with $e_l$ and $f_l$ adjusted using quantile substitution if the analysis procedure under consideration so dictated). If for some $d=1,2,3$, $f_l \le Z_{dl} < e_l$, the trial proceeded to the following stage and the process was repeated. In each instance, simulations in which $H_{0d}$ was rejected for some $d=1,2,3$ were recorded in order to ascertain true rejection rates.

The performance of these procedures is displayed in Table~\ref{tab2}. We observe that when maximum likelihood estimation is utilised and the boundaries are not adjusted using the procedure of Whitehead et al. (2009), there is substantial inflation in the familywise error rate under the global null hypothesis to 0.077. However, when restricted error maximum likelihood estimation is used, there is only negligible inflation if adjustment of the boundaries is employed. A program to perform this analysis is available. Its use is detailed in Appendix D.

\begin{table*}
	\centering
	\caption{Performance of the small sample size group sequential crossover trial design under four analysis procedures (ML = Maximum Likelihood, REML = Restricted Error Maximum Likelihood). Specifically, \(\text{pr}\left(\text{Reject } H_{0d} \text{ for some } d\mid\boldsymbol{\tau}=\boldsymbol{0}\right)\) is shown for each procedure to 3 decimal places based on 10000 trial simulations}
	\begin{footnotesize}
		\begin{tabular}{lrrr}
			\hline
			Procedure & Estimation & Boundary Adjustment & \(\text{pr}\left(\text{Reject } H_{0d} \text{ for some } d\mid\boldsymbol{\tau}=\boldsymbol{0}\right)\)\\[5pt]
			Procedure 1 & ML & No & 0.077 \\
			Procedure 2 & ML & Yes & 0.062 \\
			Procedure 3 & REML & No & 0.055 \\
			Procedure 4 & REML & Yes & 0.051 \\
			\hline
		\end{tabular}
	\end{footnotesize}
	\label{tab2}
\end{table*}

    \section*{APPENDIX C: TECHNICAL PROOFS} \label{appC}
    
    \begin{lemma}
    	\label{lemma1} Element \(pq\) of \(\boldsymbol{\Sigma}_r^{-1}\) is given by
    	\begin{equation} \label{eq:eq3}
    	\boldsymbol{\Sigma}_{rpq}^{-1} = \frac{1}{\sigma_e^2(\sigma_e^2 + r\sigma_b^2)}\{(\sigma_e^2 + r\sigma_b^2)\delta_{pq} - \sigma_b^2\}.
    	\end{equation}
    \end{lemma}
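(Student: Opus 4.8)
The plan is to start by writing the matrix $\boldsymbol{\Sigma}_r$ down explicitly from the model's covariance specification. For a single patient allocated $r$ treatments, the only random terms in $y_{ijkl} = \mu_0 + \pi_j + \tau_{d[j,k,l]} + s_{ikl} + \epsilon_{ijkl}$ are the subject effect $s_{ikl}$, common to all $r$ periods of that patient, and the period-specific residuals $\epsilon_{ijkl}$. Using the stated covariance assumptions, $\mathrm{cov}(y_{ij_1kl},y_{ij_2kl}) = \sigma_b^2 + \sigma_e^2\delta_{j_1j_2}$, so $\boldsymbol{\Sigma}_r = \sigma_e^2\boldsymbol{I}_r + \sigma_b^2\boldsymbol{1}_r\boldsymbol{1}_r^\T$ is a compound-symmetry (equicorrelation) matrix, where $\boldsymbol{1}_r$ denotes the length-$r$ vector of ones. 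In particular its form does not depend on which treatments occupy which periods, only on $r$.

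Next I would invert this rank-one update of a scaled identity. Two equivalent routes are available: one can invoke the Sherman--Morrison formula with $\boldsymbol{A} = \sigma_e^2\boldsymbol{I}_r$ and $\boldsymbol{u} = \boldsymbol{v} = \sigma_b\boldsymbol{1}_r$; or, more elementarily, one can posit $\boldsymbol{\Sigma}_r^{-1} = a\boldsymbol{I}_r + b\boldsymbol{1}_r\boldsymbol{1}_r^\T$ and solve the two scalar equations obtained by equating $\boldsymbol{\Sigma}_r\boldsymbol{\Sigma}_r^{-1}$ with $\boldsymbol{I}_r$, namely $a\sigma_e^2 = 1$ and $b\sigma_e^2 + \sigma_b^2(a + rb) = 0$. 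Either way one obtains $a = 1/\sigma_e^2$ and $b = -\sigma_b^2/\{\sigma_e^2(\sigma_e^2 + r\sigma_b^2)\}$, hence $\boldsymbol{\Sigma}_r^{-1} = \{\sigma_e^2(\sigma_e^2 + r\sigma_b^2)\}^{-1}\{(\sigma_e^2 + r\sigma_b^2)\boldsymbol{I}_r - \sigma_b^2\boldsymbol{1}_r\boldsymbol{1}_r^\T\}$, whose $(p,q)$ entry is precisely Equation~(\ref{eq:eq3}). The assumptions $\sigma_e^2,\sigma_b^2 > 0$ ensure $\sigma_e^2 + r\sigma_b^2 \neq 0$, so the inverse exists.

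There is essentially no substantive obstacle here: the only point requiring care is reading off $\boldsymbol{\Sigma}_r$ correctly from the covariance structure — in particular noting that the subject random effect contributes $\sigma_b^2$ to every off-diagonal entry regardless of the period-by-treatment arrangement — after which the inversion is a textbook compound-symmetry computation. I would optionally finish by verifying $\boldsymbol{\Sigma}_r^{-1}\boldsymbol{\Sigma}_r = \boldsymbol{I}_r$ directly as a sanity check, since this lemma feeds into the decomposition of $\mathrm{cov}(\hat{\boldsymbol{\beta}}_l,\hat{\boldsymbol{\beta}}_l)$ used in the proof of Theorem~\ref{thm1}.
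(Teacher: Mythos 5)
Your proposal is correct, and it reaches Equation~(\ref{eq:eq3}) by a constructive route where the paper uses a pure verification. Both arguments begin from the same key observation: by the stated covariance assumptions, a single patient's response covariance is the compound-symmetry matrix with entries $\boldsymbol{\Sigma}_{rpq}=\sigma_b^2+\sigma_e^2\delta_{pq}$, independent of which treatment occupies which period. From there the paper simply takes the claimed inverse as given and checks $\boldsymbol{\Sigma}_{rps}\boldsymbol{\Sigma}_{rsq}^{-1}=\delta_{pq}$ by expanding the sum over $s$ and using $\sum_s\delta_{sq}=\sum_s\delta_{ps}=1$; you instead write $\boldsymbol{\Sigma}_r=\sigma_e^2\boldsymbol{I}_r+\sigma_b^2\boldsymbol{1}_r\boldsymbol{1}_r^\T$ and either apply Sherman--Morrison or posit $a\boldsymbol{I}_r+b\boldsymbol{1}_r\boldsymbol{1}_r^\T$ and solve for $a$ and $b$, which yields the same $(p,q)$ entry. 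Your version explains \emph{why} the inverse has this form and makes explicit that invertibility follows from $\sigma_e^2,\sigma_b^2>0$, which the paper leaves implicit; the paper's index-level verification has the minor advantage that the entrywise Kronecker-delta bookkeeping it sets up is reused directly in the scalar computations of Lemma~\ref{lemma2}. The algebra is otherwise the same, and your optional closing check $\boldsymbol{\Sigma}_r^{-1}\boldsymbol{\Sigma}_r=\boldsymbol{I}_r$ is precisely the paper's entire proof.
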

    \begin{proof}
    	We demonstrate this by verifying \( \boldsymbol{\Sigma}_{rps} \boldsymbol{\Sigma}_{rsq}^{-1} =
    	\delta_{pq} \). From the chosen covariance for a patient's responses, we have that element \(pq\) of \(\boldsymbol{\Sigma}_r\) is given by
    	\begin{equation} \label{eq:eq4}
    	\boldsymbol{\Sigma}_{rpq} = \sigma_b^2\boldsymbol{Z}_{rps}\boldsymbol{Z}_{rsq}^\T + \sigma_e^2\delta_{pq} = \sigma_b^2 + \sigma_e^2\delta_{pq},
    	\end{equation}
    	where \(\boldsymbol{Z}_{rpq}\), \((p=1;\ q=1,\dots,r)\), is the \(pq\)th element of \(\boldsymbol{Z}_r\); the random effects design matrix for a single individual when there are \(r\) treatments remaining. Then
    	\begin{normalsize}
    		\begin{align*}
    		\boldsymbol{\Sigma}_{rps} \boldsymbol{\Sigma}_{rsq}^{-1} &= \frac{1}{\sigma_e^2(\sigma_e^2 + r\sigma_b^2)} \sum_{s=1}^r \left\{ \sigma_b^2 + \sigma_e^2 \delta_{ps} \right\} \left\{ (\sigma_e^2 + r \sigma_b^2) \delta_{sq} - \sigma_b^2 \right\},\\
    		&= \frac{1}{\sigma_e^2(\sigma_e^2 + r\sigma_b^2)}\sum_{s=1}^r \left\{\sigma_b^2(\sigma_e^2 + r\sigma_b^2)\delta_{sq} - \sigma_b^4 + \sigma_e^2(\sigma_e^2 + r\sigma_b^2)\delta_{pq}\right.\\
    		& \qquad\qquad \qquad \qquad \qquad \qquad \left. - \sigma_e^2\sigma_b^2\delta_{ps} \right\},\\
    		&= \frac{1}{\sigma_e^2(\sigma_e^2 + r\sigma_b^2)}\left\{\sigma_b^2(\sigma_e^2 + r\sigma_b^2) - r\sigma_b^4 + \sigma_e^2(\sigma_e^2 + r\sigma_b^2)\delta_{pq} - \sigma_e^2\sigma_b^2 \right\},\\
    		&= \delta_{pq},
    		\end{align*}
    	\end{normalsize}
    	using \(\sum_s \delta_{sq} = \sum_s \delta_{ps} = 1\).
    \end{proof}
    
    \begin{lemma}
    	\label{lemma2} Take the vector of fixed effects \( \boldsymbol{\beta} \) to be
    	\[ \boldsymbol{\beta} = (\mu_0,\pi_2,\dots,\pi_D,\tau_2,\dots,\tau_{D-1})^\T. \]
    	Then we have the following result
    	\begin{tiny}
    		\begin{equation} \label{eq:eq5}
    		\sum_{i=1}^{\left|S_r\right|} \frac{n}{\left|S_r\right|} \boldsymbol{X}_{s_{ri}}^\T \boldsymbol{\Sigma}_r^{-1} \boldsymbol{X}_{s_{ri}} = n \begin{pmatrix} A & \boldsymbol{B}^\T & \boldsymbol{0}_{1,D-r} & \boldsymbol{B}^\T & \boldsymbol{0}_{1,D-r} \\[0.3em]
    		\boldsymbol{B} & \boldsymbol{C} & \boldsymbol{0}_{r-1,D-r} & \boldsymbol{E} & \boldsymbol{0}_{r-1,D-r} \\[0.3em]
    		\boldsymbol{0}_{D-r,1} & \boldsymbol{0}_{D-r,r-1} & \boldsymbol{0}_{D-r,D-r} & \boldsymbol{0}_{D-r,r-1} & \boldsymbol{0}_{D-r,D-r} \\[0.3em]
    		\boldsymbol{B} & \boldsymbol{E} & \boldsymbol{0}_{r-1,D-r} & \boldsymbol{C} & \boldsymbol{0}_{r-1,D-r} \\[0.3em]
    		\boldsymbol{0}_{D-r,1} & \boldsymbol{0}_{D-r,r-1} & \boldsymbol{0}_{D-r,D-r} & \boldsymbol{0}_{D-r,r-1} & \boldsymbol{0}_{D-r,D-r}
    		\end{pmatrix},
    		\end{equation}
    	\end{tiny}
    	where \(\boldsymbol{0}_{m,n}\) is a matrix of zeroes of dimension \(m \times n\), and
    	\begin{small}
    		\begin{align*}
    		A &= \frac{r\sigma_e^2}{\sigma_e^2(\sigma_e^2 + r\sigma_b^2)}, \\
    		\boldsymbol{B}_{pq} &= \frac{r\sigma_e^2}{\sigma_e^2(\sigma_e^2 + r\sigma_b^2)} & (p&=1,\dots,r-1;\ q=1), \\
    		\boldsymbol{C}_{pq} &= \frac{1}{\sigma_e^2(\sigma_e^2 + r\sigma_b^2)}\{(\sigma_e^2 + r\sigma_b^2)\delta_{pq} - \sigma_b^2\} & (p&=1,\dots,r-1;\ q=1,\dots,r-1), \\
    		\boldsymbol{E}_{pq} &= \frac{1}{r}\frac{\sigma_e^2}{\sigma_e^2(\sigma_e^2 + r\sigma_b^2)} & (p&=1,\dots,r-1;\ q=1,\dots,r-1).
    		\end{align*}
    	\end{small}
    \end{lemma}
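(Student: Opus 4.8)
The plan is to substitute the closed form of $\boldsymbol{\Sigma}_r^{-1}$ supplied by Lemma~\ref{lemma1} and to exploit that it is a rank-one update of a multiple of the identity. Concretely, equation~(\ref{eq:eq3}) can be rewritten as $\boldsymbol{\Sigma}_r^{-1} = \sigma_e^{-2}\boldsymbol{I}_r - \{\sigma_e^2(\sigma_e^2+r\sigma_b^2)\}^{-1}\sigma_b^2\,\boldsymbol{1}_r\boldsymbol{1}_r^\T$, with $\boldsymbol{I}_r$ the $r\times r$ identity and $\boldsymbol{1}_r$ the length-$r$ all-ones vector. Hence each summand splits as $\boldsymbol{X}_{s_{ri}}^\T\boldsymbol{\Sigma}_r^{-1}\boldsymbol{X}_{s_{ri}} = \sigma_e^{-2}\boldsymbol{X}_{s_{ri}}^\T\boldsymbol{X}_{s_{ri}} - \{\sigma_e^2(\sigma_e^2+r\sigma_b^2)\}^{-1}\sigma_b^2\,\boldsymbol{v}_{s_{ri}}\boldsymbol{v}_{s_{ri}}^\T$, where $\boldsymbol{v}_{s_{ri}} = \boldsymbol{X}_{s_{ri}}^\T\boldsymbol{1}_r$ is the vector of column sums of the single-patient design matrix, and I would handle the two pieces separately.

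For the rank-one piece, the key observation is that $\boldsymbol{v}_{s_{ri}}$ does not depend on $i$: because each admitted sequence is a complete block of the $r$ remaining treatments over $r$ periods, the $\mu_0$-column sums to $r$, each of the columns for $\pi_2,\dots,\pi_r$ and each of the columns for $\tau_1,\dots,\tau_{r-1}$ sums to $1$, and the columns for $\pi_{r+1},\dots,\pi_D$ and $\tau_r,\dots,\tau_{D-1}$ are identically zero (those periods and treatments never occur). Thus $\boldsymbol{v}_{s_{ri}}\equiv\boldsymbol{v}=(r,\boldsymbol{1}_{r-1}^\T,\boldsymbol{0}_{1,D-r},\boldsymbol{1}_{r-1}^\T,\boldsymbol{0}_{1,D-r})^\T$, and since $\sum_{i=1}^{|S_r|}n/|S_r| = n$ the rank-one contribution to the full sum is exactly $-\{\sigma_e^2(\sigma_e^2+r\sigma_b^2)\}^{-1}\sigma_b^2\,n\,\boldsymbol{v}\boldsymbol{v}^\T$, which already displays the zero pattern of (\ref{eq:eq5}).

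It then remains to evaluate $\sum_i(n/|S_r|)\boldsymbol{X}_{s_{ri}}^\T\boldsymbol{X}_{s_{ri}}$ block by block, writing $\boldsymbol{X}_{s_{ri}}=[\boldsymbol{1}_r\mid\boldsymbol{P}_{s_{ri}}\mid\boldsymbol{T}_{s_{ri}}]$ for the intercept, period and treatment blocks. Complete-block allocation makes $\boldsymbol{P}_{s_{ri}}^\T\boldsymbol{P}_{s_{ri}}$ and $\boldsymbol{T}_{s_{ri}}^\T\boldsymbol{T}_{s_{ri}}$ the diagonal matrix carrying ones on the active positions and zeros elsewhere, and makes $\boldsymbol{1}_r^\T\boldsymbol{P}_{s_{ri}}$ and $\boldsymbol{1}_r^\T\boldsymbol{T}_{s_{ri}}$ carry unit entries on the active columns; period balance makes $\sum_i\boldsymbol{P}_{s_{ri}}^\T\boldsymbol{T}_{s_{ri}}$ equal to $(|S_r|/r)\boldsymbol{1}_{r-1}\boldsymbol{1}_{r-1}^\T$ on the active block, so that with the multiplicity $n/|S_r|$ per sequence this cross-block is $(n/r)\boldsymbol{1}_{r-1}\boldsymbol{1}_{r-1}^\T$. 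Adding back the rank-one term and reducing each block over the common denominator $\sigma_e^2(\sigma_e^2+r\sigma_b^2)$ delivers the stated $A$, $\boldsymbol{B}$, $\boldsymbol{C}$ and $\boldsymbol{E}$; for instance the $(\mu_0,\mu_0)$ entry is $n\{r/\sigma_e^2 - r^2\sigma_b^2/(\sigma_e^2(\sigma_e^2+r\sigma_b^2))\} = nr\sigma_e^2/\{\sigma_e^2(\sigma_e^2+r\sigma_b^2)\} = nA$, and each active $(\pi,\tau)$ entry is $n\{1/(r\sigma_e^2) - \sigma_b^2/(\sigma_e^2(\sigma_e^2+r\sigma_b^2))\} = n\boldsymbol{E}_{pq}$.

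The only genuinely delicate step is the combinatorial bookkeeping of which columns of $\boldsymbol{X}_{s_{ri}}$ are active: the identifiability constraints $\pi_1=0$ and $\tau_0=0$ mean there is no $\pi_1$ column (so the period-$1$ row of $\boldsymbol{P}_{s_{ri}}$ vanishes) and no $\tau_0$ column (so a patient's control period contributes nothing to $\boldsymbol{T}_{s_{ri}}$), and only the first $r-1$ period and $r-1$ treatment columns are ever nonzero when $r$ treatments remain. Getting this accounting right — together with the period-balance count $\#\{i : s_{ri}(j) = d\} = |S_r|/r$ for the cross term — is what forces the $(r-1)$-dimensional active blocks and the exact sparsity in (\ref{eq:eq5}); once it is in place, the remainder is the routine algebra sketched above.
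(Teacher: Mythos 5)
Your proof is correct and reaches the same matrix as the paper's, but it is organised genuinely differently. The paper works entrywise: it writes out every quadratic form $\boldsymbol{1}_r^\T\boldsymbol{\Sigma}_r^{-1}\boldsymbol{1}_r$, $\boldsymbol{\Pi}_{jr}^\T\boldsymbol{\Sigma}_r^{-1}\boldsymbol{\Pi}_{kr}$, $\boldsymbol{\Pi}_{jr}^\T\boldsymbol{\Sigma}_r^{-1}\boldsymbol{T}_{ks_{ri}}$, and so on, using the explicit entries of $\boldsymbol{\Sigma}_r^{-1}$ from Lemma~\ref{lemma1} together with its row-sum and grand-sum identities, observes that only the period--treatment cross term depends on the sequence $s_{ri}$, and then averages that single term over $S_r$ via the period-balance count $|S_r|/r$. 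Your rank-one decomposition $\boldsymbol{\Sigma}_r^{-1}=\sigma_e^{-2}\boldsymbol{I}_r-\{\sigma_e^2(\sigma_e^2+r\sigma_b^2)\}^{-1}\sigma_b^2\,\boldsymbol{1}_r\boldsymbol{1}_r^\T$ achieves the same end more cleanly: the correction collapses to one sequence-independent outer product $\boldsymbol{v}\boldsymbol{v}^\T$ of column sums, and all the $\sigma$-free combinatorics is isolated in $\sum_i\boldsymbol{X}_{s_{ri}}^\T\boldsymbol{X}_{s_{ri}}$, where the same two structural inputs (complete blocks, period balance) enter in exactly the same places. So the two arguments use identical facts; yours buys tidier bookkeeping and makes the sparsity pattern visible at a glance, while the paper's is more mechanical but needs no observation about the structure of $\boldsymbol{\Sigma}_r^{-1}$. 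One caveat: your calculation, exactly like the paper's own proof (which obtains $\boldsymbol{\Pi}_{jr}^\T\boldsymbol{\Sigma}_r^{-1}\boldsymbol{1}_r=\sigma_e^2/\{\sigma_e^2(\sigma_e^2+r\sigma_b^2)\}$), gives $\boldsymbol{B}_{pq}=\sigma_e^2/\{\sigma_e^2(\sigma_e^2+r\sigma_b^2)\}$, i.e.\ \emph{without} the factor $r$ that appears in the displayed statement of the lemma. That factor is a slip in the statement rather than in your work --- the verification of Theorem~\ref{theorem1} (for instance the check $\boldsymbol{A}\boldsymbol{F}+2\boldsymbol{B}^\T\boldsymbol{G}=1$) uses the $r$-free value --- so you should not claim to recover the ``stated'' $\boldsymbol{B}$ verbatim, but rather record that your derivation corrects it.
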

    \begin{proof} Denote the columns of \(\boldsymbol{X}_{s_{ri}}\) by
    	\[ \boldsymbol{X}_{s_{ri}} = \begin{pmatrix}
    	\boldsymbol{1}_r & \boldsymbol{\Pi}_{2r} & \dots & \boldsymbol{\Pi}_{Dr} & \boldsymbol{T}_{1s_{ri}} & \dots & \boldsymbol{T}_{D-1 s_{ri}}
    	\end{pmatrix}. \]
    	Thus \( \boldsymbol{\Pi}_{jr} \) is the column corresponding to the period effect \( \pi_j \), \( \boldsymbol{T}_{d s_{ri}} \) to the treatment effect \(\tau_d\), and \(\boldsymbol{1}_r\) to the intercept \(\mu_0\). Using this representation, and Lemma~\ref{lemma1}, we have
    	
    	\begin{footnotesize}
    		\begin{align*}
    		\boldsymbol{X}_{s_{ri}}^\T\boldsymbol{\Sigma}_r^{-1}\boldsymbol{X}_{s_{ri}} &= \begin{pmatrix} \boldsymbol{1}_r^\T \\[0.3em]
    		\boldsymbol{\Pi}_{2r}^\T \\[0.3em]
    		\vdots \\[0.3em]
    		\boldsymbol{T}_{D-1 s_{ri}}^\T
    		\end{pmatrix} \boldsymbol{\Sigma}_r^{-1}
    		\begin{pmatrix} \boldsymbol{1}_r & \boldsymbol{\Pi}_{2r} & \dots
    		& \boldsymbol{T}_{D-1 s_{ri}}
    		\end{pmatrix} \\
    		&= \begin{pmatrix}
    		\boldsymbol{1}_r^\T\boldsymbol{\Sigma}_r^{-1}\boldsymbol{1}_r & \boldsymbol{1}_r^\T\boldsymbol{\Sigma}_r^{-1}\boldsymbol{\Pi}_{2r} & \dots & \boldsymbol{1}_r^\T\boldsymbol{\Sigma}_r^{-1}\boldsymbol{T}_{D-1 s_{ri}} \\[0.3em]
    		\boldsymbol{\Pi}_{2r}^\T\boldsymbol{\Sigma}_r^{-1}\boldsymbol{1}_r & \boldsymbol{\Pi}_{2r}^\T\boldsymbol{\Sigma}_r^{-1}\boldsymbol{\Pi}_{2r} & \dots & \boldsymbol{\Pi}_{2r}^\T\boldsymbol{\Sigma}_r^{-1}\boldsymbol{T}_{D-1 s_{ri}} \\[0.3em]
    		\vdots & \vdots & \ddots & \vdots \\[0.3em]
    		\boldsymbol{T}_{D-1 s_{ri}}^\T\boldsymbol{\Sigma}_r^{-1}\boldsymbol{1}_r & \boldsymbol{T}_{D-1 s_{ri}}^\T\boldsymbol{\Sigma}_r^{-1}\boldsymbol{\Pi}_{2r} & \dots & \boldsymbol{T}_{D-1 s_{ri}}^\T\boldsymbol{\Sigma}_r^{-1}\boldsymbol{T}_{D-1 s_{ri}}
    		\end{pmatrix}.
    		\end{align*}
    	\end{footnotesize}
    	Here, \( \text{dim}⁡(\boldsymbol{\Pi}_{jr}) = \text{dim}(\boldsymbol{1}_r) = \text{dim}⁡(\boldsymbol{T}_{j
    		s_ri}) = r \times 1 \), for all \(i\) and \(j\). Therefore
    	\(\boldsymbol{1}_r^\T\boldsymbol{\Sigma}_r^{-1}\boldsymbol{1}_r\), \(\boldsymbol{\Pi}_{jr}^\T\boldsymbol{\Sigma}_r^{-1}\boldsymbol{1}_r\),
    	\(\boldsymbol{T}_{js_{ri}}^\T\boldsymbol{\Sigma}_r^{-1}\boldsymbol{1}_r\), \(\boldsymbol{1}_r^\T\boldsymbol{\Sigma}_r^{-1}\boldsymbol{\Pi}_{jr}\),
    	\(\boldsymbol{1}_r^\T\boldsymbol{\Sigma}_r^{-1}\boldsymbol{T}_{js_{ri}}\),
    	\(\boldsymbol{\Pi}_{jr}^\T\boldsymbol{\Sigma}_r^{-1}\boldsymbol{\Pi}_{kr}\),
    	\(\boldsymbol{T}_{js_{ri}}^\T\boldsymbol{\Sigma}_r^{-1}\boldsymbol{T}_{ks_{ri}}\) and
    	\(\boldsymbol{\Pi}_{jr}^\T\boldsymbol{\Sigma}_r^{-1}\boldsymbol{T}_{ks_{ri}}\) are scalars for all \(i\),
    	\(j\) and \(k\).
    	
    	By the definition of being in period \(j\), the \(v\)th element of
    	\(\boldsymbol{\Pi}_{jr}\) is given by
    	\[ \boldsymbol{\Pi}_{jrv} = \delta_{jv} .\]
    	Whilst, since complete block design sequences are used, the \(v\)th element
    	of \(\boldsymbol{T}_{js_{ri}}\) is given by
    	\[ \boldsymbol{T}_{js_{ri}v} = \delta_{1\mathbb{I}_{js_{ri}v}}, \]
    	where
    	\[ \mathbb{I}_{js_{ri}v} = \left\{
    	\begin{array}{l l}
    	1 & \text{An individual on sequence } s_{ri} \text{ receives treatment } j \text{ in period } v,\\
    	0 & \text{otherwise}.
    	\end{array}\right. \]
    	We denote this non-zero element, if it exists, by \(t_j\).
    	
    	Now from the symmetry present in \(\boldsymbol{\Sigma}_r^{-1}\), using Lemma~\ref{lemma1} we have
    	\[ \sum_{j,k} \boldsymbol{\Sigma}_{rjk}^{-1} = \frac{r\sigma_e^2}{\sigma_e^2(\sigma_e^2 + r\sigma_b^2)}, \]
    	and
    	\[ \sum_{j} \boldsymbol{\Sigma}_{rjk}^{-1} = \sum_{k}\boldsymbol{\Sigma}_{rjk}^{-1} = \frac{\sigma_e^2}{\sigma_e^2(\sigma_e^2 + r\sigma_b^2)}, \]
    	for all \(j\) and \(k\). Therefore, we can determine the form of each of the
    	scalar elements in the matrix above as follows
    	\begin{footnotesize}
    		\begin{align*}
    		\boldsymbol{1}_r^\T\boldsymbol{\Sigma}_r^{-1}\boldsymbol{1}_r &= \sum_{j,k} \boldsymbol{\Sigma}_{rjk}^{-1} = \frac{r\sigma_e^2}{\sigma_e^2(\sigma_e^2 + r\sigma_b^2)}, \\
    		\boldsymbol{\Pi}_{jr}^\T\boldsymbol{\Sigma}_r^{-1}\boldsymbol{1}_r &= \boldsymbol{1}_r^\T\boldsymbol{\Sigma}_r^{-1}\boldsymbol{\Pi}_{jr} = \sum_{k} \boldsymbol{\Sigma}_{rjk}^{-1}\mathbb{I}_{\{j \le r\}} = \frac{\sigma_e^2}{\sigma_e^2(\sigma_e^2 + r\sigma_b^2)}\mathbb{I}_{\{j \le r\}},\\
    		\boldsymbol{T}_{js_{ri}}^\T\boldsymbol{\Sigma}_r^{-1}\boldsymbol{1}_r &= \boldsymbol{1}_r^\T\boldsymbol{\Sigma}_r^{-1}\boldsymbol{T}_{js_{ri}} = \sum_{k} \boldsymbol{\Sigma}_{rt_jk}^{-1}\mathbb{I}_{\{j \le r-1\}} = \frac{\sigma_e^2}{\sigma_e^2(\sigma_e^2 + r\sigma_b^2)}\mathbb{I}_{\{j \le r-1\}},\\
    		\boldsymbol{\Pi}_{jr}^\T\boldsymbol{\Sigma}_r^{-1}\boldsymbol{\Pi}_{kr} &= \boldsymbol{\Sigma}_{rjk}^{-1}\mathbb{I}_{\{j \le r\}}\mathbb{I}_{\{k \le r\}} = \left\{\frac{(\sigma_e^2 + r\sigma_b^2)\mathbb{I}_{\{j=k\}} - \sigma_b^2}{\sigma_e^2(\sigma_e^2 + r\sigma_b^2)}\right\}\mathbb{I}_{\{j \le r\}}\mathbb{I}_{\{k \le r\}},\\
    		\boldsymbol{T}_{js_{ri}}^\T\boldsymbol{\Sigma}_r^{-1}\boldsymbol{T}_{ks_{ri}} &= \boldsymbol{\Sigma}_{rt_jt_k}^{-1}\mathbb{I}_{\{j \le r-1\}}\mathbb{I}_{\{k \le r-1\}} = \left\{\frac{(\sigma_e^2 + r\sigma_b^2)\mathbb{I}_{\{j=k\}} - \sigma_b^2}{\sigma_e^2(\sigma_e^2 + r\sigma_b^2)}\right\}\mathbb{I}_{\{j \le r-1\}}\mathbb{I}_{\{k \le r-1\}},\\
    		\boldsymbol{\Pi}_{jr}^\T\boldsymbol{\Sigma}_r^{-1}\boldsymbol{T}_{ks_{ri}} &= \boldsymbol{\Sigma}_{rjt_k}^{-1}\mathbb{I}_{\{j \le r\}}\mathbb{I}_{\{k \le r-1\}} = \left\{\frac{(\sigma_e^2 + r\sigma_b^2)\mathbb{I}_{\{j=t_k\}} - \sigma_b^2}{\sigma_e^2(\sigma_e^2 + r\sigma_b^2)}\right\}\mathbb{I}_{\{j \le r\}}\mathbb{I}_{\{k \le r-1\}}.
    		\end{align*}
    	\end{footnotesize}
    	
    	As a final step we must compute the sum across sequences, i.e. over the index \(i\). We can see instantly that we have confirmed the elements proposed to be 0 in \(\sum_{i=1}^{\left|S_r\right|} n\boldsymbol{X}_{s_{ri}}^\T \boldsymbol{\Sigma}_r^{-1} \boldsymbol{X}_{s_{ri}}/\left|S_r\right|\) are indeed so, and we therefore need only concentrate on the non-zero terms suggested; \(A\), \(B\), \(C\) and \(E\).
    	
    	However, other than \( \boldsymbol{\Pi}_{jr}^\T\boldsymbol{\Sigma}_r^{-1}\boldsymbol{T}_{ks_{ri}} \), all the
    	elements above have been identified as independent of sequence \(s_{ri}\).
    	Therefore computing the sum over the \(s_{ri} \in S_r \) can be done easily,
    	and gives the forms proposed for \(A\), \(B\) and \(C\) in the statement of
    	the Lemma immediately, on multiplying through by \(n/\left|S_r\right|\). But, by our imposed constraint that sequences be balanced for period we can also sum over the \( \boldsymbol{\Pi}_{jr}^\T\boldsymbol{\Sigma}_r^{-1}\boldsymbol{T}_{ks_{ri}} \)
    	
    	\begin{footnotesize}
    		\begin{align*}
    		\sum_{i=1}^{\left|S_r\right|} \frac{n}{\left|S_r\right|} \boldsymbol{\Pi}_{jr}^\T\boldsymbol{\Sigma}_r^{-1}\boldsymbol{\mathrm{T}}_{ks_{ri}} &= \frac{n}{\left|S_r\right|} \left( \frac{\left|S_r\right|}{r}\left[ \frac{1}{\sigma_e^2(\sigma_e^2 + r\sigma_b^2)} \left\{ \sigma_e^2 + (r-1)\sigma_b^2 \right\} \right] + \right.\\
    		& \qquad \qquad \left. \left\{1 - \frac{\left|S_r\right|}{r} \right\} \left\{\frac{-\sigma_b^2}{\sigma_e^2(\sigma_e^2 + r\sigma_b^2)} \right\} \right),\\
    		&= \frac{n}{\left|S_r\right|}\left\{\frac{\left|S_r\right|}{r}\frac{\sigma_e^2}{\sigma_e^2(\sigma_e^2 + r\sigma_b^2)}\right\},\\
    		&= n\frac{1}{r}\frac{\sigma_e^2}{\sigma_e^2(\sigma_e^2 + r\sigma_b^2)},
    		\end{align*}
    	\end{footnotesize}
    	since exactly \(\left|S_r\right|/r\) patients receive each treatment at each
    	time period. This confirms the form proposed for \(E\), and the proof is
    	complete.
    \end{proof}
    
    \begin{thm}
    	\label{theorem1} (Theorem~\ref{thm1} from Section~\ref{s:model}) Let \( \boldsymbol{\beta} = (\mu_0,\pi_2,\dots,\pi_D,\tau_1,\dots,\tau_{D-1})^\T \). Consider an analysis to be performed after some number of stages $l$. Then 
    	\begin{enumerate}
    		\item We have
    		\begin{footnotesize}
    			\begin{align}
    			\text{cov}(\hat{\boldsymbol{\beta}}_l,\hat{\boldsymbol{\beta}}_l \mid L_{l1},\dots,L_{lD-1}=0,L_{lD}=l) &= \left( \frac{ln}{\left|S_D\right|} \sum_{i=1}^{\left|S_D\right|} \boldsymbol{X}_{s_{Di}}^\T \boldsymbol{\Sigma}_D^{-1} \boldsymbol{X}_{s_{Di}} \right)^{-1},\nonumber\\
    			&= \frac{1}{ln} \begin{pmatrix} F & \boldsymbol{G}^\T & \boldsymbol{G}^\T \\[0.3em]
    			\boldsymbol{G} & \boldsymbol{H} & \boldsymbol{0}_{D-1,D-1} \\[0.3em]
    			\boldsymbol{G} & \boldsymbol{0}_{D-1,D-1} & \boldsymbol{H}
    			\end{pmatrix},\label{eq:eq2}
    			\end{align}
    		\end{footnotesize}
    		where
    		\begin{align*}
    		F &= \sigma_b^2 + \frac{2D-1}{D}\sigma_e^2,\\
    		\boldsymbol{G}_{pq} &= -\sigma_e^2 & (p&=1,\dots,D-1;\ q=1), \\
    		\boldsymbol{H}_{pq} &= \sigma_e^2(1 + \delta_{pq}) & (p&=1,\dots,D-1;\ q=1,\dots,D-1).
    		\end{align*}
    		\item If $q \ge 2$ is the largest integer such that $L_{lr} = 0$ for $r=1,\dots,q-1$, then the covariance of the estimates of the fixed effects \(\hat{\pi}_{2l},\dots,\hat{\pi}_{ql}, \hat{\tau}_{1l},\dots,\hat{\tau}_{q-1 l}\) is identical to that it would be for \(L_{l1}=\dots=L_{lD-1}=0\). Moreover, the covariance between the estimates of \(\hat{\pi}_{2l},\dots,\hat{\pi}_{ql}, \hat{\tau}_{1l},\dots,\hat{\tau}_{q-1 l}\) and the estimates of \(\hat{\pi}_{q+1l},\dots,\hat{\pi}_{Dl}, \hat{\tau}_{ql},\dots,\hat{\tau}_{D-1 l}\) is also identical to that it would be for \(L_{l1}=\dots=L_{lD-1}=0\).
    	\end{enumerate}
    \end{thm}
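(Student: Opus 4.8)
The plan is to prove both parts together, working throughout with the matrices $\boldsymbol{N}_r=\frac{n}{|S_r|}\sum_{i=1}^{|S_r|}\boldsymbol{X}_{s_{ri}}^{\T}\boldsymbol{\Sigma}_r^{-1}\boldsymbol{X}_{s_{ri}}$, so that the decomposition recorded in Section~\ref{s:model} reads $\text{cov}(\hat{\boldsymbol{\beta}}_l\mid L_{l1},\dots,L_{lD})^{-1}=\sum_{r=1}^{D}L_{lr}\boldsymbol{N}_r$, and each $\boldsymbol{N}_r$ has the explicit block shape of Lemma~\ref{lemma2}. For part (1) I would simply set $r=D$: the blocks indexed by the $D-r=0$ inactive periods and treatments disappear, leaving $\boldsymbol{N}_D=n\left(\begin{smallmatrix}A&\boldsymbol{B}^{\T}&\boldsymbol{B}^{\T}\\ \boldsymbol{B}&\boldsymbol{C}&\boldsymbol{E}\\ \boldsymbol{B}&\boldsymbol{E}&\boldsymbol{C}\end{smallmatrix}\right)$ with the $r=D$ constants, and $\text{cov}(\hat{\boldsymbol{\beta}}_l\mid L_{lD}=l)=(l\boldsymbol{N}_D)^{-1}$. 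It then remains to invert this matrix; the cleanest route is to exploit its invariance under interchange of the $\boldsymbol{\pi}$- and $\boldsymbol{\tau}$-blocks by passing to the orthogonal coordinates $\mu_0$, $u_j=(\pi_j+\tau_{j-1})/\sqrt{2}$, $v_j=(\pi_j-\tau_{j-1})/\sqrt{2}$ $(j=2,\dots,D)$, in which $\boldsymbol{N}_D$ block-diagonalises into a $D\times D$ ``$(\mu_0,\boldsymbol{u})$'' block and a $(D-1)\times(D-1)$ ``$\boldsymbol{v}$'' block, each of scalar-plus-rank-one type — the latter free of $\sigma_b^2$ because Lemma~\ref{lemma1} gives $\boldsymbol{C}_D-\boldsymbol{E}_D=\frac{1}{D\sigma_e^2}(D\boldsymbol{I}-\boldsymbol{J})$ — so both invert by Sherman--Morrison; transforming back produces $F,\boldsymbol{G},\boldsymbol{H}$, the vanishing $\hat{\boldsymbol{\pi}}$--$\hat{\boldsymbol{\tau}}$ off-diagonal block being nothing but the equality of the ``$+$'' and ``$-$'' subsystem variances. (Alternatively one may just check that $\boldsymbol{N}_D$ times the claimed inverse is the identity, a short list of block identities in $A,\boldsymbol{B},\boldsymbol{C},\boldsymbol{E},F,\boldsymbol{G},\boldsymbol{H}$.)

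For part (2) the hypothesis $L_{lr}=0$ $(r<q)$ gives $\text{cov}(\hat{\boldsymbol{\beta}}_l)^{-1}=\sum_{r=q}^{D}L_{lr}\boldsymbol{N}_r$. The crucial observation is that \emph{every} $\boldsymbol{N}_r$, not only $\boldsymbol{N}_D$, commutes with the fixed permutation $\sigma$ swapping $\pi_j$ with $\tau_{j-1}$ for $j=2,\dots,D$: from Lemma~\ref{lemma2} the $(\boldsymbol{\pi},\boldsymbol{\pi})$ and $(\boldsymbol{\tau},\boldsymbol{\tau})$ blocks of $\boldsymbol{N}_r$ are both $\boldsymbol{C}_r$ padded identically by zeros, the $(\mu_0,\boldsymbol{\pi})$ and $(\mu_0,\boldsymbol{\tau})$ blocks are both $\boldsymbol{B}_r$, and the $(\boldsymbol{\pi},\boldsymbol{\tau})$ block $\boldsymbol{E}_r$ is symmetric. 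Hence $\sum_r L_{lr}\boldsymbol{N}_r$, and therefore $\text{cov}(\hat{\boldsymbol{\beta}}_l)$, is $\sigma$-invariant, so in the $u/v$ coordinates above it is again block-diagonal with a ``$(\mu_0,\boldsymbol{u})$'' part and a ``$\boldsymbol{v}$'' part; moreover there $\boldsymbol{N}_r$ is supported on $\mu_0,u_2,\dots,u_r$ (resp.\ $v_2,\dots,v_r$), with $\boldsymbol{v}$-part equal to $\frac{1}{r\sigma_e^2}(r\boldsymbol{I}-\boldsymbol{J})$ on those coordinates and $(\mu_0,\boldsymbol{u})$-part once more of scalar-plus-rank-one shape.

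Next, for a fixed index $j\le q-1$ I would let $\boldsymbol{w}_j^{\T}$ be the $j$th row of $\boldsymbol{N}_D^{-1}$ inside the $\boldsymbol{u}$- (resp.\ $\boldsymbol{v}$-) block; by part (1) this row is known explicitly, of the form $(-\sqrt{2}\,\sigma_e^2,\ \sigma_e^2(\boldsymbol{e}_j+\boldsymbol{1})^{\T})$ in the $\boldsymbol{u}$-block and $\sigma_e^2(\boldsymbol{e}_j+\boldsymbol{1})^{\T}$ in the $\boldsymbol{v}$-block ($\boldsymbol{e}_j$ the $j$th unit vector, $\boldsymbol{1}$ the all-ones vector). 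A direct substitution then shows $\boldsymbol{w}_j^{\T}\boldsymbol{N}_r=\boldsymbol{e}_j^{\T}$ for \emph{every} $r\ge q$ and not merely for $r=D$: since $j\le q-1\le r-1$, restriction to the active coordinates fixes $\boldsymbol{e}_j$ and contracts $\boldsymbol{1}$ to total weight $r$, after which the remaining terms cancel using $A_r=rB_r$ (which holds because all row sums of $\boldsymbol{\Sigma}_r^{-1}$ are equal, by Lemma~\ref{lemma1}) and $\boldsymbol{C}_r+\boldsymbol{E}_r=\sigma_e^{-2}\boldsymbol{I}+\gamma_r\boldsymbol{J}$ for a scalar $\gamma_r$ — the $\mu_0$-entry of $\boldsymbol{w}_j^{\T}\boldsymbol{N}_r$ vanishing precisely because $A_r=rB_r$. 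Summing and using $\sum_{r=q}^{D}L_{lr}=l$ (there are $l$ stages up to analysis $l$, none with fewer than $q$ treatments) gives $\boldsymbol{w}_j^{\T}\!\left(\sum_r L_{lr}\boldsymbol{N}_r\right)=l\,\boldsymbol{e}_j^{\T}$, so the $j$th row of $\text{cov}(\hat{\boldsymbol{\beta}}_l)$ in each block equals $\frac1l\boldsymbol{w}_j^{\T}$ — exactly the value it takes when $L_{l1}=\dots=L_{l,D-1}=0$. Finally, since the $u/v$ splitting is orthogonal and $\mu_0$ does not enter the effects $\pi_j,\tau_{j-1}$, the covariance matrix of $(\hat{\pi}_{2l},\dots,\hat{\pi}_{ql},\hat{\tau}_{1l},\dots,\hat{\tau}_{q-1\,l})$ and its covariance with $(\hat{\pi}_{q+1\,l},\dots,\hat{\pi}_{Dl},\hat{\tau}_{ql},\dots,\hat{\tau}_{D-1\,l})$ are built from the rows $u_2,\dots,u_q,v_2,\dots,v_q$ just handled, hence agree with the full-model values.

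The step I expect to be the main obstacle is recognising and exploiting the $\pi_j\leftrightarrow\tau_{j-1}$ symmetry so as to block-diagonalise $\text{cov}(\hat{\boldsymbol{\beta}}_l)$ once and for all; without it one must invert a sum of structurally different matrices $\boldsymbol{N}_q,\dots,\boldsymbol{N}_D$ having no common eigenbasis in the natural coordinates, and the stability of the relevant sub-blocks looks accidental. With the symmetry in hand, the remaining crux is that one fixed family of ``full-model'' covectors $\boldsymbol{w}_j$ reduces \emph{every} $\boldsymbol{N}_r$ with $r\ge j+1$ to $\boldsymbol{e}_j^{\T}$ — a cancellation special to complete-block, period-balanced designs, entering through $A_r=rB_r$ and the $\sigma_b^2$-independence of $\boldsymbol{C}_r-\boldsymbol{E}_r$.
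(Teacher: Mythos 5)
Your proposal is correct, and for part (2) it takes a genuinely different route from the paper. For part (1) your fallback option --- verifying that $\boldsymbol{N}_D$ times the claimed inverse is the identity via the five block identities in $A,\boldsymbol{B},\boldsymbol{C},\boldsymbol{E},F,\boldsymbol{G},\boldsymbol{H}$ --- is exactly what the paper does; your primary route via the $\pi_j\leftrightarrow\tau_{j-1}$ swap and Sherman--Morrison on the $u$/$v$ blocks is a tidier derivation of the same matrix, and it explains rather than merely verifies why the $\hat{\boldsymbol{\pi}}$--$\hat{\boldsymbol{\tau}}$ block vanishes. For part (2) the paper instead reorders $\boldsymbol{\beta}$ to interleave periods and treatments and runs an induction over stages, updating $(\boldsymbol{A}+\boldsymbol{B})^{-1}$ at each step with the Henderson--Searle identity; the engine of that induction is the observation that $\boldsymbol{B}_{\boldsymbol{T}\boldsymbol{T}}\boldsymbol{A}_{\boldsymbol{T}\boldsymbol{T}}^{-1}=\tfrac{1}{l}\left(\begin{smallmatrix}V_1&0\\ \boldsymbol{V}_2&\boldsymbol{I}\end{smallmatrix}\right)$, i.e.\ that only the intercept row fails to reproduce itself. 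Your identity $\boldsymbol{w}_j^\T\boldsymbol{N}_r=\boldsymbol{e}_j^\T$ for all $r\ge j+1$ is precisely this same cancellation (I checked both blocks: in the $v$-block it reduces to $(\boldsymbol{e}_j+\boldsymbol{1})^\T(r\boldsymbol{I}-\boldsymbol{J})=r\boldsymbol{e}_j^\T$, and in the $(\mu_0,\boldsymbol{u})$-block the intercept entry vanishes by $A_r=rB_r$ and the $u_k$ entry by $1-2\sigma_e^2B_r+r\sigma_e^2\gamma_r=0$), but by applying it to the whole sum $\sum_{r\ge q}L_{lr}\boldsymbol{N}_r$ at once and using $\sum_r L_{lr}=l$, you dispense with the induction, the interleaved reordering, and the determinant bookkeeping of the paper's Lemma~C.3 entirely. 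What your version buys is brevity and a structural explanation (simultaneous left-inverse rows for a commuting family); what it omits is an explicit argument that $\sum_r L_{lr}\boldsymbol{N}_r$ is invertible, which you need before reading off its inverse's rows from $\boldsymbol{w}_j^\T\boldsymbol{M}=l\boldsymbol{e}_j^\T$ --- this is immediate, since each $\boldsymbol{N}_r=\tfrac{n}{|S_r|}\sum_i\boldsymbol{X}_{s_{ri}}^\T\boldsymbol{\Sigma}_r^{-1}\boldsymbol{X}_{s_{ri}}$ is positive semi-definite, $L_{lD}\ge 1$, and $\boldsymbol{N}_D$ is invertible by part (1), but you should say it, as it replaces the work the paper does through Lemma~C.3 and the positivity of $V_1$.
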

    \begin{proof}
    	1. We begin with the result for the case \(L_{l1}=\dots=L_{1D-1}=0\). We demonstrate this by confirming
    	\begin{scriptsize}
    		\[
    		\text{cov}(\hat{\boldsymbol{\beta}}_l,\hat{\boldsymbol{\beta}}_l \mid L_{l1},\dots,L_{lD-1}=0,L_{lD}=l)^{-1}\text{cov}(\hat{\boldsymbol{\beta}}_l,\hat{\boldsymbol{\beta}}_l \mid L_{l1},\dots,L_{lD-1}=0,L_{lD}=l)
    		= \boldsymbol{I}_{2D-1}.\]
    	\end{scriptsize}
    	By Lemma~\ref{lemma2} we know that
    	\[ \text{cov}(\hat{\boldsymbol{\beta}}_l,\hat{\boldsymbol{\beta}}_l \mid L_{l1},\dots,L_{lD-1}=0,L_{lD}=l)^{-1} = ln\begin{pmatrix}
    	A & \boldsymbol{B}^\T & \boldsymbol{B}^\T \\[0.3em]
    	\boldsymbol{B} & \boldsymbol{C} & \boldsymbol{E} \\[0.3em]
    	\boldsymbol{B} & \boldsymbol{E} & \boldsymbol{C}
    	\end{pmatrix}, \]
    	for
    	\begin{footnotesize}
    		\begin{align*}
    		A &= \frac{D\sigma_e^2}{\sigma_e^2(\sigma_e^2 + D\sigma_b^2)},\\
    		\boldsymbol{B} &= \frac{D\sigma_e^2}{\sigma_e^2(\sigma_e^2 + D\sigma_b^2)}\begin{pmatrix}1\\[0.3em]
    		\vdots\\[0.3em]
    		1
    		\end{pmatrix},\\
    		\boldsymbol{C} &= \frac{1}{\sigma_e^2(\sigma_e^2 + D\sigma_b^2)}
    		\begin{pmatrix}
    		\sigma_e^2 + (D-1)\sigma_b^2 & -\sigma_b^2 & \dots & -\sigma_b^2 \\[0.3em]
    		-\sigma_b^2 & \sigma_e^2 + (D-1)\sigma_b^2 & \ddots & \vdots \\[0.3em]
    		\vdots & \ddots & \ddots & -\sigma_b^2 \\[0.3em]
    		-\sigma_b^2 & \dots & -\sigma_b^2 & \sigma_e^2 + (D-1)\sigma_b^2 &
    		\end{pmatrix},\\
    		\boldsymbol{E} &= \frac{1}{D}\frac{\sigma_e^2}{\sigma_e^2(\sigma_e^2 + D\sigma_b^2)}
    		\begin{pmatrix}
    		1 & \dots & 1 \\[0.3em]
    		\vdots & \ddots & \vdots \\[0.3em]
    		1 & \dots & 1
    		\end{pmatrix},
    		\end{align*}
    	\end{footnotesize}
    	with
    	\begin{align*}
    	\text{dim}(\boldsymbol{B}) &= (D-1) \times 1,\\
    	\text{dim}(\boldsymbol{C}) &= (D-1) \times (D-1),\\
    	\text{dim}(\boldsymbol{E}) &= (D-1) \times (D-1).
    	\end{align*}
    	Thus we must show
    	\[ \begin{pmatrix}
    	\boldsymbol{A} & \boldsymbol{B}^\T & \boldsymbol{B}^\T \\[0.3em]
    	\boldsymbol{B} & \boldsymbol{C} & \boldsymbol{E} \\[0.3em]
    	\boldsymbol{B} & \boldsymbol{E} & \boldsymbol{C}
    	\end{pmatrix} \begin{pmatrix}
    	\boldsymbol{F} & \boldsymbol{G}^\T & \boldsymbol{G}^\T \\[0.3em]
    	\boldsymbol{G} & \boldsymbol{H} & \boldsymbol{0}_{D-1,D-1} \\[0.3em]
    	\boldsymbol{G} & \boldsymbol{0}_{D-1,D-1} & \boldsymbol{H}
    	\end{pmatrix}=\boldsymbol{I}_{2D-1},\]
    	or on expanding
    	\begin{align*}
    	\boldsymbol{A}\boldsymbol{F} +2\boldsymbol{B}^\T \boldsymbol{G} &= 1,\\
    	\boldsymbol{B}\boldsymbol{G}^\T + \boldsymbol{C}\boldsymbol{H} &= \boldsymbol{I}_{D-1},\\
    	\boldsymbol{B}\boldsymbol{G}^\T + \boldsymbol{E}\boldsymbol{H} &= \boldsymbol{0}_{D-1,D-1},\\
    	\boldsymbol{B}\boldsymbol{F} + \boldsymbol{C}\boldsymbol{G} + \boldsymbol{E}\boldsymbol{G} &= \boldsymbol{0}_{D-1,1},\\
    	\boldsymbol{A}\boldsymbol{G}^\T + \boldsymbol{B}^\T \boldsymbol{H} &= \boldsymbol{0}_{1,D-1}.
    	\end{align*}
    	Now
    	\begin{tiny}
    		\begin{align*}
    		\phantom{\boldsymbol{B}\boldsymbol{F} + \boldsymbol{C}\boldsymbol{G} + \boldsymbol{E}\boldsymbol{G}}
    		&\begin{aligned}
    		\mathllap{\boldsymbol{A}\boldsymbol{F} + 2\boldsymbol{B}^\T \boldsymbol{G}} &= \left( \frac{D\sigma_e^2}{\sigma_e^2(\sigma_e^2 + D\sigma_b^2)}\right)\left(\sigma_b^2 + \frac{2D-1}{D}\sigma_e^2 \right) \\
    		&\qquad  + 2\frac{\sigma_e^2}{\sigma_e^2(\sigma_e^2 + D\sigma_b^2)}\begin{pmatrix}1 \\[0.3em] \vdots \\[0.3em] 1 \end{pmatrix}^\T(-\sigma_e^2)\begin{pmatrix}1 \\[0.3em] \vdots \\[0.3em] 1 \end{pmatrix},
    		\end{aligned}\\
    		&\begin{aligned}
    		\mathllap{} &= \left( \frac{D\sigma_e^2}{\sigma_e^2(\sigma_e^2 + D\sigma_b^2)}\right)\left(\sigma_b^2 + \frac{2D-1}{D}\sigma_e^2 \right)\\
    		&\qquad - 2(D-1)\sigma_e^2\frac{\sigma_e^2}{\sigma_e^2(\sigma_e^2 + D\sigma_b^2)},
    		\end{aligned}\\
    		&= 1,\displaybreak[0] \\
    		&\begin{aligned}
    		\mathllap{\boldsymbol{B}\boldsymbol{G}^\T + \boldsymbol{C}\boldsymbol{H}} &= \frac{\sigma_e^2}{\sigma_e^2(\sigma_e^2 + D\sigma_b^2)}\begin{pmatrix}1 \\[0.3em] \vdots \\[0.3em] 1 \end{pmatrix}(-\sigma_e^2) \begin{pmatrix}1 \\[0.3em] \vdots \\[0.3em] 1 \end{pmatrix}^\T \\
    		&\qquad + \frac{1}{\sigma_e^2(\sigma_e^2 + D\sigma_b^2)}\begin{pmatrix} \sigma_e^2 + (D-1)\sigma_b^2 & -\sigma_b^2 & \dots & -\sigma_b^2 \\[0.3em] -\sigma_b^2 & \sigma_e^2 + (D-1)\sigma_b^2 & \ddots & \vdots \\[0.3em] \vdots & \ddots & \ddots & -\sigma_b^2 \\[0.3em] -\sigma_b^2 & \dots & -\sigma_b^2 & \sigma_e^2 + (D-1)\sigma_b^2 \end{pmatrix} \times \\
    		&\qquad \qquad \times (\sigma_e^2)\begin{pmatrix} 2 & 1 & \dots & 1 \\[0.3em] 1 & \ddots & \ddots & \vdots \\[0.3em] \vdots & \ddots & \ddots & 1 \\[0.3em] 1 & \dots & 1 & 2 \end{pmatrix},
    		\end{aligned}\displaybreak[0]\\
    		&\begin{aligned}
    		\mathllap{} &= -\frac{\sigma_e^4}{\sigma_e^2(\sigma_e^2 + D\sigma_b^2)}\begin{pmatrix} 1 & 1 & \dots & 1\\[0.3em]
    		1 & \ddots & \ddots & \vdots \\[0.3em]
    		\vdots & \ddots & \ddots & 1 \\[0.3em]
    		1 & \dots & 1 & 1
    		\end{pmatrix}\\
    		&\qquad + \frac{\sigma_e^2}{\sigma_e^2(\sigma_e^2 + D\sigma_b^2)}
    		\begin{pmatrix} 2\sigma_e^2 & \sigma_e^2 & \dots &
    		\sigma_e^2 \\[0.3em]
    		\sigma_e^2 & 2\sigma_e^2 & \ddots & \vdots \\[0.3em]
    		\vdots & \ddots & \ddots & \sigma_e^2 \\[0.3em]
    		\sigma_e^2 & \dots & \sigma_e^2 & 2\sigma_e^2
    		\end{pmatrix},
    		\end{aligned}\\
    		&=\boldsymbol{I}_{D-1},\displaybreak[0]\\
    		&\begin{aligned}
    		\mathllap{\boldsymbol{B}\boldsymbol{G}^\T + \boldsymbol{E}\boldsymbol{H}} &= \frac{\sigma_e^2}{\sigma_e^2(\sigma_e^2 + D\sigma_b^2)}\begin{pmatrix}1 \\[0.3em] \vdots \\[0.3em] 1 \end{pmatrix}(-\sigma_e^2) \begin{pmatrix}1 \\[0.3em] \vdots \\[0.3em] 1 \end{pmatrix}^\T\\
    		&\qquad + \frac{1}{D}\frac{\sigma_e^2}{\sigma_e^2(\sigma_e^2 + D\sigma_b^2)}\begin{pmatrix} 1 & 1 & \dots & 1 \\[0.3em] 1 & \ddots & \ddots & \vdots \\[0.3em] \vdots & \ddots & \ddots & 1 \\[0.3em] 1 & \dots & 1 & 1 \end{pmatrix} (\sigma_e^2)\begin{pmatrix} 2 & 1 & \dots & 1 \\[0.3em] 1 & \ddots & \ddots & \vdots \\[0.3em] \vdots & \ddots & \ddots & 1 \\[0.3em] 1 & \dots & 1 & 2 \end{pmatrix},
    		\end{aligned}\displaybreak[0]\\
    		&\begin{aligned}
    		\mathllap{} &= -\frac{\sigma_e^4}{\sigma_e^2(\sigma_e^2 + D\sigma_b^2)}\begin{pmatrix} 1 & 1 & \dots & 1\\[0.3em]
    		1 & \ddots & \ddots & \vdots \\[0.3em]
    		\vdots & \ddots & \ddots & 1 \\[0.3em]
    		1 & \dots & 1 & 1
    		\end{pmatrix}\\
    		&\qquad + \frac{1}{D}\frac{\sigma_e^4}{\sigma_e^2(\sigma_e^2 + D\sigma_b^2)}
    		\begin{pmatrix} D & D & \dots & D \\[0.3em]
    		D & \ddots & \ddots & \vdots \\[0.3em]
    		\vdots & \ddots & \ddots & D \\[0.3em]
    		D & \ddots & D & D
    		\end{pmatrix},
    		\end{aligned}\\
    		&=\boldsymbol{0}_{D-1,D-1},\displaybreak[0]\\
    		&\begin{aligned}
    		\mathllap{\boldsymbol{B}\boldsymbol{F} + \boldsymbol{C}\boldsymbol{G} + \boldsymbol{E}\boldsymbol{G}} &= \frac{\sigma_e^2}{\sigma_e^2(\sigma_e^2 + D\sigma_b^2)}\begin{pmatrix} 1 \\[0.3em] 1 \\[0.3em] \vdots \\[0.3em] 1 \end{pmatrix} \left(\sigma_b^2 + \frac{2D-1}{D}\sigma_e^2 \right)\\
    		&\qquad + \frac{1}{\sigma_e^2(\sigma_e^2 + D\sigma_b^2)}\begin{pmatrix} \sigma_e^2 + (D-1)\sigma_b^2 & -\sigma_b^2 & \dots & -\sigma_b^2 \\[0.3em] -\sigma_b^2 & \sigma_e^2 + (D-1)\sigma_b^2 & \ddots & \vdots \\[0.3em] \vdots & \ddots & \ddots & -\sigma_b^2 \\[0.3em] -\sigma_b^2 & \dots & -\sigma_b^2 & \sigma_e^2 + (D-1)\sigma_b^2 \end{pmatrix} \times \\
    		&\qquad \qquad \times (\sigma_e^2)\begin{pmatrix} 1 \\[0.3em] 1 \\[0.3em] \vdots \\[0.3em] 1 \end{pmatrix} + \frac{1}{D}\frac{\sigma_e^2}{\sigma_e^2(\sigma_e^2 + D\sigma_b^2)}\begin{pmatrix} 1 & 1 & \dots & 1\\[0.3em]
    		1 & \ddots & \ddots & \vdots \\[0.3em]
    		\vdots & \ddots & \ddots & 1 \\[0.3em]
    		1 & \dots & 1 & 1
    		\end{pmatrix} (-\sigma_e^2) \begin{pmatrix} 1 \\[0.3em] 1 \\[0.3em] \vdots \\[0.3em] 1 \end{pmatrix},
    		\end{aligned}\displaybreak[0]\\
    		&\begin{aligned}
    		\mathllap{} &= \frac{\sigma_e^2}{\sigma_e^2(\sigma_e^2 + D\sigma_b^2)}\left(\sigma_b^2 + \frac{2D-1}{D}\sigma_e^2 \right)\begin{pmatrix} 1 \\[0.3em] 1 \\[0.3em] \vdots \\[0.3em] 1 \end{pmatrix}\\
    		&\qquad - \frac{\sigma_e^2}{\sigma_e^2(\sigma_e^2 + D\sigma_b^2)}(\sigma_e^2 + \sigma_b^2)\begin{pmatrix} 1 \\[0.3em] 1 \\[0.3em] \vdots \\[0.3em] 1 \end{pmatrix} - \frac{1}{D}\frac{\sigma_e^4}{\sigma_e^2(\sigma_e^2 + D\sigma_b^2)}(D-1)\begin{pmatrix} 1 \\[0.3em] 1 \\[0.3em] \vdots \\[0.3em] 1 \end{pmatrix},
    		\end{aligned}\\
    		&= \boldsymbol{0}_{D-1,1},\displaybreak[0]\\
    		&\begin{aligned}
    		\mathllap{\boldsymbol{A}\boldsymbol{G}^\T + \boldsymbol{B}^\T \boldsymbol{H}} &= \left\{\frac{D\sigma_e^2}{\sigma_e^2(\sigma_e^2 + D\sigma_b^2)} \right\}(-\sigma_e^2)\begin{pmatrix} 1 \\[0.3em] 1 \\[0.3em] \vdots \\[0.3em] 1 \end{pmatrix}^\T \\
    		&\qquad + \left\{\frac{\sigma_e^2}{\sigma_e^2(\sigma_e^2 + D\sigma_b^2)} \right\}\begin{pmatrix} 1 \\[0.3em] 1 \\[0.3em] \vdots \\[0.3em] 1 \end{pmatrix}^\T(\sigma_e^2) \begin{pmatrix} 2 & 1 & \dots & 1 \\[0.3em] 1 & \ddots & \ddots & \vdots \\[0.3em] \vdots & \ddots & \ddots & 1 \\[0.3em] 1 & \dots & 1 & 2 \end{pmatrix},
    		\end{aligned}\displaybreak[0]\\
    		&\begin{aligned}
    		\mathllap{} &= -\frac{D\sigma_e^4}{\sigma_e^2(\sigma_e^2 + D\sigma_b^2)} \begin{pmatrix} 1 \\ 1 \\ \vdots \\ 1 \end{pmatrix}^\T + \frac{\sigma_e^4}{\sigma_e^2(\sigma_e^2 + D\sigma_b^2)} D \begin{pmatrix} 1 \\ 1 \\ \vdots \\ 1 \end{pmatrix}^\T,
    		\end{aligned}\\
    		&= \boldsymbol{0}_{1,D-1},
    		\end{align*}
    	\end{tiny}
    	as required. Thus, the proposed matrix is indeed \(
    	\text{cov}(\hat{\boldsymbol{\beta}}_l,\hat{\boldsymbol{\beta}}_l \mid L_{l1},\dots,L_{lD-1}=0,L_{lD}=l) \).
    	
    	2. For the next part of the theorem, we re-order our vector of fixed effects such that
    	\[ \boldsymbol{\beta}=(\mu_0,\pi_2,\tau_1,\pi_3,\tau_2,\dots,\pi_D,\tau_{D-1})^\T,\]
    	and thus the ordering of the columns of the \(\boldsymbol{X}_{s_{ri}}\) is now
    	\[ \boldsymbol{X}_{s_{ri}} = \begin{pmatrix} \boldsymbol{1}_r & \boldsymbol{\Pi}_{2r} & \boldsymbol{T}_{1s_{ri}} & \boldsymbol{\Pi}_{3r} & \boldsymbol{T}_{2s_{ri}} & \dots & \boldsymbol{\Pi}_{Dr} & \boldsymbol{T}_{D-1 s_{ri}} \end{pmatrix}.\]
    	
    	We proceed by induction over the number of stages completed \(l\), for
    	general \(D\). Now, we assume that at the \(l\)th interim analysis, the
    	statement of the Theorem is true. Now, the covariance at this $l$th analysis is
    	\begin{footnotesize}
    		\begin{align*}
    		\text{cov}(\hat{\boldsymbol{\beta}}_l,\hat{\boldsymbol{\beta}}_l \mid L_{l1}=\dots=L_{lq-1}=0,L_{lq},\dots,L_{lD}) &= \left( \sum_{r=q}^{D} L_r \frac{n}{\left|S_r\right|} \sum_{i=1}^{\mid S_r \mid} \boldsymbol{X}^\T_{s_{ri}}\boldsymbol{\Sigma}_r^{-1}\boldsymbol{X}_{s_{ri}} \right)^{-1}\\
    		&=\boldsymbol{A}^{-1}.
    		\end{align*}
    	\end{footnotesize}
    	It is this specifically we assume follows the required condition. Additionally, we assume
    	that this covariance matrix can be computed, i.e. that \(\boldsymbol{A}\) is invertible.
    	We show that if we conduct another stage of the trial with \(t\) treatments
    	remaining, \(2 \le t \le q \), then the new covariance matrix
    	\begin{align*}
    	\text{cov}(\hat{\boldsymbol{\beta}}_{l+1},\hat{\boldsymbol{\beta}}_{l+1} \mid L_{l+11},\dots,L_{l+1D}) &= \left( \boldsymbol{A} + \frac{n}{\left|S_t\right|} \sum_{i=1}^{\mid S_t \mid} \boldsymbol{X}^\T_{s_{ti}}\boldsymbol{\Sigma}_t^{-1}\boldsymbol{X}_{s_{ti}} \right)^{-1}\\
    	&= \left(\boldsymbol{A} + \boldsymbol{B}\right)^{-1},
    	\end{align*}
    	has the required property for \(\hat{\pi}_{2l},\dots,\hat{\pi}_{tl}, \hat{\tau}_{1l},\dots,\hat{\tau}_{t-1 l}\) as well as for \(\hat{\pi}_{2l},\dots,\hat{\pi}_{tl},\) \(\hat{\tau}_{1l},\dots,\hat{\tau}_{t-1 l}\) and \(\hat{\pi}_{t+1l},\dots,\hat{\pi}_{Dl}\), \(\hat{\tau}_{tl},\dots,\hat{\tau}_{D-1 l}\).
    	Let
    	\begin{align*}
    	\boldsymbol{T}_l &= (\hat{\mu}_{0l},\hat{\pi}_{2l},\hat{\tau}_{1l},\dots,\hat{\pi}_{tl},\hat{\tau}_{t-1 l})^\T,\\
    	\boldsymbol{T}_l' &= (\hat{\pi}_{t+1 l},\hat{\tau}_{tl},\dots,\hat{\pi}_{Dl},\hat{\tau}_{D-1 l})^\T,\\
    	\boldsymbol{W}_l &= (\hat{\pi}_{2l},\hat{\tau}_{1l},\dots,\hat{\pi}_{tl},\hat{\tau}_{t-1 l})^\T.
    	\end{align*}
    	Denote \(\text{dim}(\boldsymbol{T}_l)=\left|T_l\right|\), and similarly for \(\boldsymbol{T}_l'\) and
    	\(\boldsymbol{W}_l\).
    	
    	By our assumptions, we can write
    	\[ \boldsymbol{A}^{-1} = \begin{pmatrix} \boldsymbol{A}_{\boldsymbol{T}_l\boldsymbol{T}_l}^{-1} & \boldsymbol{A}_{\boldsymbol{T}_l\boldsymbol{T}_l'}^{-1} \\[0.3em]
    	\boldsymbol{A}_{\boldsymbol{T}_l'\boldsymbol{T}_l}^{-1} & \boldsymbol{A}_{\boldsymbol{T}_l'\boldsymbol{T}_l'}^{-1}
    	\end{pmatrix}, \]
    	where for example \(\boldsymbol{A}_{\boldsymbol{T}_l\boldsymbol{T}_l}^{-1}=\text{cov}(\boldsymbol{T}_l,\boldsymbol{T}_l \mid L_{l1},\dots,L_{lD})\), and with \(\boldsymbol{A}_{T_lT_l}^{-1}\) and \(\boldsymbol{A}_{\boldsymbol{T}_l'\boldsymbol{T}_l}^{-1}=(\boldsymbol{A}_{\boldsymbol{T}_l\boldsymbol{T}_l'}^{-1})^\T\) holding the required conditions for the fixed effects. Finally, as part of our
    	inductive hypothesis we also assume that \(\text{det}⁡(\boldsymbol{A}_{\boldsymbol{T}_l\boldsymbol{T}_l}^{-1})>0\).
    	
    	With these definitions, our aim can then be stated to prove that
    	\[ \text{cov}(\boldsymbol{W}_{l+1},\boldsymbol{W}_{l+1} \mid L_{l+11},\dots,L_{l+1D})=\frac{l}{l+1}\boldsymbol{A}_{\boldsymbol{W}_l\boldsymbol{W}_l}^{-1},\]
    	i.e. that the covariance between the fixed effects
    	\(\hat{\pi}_{2l},\dots,\hat{\pi}_{tl}\) and
    	\(\hat{\tau}_{1l},\dots,\hat{\tau}_{t-1 l}\) is \(l/(l+1)\) that of its form at interim analysis \(l\), and similarly that
    	\[ \text{cov}(\boldsymbol{W}_{l+1}',\boldsymbol{W}_{l+1} \mid L_{l+11},\dots,L_{l+1D})=\frac{l}{l+1}\boldsymbol{A}_{\boldsymbol{W}_l'\boldsymbol{W}_l}^{-1}.\]
    	For brevity, from here we will write
    	\[ \text{cov}(\hat{\boldsymbol{\beta}}_l,\hat{\boldsymbol{\beta}}_l\mid L_{l1},\dots,L_{lD}) = \text{cov}(\hat{\boldsymbol{\beta}}_l,\hat{\boldsymbol{\beta}}_l),\]
    	and similarly for \(\boldsymbol{T}_l\), \(\boldsymbol{T}_l'\) and
    	\(\boldsymbol{W}_l\), for any $l$.
    	
    	We use the following identity, which requires only the invertibility of \(\boldsymbol{A}\) to
    	be valid (Henderson and Searle, 1981)
    	\[ (\boldsymbol{A}+\boldsymbol{U}\boldsymbol{C}\boldsymbol{V})^{-1}=\boldsymbol{A}^{-1}\left\{\boldsymbol{I}-\boldsymbol{U}\boldsymbol{C}\boldsymbol{V}\boldsymbol{A}^{-1}(\boldsymbol{I}+\boldsymbol{U}\boldsymbol{C}\boldsymbol{V}\boldsymbol{A}^{-1})^{-1}\right\}. \]
    	Note that we can write
    	\[ \boldsymbol{B} = \begin{pmatrix} \boldsymbol{B}_{\boldsymbol{T}_{l+1}\boldsymbol{T}_{l+1}} & \boldsymbol{0} \\[0.3em] \boldsymbol{0} & \boldsymbol{0} \end{pmatrix} = \begin{pmatrix} \boldsymbol{I}_{|\boldsymbol{T}_{l+1}|} \\[0.3em] \boldsymbol{0} \end{pmatrix} \boldsymbol{B}_{\boldsymbol{T}_{l+1}\boldsymbol{T}_{l+1}} \begin{pmatrix} \boldsymbol{I}_{|\boldsymbol{T}_{l+1}|} & \boldsymbol{0} \end{pmatrix} = \boldsymbol{U}\boldsymbol{C}\boldsymbol{V}, \]
    	since our general form for \( \sum_{i=1}^{|S_t|} \boldsymbol{X}_{s_{ti}}^\T \boldsymbol{\Sigma}_t^{-1}
    	\boldsymbol{X}_{s_{ti}} \) is only non-zero in a \(|\boldsymbol{T}_l|\times |\boldsymbol{T}_l|\) block in the top
    	left hand corner by Lemma~\ref{lemma2}. Therefore, provided \(\boldsymbol{A}\) is invertible, we can
    	always invert \(\boldsymbol{A}+\boldsymbol{B}\) to find the covariance matrix at the following interim
    	analysis. Moreover, we have
    	\begin{align*}
    	\boldsymbol{B}\boldsymbol{A}^{-1} &= \begin{pmatrix} \boldsymbol{B}_{\boldsymbol{T}_{l+1}\boldsymbol{T}_{l+1}} & 0 \\[0.3em] \boldsymbol{0} & \boldsymbol{0} \end{pmatrix} \begin{pmatrix} \boldsymbol{A}_{\boldsymbol{T}_l\boldsymbol{T}_l}^{-1} & \boldsymbol{A}_{\boldsymbol{T}_l\boldsymbol{T}_l'}^{-1} \\[0.3em]
    	\boldsymbol{A}_{\boldsymbol{T}_l'\boldsymbol{T}_l}^{-1} & \boldsymbol{A}_{\boldsymbol{T}_l'\boldsymbol{T}_l'}^{-1}
    	\end{pmatrix},\\ 
    	&= \begin{pmatrix} \boldsymbol{B}_{\boldsymbol{T}_{l+1}\boldsymbol{T}_{l+1}}\boldsymbol{A}_{\boldsymbol{T}_l\boldsymbol{T}_l}^{-1} & \boldsymbol{B}_{\boldsymbol{T}_{l+1}\boldsymbol{T}_{l+1}}\boldsymbol{A}_{\boldsymbol{T}_l\boldsymbol{T}_l'}^{-1} \\[0.3em] \boldsymbol{0} & \boldsymbol{0} \end{pmatrix},
    	\end{align*}
    	and
    	\[ \boldsymbol{I}_{2D-1} + \boldsymbol{B}\boldsymbol{A}^{-1} = \begin{pmatrix} \boldsymbol{B}_{\boldsymbol{T}_{l+1}\boldsymbol{T}_{l+1}}\boldsymbol{A}_{\boldsymbol{T}_l\boldsymbol{T}_l}^{-1} + \boldsymbol{I}_{|\boldsymbol{T}_l|} & \boldsymbol{B}_{\boldsymbol{T}_{l+1}\boldsymbol{T}_{l+1}}\boldsymbol{A}_{\boldsymbol{T}_l\boldsymbol{T}_l'}^{-1} \\[0.3em] \boldsymbol{0} & \boldsymbol{I}_{|\boldsymbol{T}_l'|} \end{pmatrix}, \]
    	
    	Now we need the formula (Henderson and Searle, 1981)
    	\begin{align*}
    	\boldsymbol{M} &= \begin{pmatrix} \boldsymbol{E} & \boldsymbol{F} \\[0.3em] \boldsymbol{G} & \boldsymbol{H} \end{pmatrix},\\ \Rightarrow \boldsymbol{M}^{-1} &= \begin{pmatrix} (\boldsymbol{E} - \boldsymbol{F}\boldsymbol{H}^{-1}\boldsymbol{G})^{-1} & -\boldsymbol{E}^{-1}\boldsymbol{F}(\boldsymbol{H}-\boldsymbol{G}\boldsymbol{E}^{-1}\boldsymbol{F})^{-1} \\[0.3em] -\boldsymbol{H}^{-1}\boldsymbol{G}(\boldsymbol{E}-\boldsymbol{F}\boldsymbol{H}^{-1}\boldsymbol{G})^{-1} & (\boldsymbol{H}-\boldsymbol{G}\boldsymbol{E}^{-1}\boldsymbol{F})^{-1} \end{pmatrix},
    	\end{align*}
    	which implies
    	
    	\begin{tiny}
    		\[ (\boldsymbol{I}_{2D-1} + \boldsymbol{B}\boldsymbol{A}^{-1})^{-1} = \begin{pmatrix} (\boldsymbol{B}_{\boldsymbol{T}_{l+1}\boldsymbol{T}_{l+1}}\boldsymbol{A}_{\boldsymbol{T}_l\boldsymbol{T}_l}^{-1} + \boldsymbol{I}_{|\boldsymbol{T}_l|})^{-1} & -(\boldsymbol{B}_{\boldsymbol{T}_{l+1}\boldsymbol{T}_{l+1}}\boldsymbol{A}_{\boldsymbol{T}_l\boldsymbol{T}_l}^{-1} + \boldsymbol{I}_{|\boldsymbol{T}_l|})^{-1}\boldsymbol{B}_{\boldsymbol{T}_{l+1}\boldsymbol{T}_{l+1}}\boldsymbol{A}_{\boldsymbol{T}_l\boldsymbol{T}_l'}^{-1} \\[0.3em] \boldsymbol{0} & \boldsymbol{I}_{|\boldsymbol{T}_l'|} \end{pmatrix}. \]
    	\end{tiny}
    	Note that to use this block matrix inversion formula we require
    	\(\boldsymbol{B}_{\boldsymbol{T}_{l+1}\boldsymbol{T}_{l+1}}\boldsymbol{A}_{\boldsymbol{T}_l\boldsymbol{T}_l}^{-1} + \boldsymbol{I}_{|\boldsymbol{T}_l|}\) to be invertible. However, by the previous result of this theorem only the variance of the intercept term in the form for
    	\(\boldsymbol{A}_{\boldsymbol{T}_l\boldsymbol{T}_l}^{-1}\) is dependent upon the value of \(t\), so we have that
    	\[ \boldsymbol{B}_{\boldsymbol{T}_{l+1}\boldsymbol{T}_{l+1}}\boldsymbol{A}_{\boldsymbol{T}_l\boldsymbol{T}_l}^{-1} = \frac{1}{l} \begin{pmatrix} V_1 & 0 \\[0.3em] \boldsymbol{V}_2 & \boldsymbol{I}_{|\boldsymbol{T}_l|-1} \end{pmatrix} = \frac{1}{l} \begin{pmatrix} V_1 & 0 \\[0.3em] \boldsymbol{V}_2 & \boldsymbol{I}_{|\boldsymbol{W}_l|} \end{pmatrix}, \]
    	for some \(V_1,\boldsymbol{V}_2\); \(\text{dim}(V_1)=1 \times 1, \text{dim}(\boldsymbol{V}_2)=|\boldsymbol{W}_l|
    	\times 1 \). Now, by Lemma~\ref{lemma3} \(\text{det}(\boldsymbol{B}_{\boldsymbol{T}_{l+1}\boldsymbol{T}_{l+1}}^{-1})>0\), which gives
    	\( \text{det}⁡(\boldsymbol{B}_{\boldsymbol{T}_{l+1}\boldsymbol{T}_{l+1}}) = \{ \text{det}⁡(\boldsymbol{B}_{\boldsymbol{T}_{l+1}\boldsymbol{T}_{l+1}}^{-1}) \}^{-1} >
    	0\). By assumption \(\text{det}⁡(\boldsymbol{A}_{\boldsymbol{T}_l\boldsymbol{T}_l}^{-1})>0\), and thus
    	\begin{align*}
    	\text{det}(\boldsymbol{B}_{\boldsymbol{T}_{l+1}\boldsymbol{T}_{l+1}}\boldsymbol{A}_{\boldsymbol{T}_l\boldsymbol{T}_l}^{-1}) &= \text{det}(\boldsymbol{B}_{\boldsymbol{T}_{l+1}\boldsymbol{T}_{l+1}})\text{det}(\boldsymbol{A}_{\boldsymbol{T}_l\boldsymbol{T}_l}^{-1})=\left(\frac{1}{l}\right)^{|\boldsymbol{W}_l|}V_1>0, \\
    	\Rightarrow V_1 &> 0.
    	\end{align*}
    	Therefore
    	\begin{align*}
    	\boldsymbol{B}_{\boldsymbol{T}_{l+1}\boldsymbol{T}_{l+1}}\boldsymbol{A}_{\boldsymbol{T}_l\boldsymbol{T}_l}^{-1} + \boldsymbol{I}_{|\boldsymbol{T}_l|} &= \frac{1}{l} \begin{pmatrix} V_1+l & 0 \\[0.3em] \boldsymbol{V}_2 & (1 + l)\boldsymbol{I}_{|\boldsymbol{W}_l|} \end{pmatrix},\\
    	\Rightarrow \text{det}(\boldsymbol{B}_{\boldsymbol{T}_{l+1}\boldsymbol{T}_{l+1}}\boldsymbol{A}_{\boldsymbol{T}_l\boldsymbol{T}_l}^{-1} + \boldsymbol{I}_{|\boldsymbol{T}_l|}) &> 0,
    	\end{align*}
    	and \( \boldsymbol{B}_{\boldsymbol{T}_{l+1}\boldsymbol{T}_{l+1}}\boldsymbol{A}_{\boldsymbol{T}_l\boldsymbol{T}_l}^{-1} + \boldsymbol{I}_{|\boldsymbol{T}_l|} \)  is therefore invertible as required.
    	
    	Now
    	\begin{tiny}
    		\begin{equation*}
    		\begin{split}
    		\text{cov}(\hat{\boldsymbol{\beta}}_{l+1},\hat{\boldsymbol{\beta}}_{l+1}) &= \begin{pmatrix} A_{\boldsymbol{T}_l\boldsymbol{T}_l}^{-1} & A_{\boldsymbol{T}_l\boldsymbol{T}_l'}^{-1} \\[0.3em]
    		A_{\boldsymbol{T}_l'\boldsymbol{T}_l}^{-1} & A_{\boldsymbol{T}_l'\boldsymbol{T}_l'}^{-1}
    		\end{pmatrix} \times \\
    		& \quad \times \left\{ I_{2D-1} - \begin{pmatrix} B_{\boldsymbol{T}_{l+1}\boldsymbol{T}_{l+1}}A_{\boldsymbol{T}_l\boldsymbol{T}_l}^{-1} & B_{\boldsymbol{T}_{l+1}\boldsymbol{T}_{l+1}}A_{\boldsymbol{T}_l\boldsymbol{T}_l'}^{-1} \\[0.3em] 0 & 0 \end{pmatrix} \right. \\
    		& \qquad \qquad \left. \begin{pmatrix} (B_{\boldsymbol{T}_{l+1}\boldsymbol{T}_{l+1}}A_{\boldsymbol{T}_l\boldsymbol{T}_l}^{-1} + I_{|\boldsymbol{T}_l|})^{-1} & -(B_{\boldsymbol{T}_{l+1}\boldsymbol{T}_{l+1}}A_{\boldsymbol{T}_l\boldsymbol{T}_l}^{-1} + I_{|\boldsymbol{T}_l|})^{-1}B_{\boldsymbol{T}_{l+1}\boldsymbol{T}_{l+1}}A_{\boldsymbol{T}_l\boldsymbol{T}_l'}^{-1} \\[0.3em] 0 & I_{|\boldsymbol{T}_l'|} \end{pmatrix}\right\}.
    		\end{split}
    		\end{equation*}
    	\end{tiny}
    	We thus have
    	\begin{small}
    		\begin{align*}
    		\text{cov}(\boldsymbol{T}_{l+1},\boldsymbol{T}_{l+1}) &= \boldsymbol{A}_{\boldsymbol{T}_l\boldsymbol{T}_l}^{-1}\left\{ \boldsymbol{I}_{|\boldsymbol{T}_l|} - \boldsymbol{B}_{\boldsymbol{T}_{l+1}\boldsymbol{T}_{l+1}}\boldsymbol{A}_{\boldsymbol{T}_l\boldsymbol{T}_l}^{-1}(\boldsymbol{B}_{\boldsymbol{T}_{l+1}\boldsymbol{T}_{l+1}}\boldsymbol{A}_{\boldsymbol{T}_l\boldsymbol{T}_l}^{-1} + \boldsymbol{I}_{|\boldsymbol{T}_l|})^{-1} \right\},\\
    		&= \boldsymbol{A}_{\boldsymbol{T}_l\boldsymbol{T}_l}^{-1}\left[\boldsymbol{I}_{|\boldsymbol{T}_l|} - \left\{\boldsymbol{I}_{|\boldsymbol{T}_l|} + (\boldsymbol{B}_{\boldsymbol{T}_{l+1}\boldsymbol{T}_{l+1}}\boldsymbol{A}_{\boldsymbol{T}_l\boldsymbol{T}_l}^{-1})^{-1} \right\}^{-1} \right],\\
    		\text{cov}(\boldsymbol{T}_{l+1}',\boldsymbol{T}_{l+1}) &= \boldsymbol{A}_{\boldsymbol{T}_l'\boldsymbol{T}_l}^{-1}\left\{ \boldsymbol{I}_{|\boldsymbol{T}_l|} - \boldsymbol{B}_{\boldsymbol{T}_{l+1}\boldsymbol{T}_{l+1}}\boldsymbol{A}_{\boldsymbol{T}_l\boldsymbol{T}_l}^{-1}(\boldsymbol{B}_{\boldsymbol{T}_{l+1}\boldsymbol{T}_{l+1}}\boldsymbol{A}_{\boldsymbol{T}_l\boldsymbol{T}_l}^{-1} + \boldsymbol{I}_{|\boldsymbol{T}_l|})^{-1} \right\},\\
    		&= \boldsymbol{A}_{\boldsymbol{T}_l'\boldsymbol{T}_l}^{-1}\left[\boldsymbol{I}_{|\boldsymbol{T}_l|} - \left\{\boldsymbol{I}_{|\boldsymbol{T}_l|} + (\boldsymbol{B}_{\boldsymbol{T}_{l+1}\boldsymbol{T}_{l+1}}\boldsymbol{A}_{\boldsymbol{T}_l\boldsymbol{T}_l}^{-1})^{-1} \right\}^{-1} \right],
    		\end{align*}
    	\end{small}
    	by the identity \(\boldsymbol{D}\boldsymbol{C}^{-1}(\boldsymbol{D}\boldsymbol{C}^{-1}+\boldsymbol{I})^{-1}=(\boldsymbol{I}+(\boldsymbol{D}\boldsymbol{C}^{-1})^{-1})^{-1}\)
    	(Henderson and Searle, 1981), which we can use as \(
    	\boldsymbol{B}_{\boldsymbol{T}_{l+1}\boldsymbol{T}_{l+1}}\boldsymbol{A}_{\boldsymbol{T}_l\boldsymbol{T}_l}^{-1} + \boldsymbol{I}_{|\boldsymbol{T}_l|} \)  is invertible from earlier.
    	Then
    	
    	\begin{tiny}
    		\begin{align*}
    		\text{cov}(\boldsymbol{T}_{l+1},\boldsymbol{T}_{l+1}) &= \boldsymbol{A}_{\boldsymbol{T}_l\boldsymbol{T}_l}^{-1}\left(\boldsymbol{I}_{|\boldsymbol{T}_l|} - \left[ \boldsymbol{I}_{|\boldsymbol{T}_l|} - \left\{\frac{1}{l} \begin{pmatrix} V_1 & 0 \\[0.3em] \boldsymbol{V}_2 & \boldsymbol{I}_{|\boldsymbol{W}_l|} \end{pmatrix} \right \}^{-1} \right]^{-1}  \right),\\
    		&= \boldsymbol{A}_{\boldsymbol{T}_l\boldsymbol{T}_l}^{-1}\left[\boldsymbol{I}_{|\boldsymbol{T}_l|} - \left\{ \boldsymbol{I}_{|\boldsymbol{T}_l|} - {l}\begin{pmatrix} V_1 & \boldsymbol{0} \\[0.3em] \boldsymbol{V}_2 & \boldsymbol{I}_{|\boldsymbol{W}_l|} \end{pmatrix}^{-1} \right\}^{-1} \right],\\
    		&= \boldsymbol{A}_{\boldsymbol{T}_l\boldsymbol{T}_l}^{-1}\left[\boldsymbol{I}_{|\boldsymbol{T}_l|} - \left\{ \boldsymbol{I}_{|\boldsymbol{T}_l|} - {l}\begin{pmatrix} V_1^{-1} & \boldsymbol{0} \\[0.3em] -\boldsymbol{I}_{|\boldsymbol{W}_l|}\boldsymbol{V}_2V_1^{-1} & \boldsymbol{I}_{|\boldsymbol{W}_l|}^{-1} \end{pmatrix} \right\}^{-1} \right],\\
    		&= \boldsymbol{A}_{\boldsymbol{T}_l\boldsymbol{T}_l}^{-1}\left\{\boldsymbol{I}_{|\boldsymbol{T}_l|} - \begin{pmatrix} 1 + lV_1^{-1} & \boldsymbol{0} \\[0.3em] -\boldsymbol{I}_{|\boldsymbol{W}_l|}\boldsymbol{V}_2V_1^{-1} & (1+l)\boldsymbol{I}_{|\boldsymbol{W}_l|}^{-1} \end{pmatrix}^{-1} \right\},\\
    		&= \boldsymbol{A}_{\boldsymbol{T}_l\boldsymbol{T}_l}^{-1}\times\\
    		& \qquad \times \left\{\boldsymbol{I}_{|\boldsymbol{T}_l|} - \begin{pmatrix} (1 + lV_1^{-1})^{-1} & \boldsymbol{0} \\[0.3em] -(1+l)^{-1}\boldsymbol{I}_{|\boldsymbol{W}_l|} \times -\boldsymbol{I}_{|\boldsymbol{W}_l|}\boldsymbol{V}_2V_1^{-1}(1+lV_1^{-1})^{-1} & (1+l)^{-1}\boldsymbol{I}_{|\boldsymbol{W}_l|} \end{pmatrix} \right\},\\
    		&= \boldsymbol{A}_{\boldsymbol{T}_l\boldsymbol{T}_l}^{-1}\left\{\boldsymbol{I}_{|\boldsymbol{T}_l|} - \begin{pmatrix} (1 + lV_1^{-1})^{-1} & \boldsymbol{0} \\[0.3em] (1+l)^{-1}\boldsymbol{I}_{|\boldsymbol{W}_l|}\boldsymbol{V}_2V_1^{-1}(1+lV_1^{-1})^{-1} & (1+l)^{-1}\boldsymbol{I}_{|\boldsymbol{W}_l|} \end{pmatrix} \right\},\\
    		&= \boldsymbol{A}_{\boldsymbol{T}_l\boldsymbol{T}_l}^{-1}\begin{pmatrix} 1 - (1 + lV_1^{-1})^{-1} & \boldsymbol{0} \\[0.3em] (1+l)^{-1}\boldsymbol{I}_{|\boldsymbol{W}_l|}\boldsymbol{V}_2V_1^{-1}(1+lV_1^{-1})^{-1} & \boldsymbol{I}_{|\boldsymbol{W}_l|}(1+l)^{-1}\boldsymbol{I}_{|\boldsymbol{W}_l|} \end{pmatrix}.
    		\end{align*}
    	\end{tiny}
    	Now
    	\begin{small}
    		\begin{align*}
    		\text{det}\{ \text{cov}(\boldsymbol{T}_{l+1},\boldsymbol{T}_{l+1} \mid L_{l+11},\dots,L_{l+1D})\} &= \text{det}(\boldsymbol{A}_{\boldsymbol{T}_l\boldsymbol{T}_l}^{-1})\text{det}\{1 - (1+lV_1^{-1})^{-1}\}\\
    		& \qquad \times \text{det}\{\boldsymbol{I}_{|\boldsymbol{W}_l| - (1+l)^{-1}\boldsymbol{I}_{|\boldsymbol{W}_l|}} \}\\
    		&>0,
    		\end{align*}
    	\end{small}
    	thus \( \text{det}(\boldsymbol{A}_{\boldsymbol{T}_{l+1}\boldsymbol{T}_{l+1}}^{-1})>0\) as required, and
    	\begin{align*}
    	\text{cov}(\boldsymbol{W}_{l+1},\boldsymbol{W}_{l+1})&=\boldsymbol{A}_{\boldsymbol{W}_l\boldsymbol{W}_l}^{-1}\{\boldsymbol{I}_{|\boldsymbol{W}_l|} - (1+l)^{-1}\boldsymbol{I}_{|\boldsymbol{W}_l|}\},\\&=\boldsymbol{A}_{\boldsymbol{W}_l\boldsymbol{W}_l}^{-1}\left\{\frac{l}{l+1}\boldsymbol{I}_{|\boldsymbol{W}_l|}\right\},\\&=\frac{l}{l+1}\boldsymbol{A}_{\boldsymbol{W}_l\boldsymbol{W}_l}^{-1}.
    	\end{align*}

    	Similarly
    	\begin{align*}
    	\text{cov}(\boldsymbol{W}_{l+1}',\boldsymbol{W}_{l+1}) &= \boldsymbol{A}_{\boldsymbol{W}'_l\boldsymbol{W}_l}^{-1}\{\boldsymbol{I}_{|\boldsymbol{W}_l|} - (1+l)^{-1}\boldsymbol{I}_{|\boldsymbol{W}_l|}\},\\&=\boldsymbol{A}_{\boldsymbol{W}'_l\boldsymbol{W}_l}^{-1}\left\{\frac{l}{l+1}\boldsymbol{I}_{|\boldsymbol{W}_l|}\right\},\\ &= \frac{l}{l+1}\boldsymbol{A}_{\boldsymbol{W}'_l\boldsymbol{W}_l}^{-1}.
    	\end{align*}
    	Thus the covariance of the fixed effects at the \((l+1)\)th analysis has the desired property.
    	
    	Now, as the base case consider having completed one stage of the trial, and
    	proceeding to complete another with any number of treatments \(t=2,\dots,D\)
    	remaining. Then, in this instance
    	\[ \boldsymbol{A}=\frac{n}{|S_D|}\sum_{i=1}^{|S_D|} \boldsymbol{X}_{s_{Di}}^\T \boldsymbol{\Sigma}_D^{-1} \boldsymbol{X}_{s_{Di}}. \]
    	By the previous result of this theorem this is indeed invertible and has the desired property, and moreover
    	by Lemma~\ref{lemma3} \(\text{det}⁡(\boldsymbol{A}_{\boldsymbol{T}_l\boldsymbol{T}_l}^{-1})>0\). The proof is then complete.
    \end{proof}
    
    \begin{lemma}
    	\label{lemma3} Consider the matrix from part (1) of Theorem~\ref{theorem1}; the case \(L_{l1}=\dots=L_{lD-1}=0,L_{ld}=l\)
    	\[ \left( \frac{ln}{\left|S_D\right|} \sum_{i=1}^{\left|S_D\right|} \boldsymbol{X}_{s_{Di}}^\T \boldsymbol{\Sigma}_D^{-1} \boldsymbol{X}_{s_{Di}} \right)^{-1}. \]
    	Now consider restricting to the columns and rows corresponding to
    	\[\boldsymbol{T}_l =
    	(\hat{\mu}_{0l}, \hat{\pi}_{2l}, \hat{\pi}_{3l}, \dots, \hat{\pi}_{tl},
    	\hat{\tau}_{1l}, \hat{\tau}_{2l}, \dots, \hat{\tau}_{t-1 l})^\T, \]
    	for some \(t=2,\dots,D\). The determinant of this matrix, \(\text{cov}(\boldsymbol{T}_l,\boldsymbol{T}_l \mid L_{l1}=\dots=L_{lD-1}=0,L_{lD}=l)\), is
    	strictly positive for any \(t\).
    \end{lemma}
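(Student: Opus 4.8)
The plan is to avoid any direct determinant computation by recognising the matrix in question as a principal submatrix of a positive definite matrix. Write $\boldsymbol{M}_D = \frac{ln}{|S_D|}\sum_{i=1}^{|S_D|}\boldsymbol{X}_{s_{Di}}^\T\boldsymbol{\Sigma}_D^{-1}\boldsymbol{X}_{s_{Di}}$, so that the matrix under consideration is $\boldsymbol{M}_D^{-1}$, which exists and has the explicit block form given in part (1) of Theorem~\ref{theorem1}. First I would note that, by Equation~(\ref{eq:eq4}), $\boldsymbol{\Sigma}_D = \sigma_b^2\boldsymbol{1}_D\boldsymbol{1}_D^\T + \sigma_e^2\boldsymbol{I}_D$ has eigenvalues $\sigma_e^2 + D\sigma_b^2$ and $\sigma_e^2$, both strictly positive since $\sigma_b^2,\sigma_e^2 > 0$; hence $\boldsymbol{\Sigma}_D \succ 0$ and $\boldsymbol{\Sigma}_D^{-1} \succ 0$. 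Each summand $\boldsymbol{X}_{s_{Di}}^\T\boldsymbol{\Sigma}_D^{-1}\boldsymbol{X}_{s_{Di}}$ is then positive semidefinite, since $\boldsymbol{v}^\T\boldsymbol{X}_{s_{Di}}^\T\boldsymbol{\Sigma}_D^{-1}\boldsymbol{X}_{s_{Di}}\boldsymbol{v} = \| \boldsymbol{\Sigma}_D^{-1/2}\boldsymbol{X}_{s_{Di}}\boldsymbol{v}\|^2 \ge 0$, so $\boldsymbol{M}_D$ is positive semidefinite; and since Theorem~\ref{theorem1} shows $\boldsymbol{M}_D$ is invertible, it is in fact positive definite, whence $\boldsymbol{M}_D^{-1} \succ 0$.

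The second step is to observe that $\boldsymbol{T}_l = (\hat\mu_{0l},\hat\pi_{2l},\dots,\hat\pi_{tl},\hat\tau_{1l},\dots,\hat\tau_{t-1\,l})^\T$ is nothing but a sub-collection of the coordinates of $\hat{\boldsymbol{\beta}}_l$, so $\text{cov}(\boldsymbol{T}_l,\boldsymbol{T}_l\mid L_{l1}=\dots=L_{lD-1}=0,L_{lD}=l)$ is precisely the principal submatrix of $\boldsymbol{M}_D^{-1}$ indexed by those coordinates. A principal submatrix of a positive definite matrix is positive definite — restrict the quadratic form $\boldsymbol{x}\mapsto\boldsymbol{x}^\T\boldsymbol{M}_D^{-1}\boldsymbol{x}$ to vectors $\boldsymbol{x}$ supported on that index set — and a positive definite matrix has strictly positive determinant, so $\det\{\text{cov}(\boldsymbol{T}_l,\boldsymbol{T}_l\mid\cdot)\} > 0$ for every $t = 2,\dots,D$. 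There is no serious obstacle along this route; the only point needing care is the bookkeeping that the chosen rows and columns genuinely form an index subset of $\hat{\boldsymbol{\beta}}_l$, which is immediate from the definition of $\boldsymbol{T}_l$.

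Should a fully explicit verification be preferred, one can instead compute from the block form of Theorem~\ref{theorem1} directly. After extracting the positive scalar $(ln)^{-(2t-1)}$, take the Schur complement of the $(1,1)$-entry $F = \sigma_b^2 + \tfrac{2D-1}{D}\sigma_e^2 > 0$; this reduces matters to the determinant of $\begin{pmatrix}\boldsymbol{P} & \boldsymbol{Q}\\ \boldsymbol{Q} & \boldsymbol{P}\end{pmatrix}$ with $\boldsymbol{P} = \sigma_e^2(\boldsymbol{I}_{t-1}+\boldsymbol{J}_{t-1}) - \tfrac{\sigma_e^4}{F}\boldsymbol{J}_{t-1}$ and $\boldsymbol{Q} = -\tfrac{\sigma_e^4}{F}\boldsymbol{J}_{t-1}$, where $\boldsymbol{J}_{t-1} = \boldsymbol{1}_{t-1}\boldsymbol{1}_{t-1}^\T$. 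By elementary block row and column operations this determinant equals $\det(\boldsymbol{P}+\boldsymbol{Q})\det(\boldsymbol{P}-\boldsymbol{Q})$, and using the eigenvalues $t-1$ and $0$ of $\boldsymbol{J}_{t-1}$ one obtains $\det(\boldsymbol{P}-\boldsymbol{Q}) = t(\sigma_e^2)^{t-1}$ and $\det(\boldsymbol{P}+\boldsymbol{Q}) = (\sigma_e^2)^{t-2}\,\tfrac{\sigma_e^2}{F}\{Ft - 2\sigma_e^2(t-1)\}$, the bracket simplifying to $\sigma_b^2 t + \sigma_e^2(2 - t/D)$. Multiplying the factors together with $F>0$ yields a strictly positive determinant. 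In this computational route the genuinely substantive step — and the one place the hypothesis $t \le D$ is actually used — is checking that $\sigma_b^2 t + \sigma_e^2(2 - t/D) > 0$, which holds because $t/D \le 1$.
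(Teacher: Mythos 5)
Your proposal is correct, and your primary argument takes a genuinely different route from the paper's. The paper proves this lemma by explicit computation: it writes out the restricted covariance matrix, factors the determinant via the Schur complement of the lower-right block \(\boldsymbol{N}\), exhibits the inverse of \(\boldsymbol{P}=\boldsymbol{I}_{t-1}+\boldsymbol{1}_{t-1}\boldsymbol{1}_{t-1}^\T\) by hand, and reduces everything to the scalar \(\sigma_b^2+\sigma_e^2(2/t-1/D)>0\). Your first argument instead notes that \(\boldsymbol{M}_D\) is a positive multiple of a sum of matrices \(\boldsymbol{X}_{s_{Di}}^\T\boldsymbol{\Sigma}_D^{-1}\boldsymbol{X}_{s_{Di}}\) with \(\boldsymbol{\Sigma}_D\succ 0\), hence positive semidefinite, hence (being invertible by part (1) of Theorem~\ref{thm1}, which is all the lemma's statement presupposes, so there is no circularity) positive definite; its inverse and every principal submatrix thereof are then positive definite, and a positive definite matrix has strictly positive determinant. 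This is shorter, computation-free, proves the stronger conclusion that the submatrix is positive definite, applies to an arbitrary index subset rather than the specific \(\boldsymbol{T}_l\), and absorbs the hypothesis \(t\le D\) into the structure rather than invoking it in a final inequality. What the paper's computation buys in exchange is only the explicit value of the determinant, which is not needed elsewhere. Your fallback computation is also correct and consistent with the paper: taking the Schur complement of the scalar \(F\) rather than of \(\boldsymbol{N}\), using \(\det\bigl(\begin{smallmatrix}\boldsymbol{P}&\boldsymbol{Q}\\ \boldsymbol{Q}&\boldsymbol{P}\end{smallmatrix}\bigr)=\det(\boldsymbol{P}+\boldsymbol{Q})\det(\boldsymbol{P}-\boldsymbol{Q})\) and the eigenvalues of \(\boldsymbol{1}_{t-1}\boldsymbol{1}_{t-1}^\T\), you recover exactly the paper's value \((ln)^{-(2t-1)}t^2\sigma_e^{4(t-1)}\{\sigma_b^2+\sigma_e^2(2/t-1/D)\}\) and correctly isolate \(t\le D\) as the one substantive step, just as the paper does.
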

    \begin{proof}
    	We have, by part (1) of Theorem~\ref{theorem1}
    	\[ \text{cov}(\boldsymbol{T}_l,\boldsymbol{T}_l \mid L_{l1}=\dots=L_{lD-1}=0,L_{lD}=l) = \frac{1}{ln} \begin{pmatrix}
    	\sigma_b^2 + \frac{2D-1}{D}\sigma_e^2 & \boldsymbol{M}^\T \\[0.3em] \boldsymbol{M} & \boldsymbol{N} \end{pmatrix}, \]
    	for
    	\begin{align*}
    	\boldsymbol{M} &= \begin{pmatrix} -\sigma_e^2 \\[0.3em] \vdots \\[0.3em] -\sigma_e^2 \end{pmatrix}, \\
    	\boldsymbol{N} &= \sigma_e^2 \begin{pmatrix} \boldsymbol{P} & \boldsymbol{0} \\[0.3em] \boldsymbol{0} & \boldsymbol{P} \end{pmatrix},\\
    	\boldsymbol{P} &= \begin{pmatrix} 2 & 1 & \dots & 1 \\[0.3em]
    	1 & \ddots & \ddots & \vdots \\[0.3em]
    	\vdots & \ddots & \ddots & 1 \\[0.3em]
    	1 & \dots & 1 & 2
    	\end{pmatrix},\\
    	\text{dim}(\boldsymbol{M}) &= (2t-2) \times 1,\\
    	\text{dim}(\boldsymbol{N}) &= 2(t-1) \times 2(t-1),\\
    	\text{dim}(\boldsymbol{P}) &= (t-1) \times (t-1).
    	\end{align*}
    	Then
    	\begin{scriptsize}
    		\begin{align*}
    		\text{det}\left\{\frac{1}{ln} \begin{pmatrix}
    		\sigma_b^2 + \frac{2D-1}{D}\sigma_e^2 & \boldsymbol{M}^\T \\[0.3em] \boldsymbol{M} & \boldsymbol{N} \end{pmatrix}\right\} &= \left(\frac{1}{ln}\right)^{2t-1} \text{det}\left(\begin{pmatrix}
    		\sigma_b^2 + \frac{2D-1}{D}\sigma_e^2 & \boldsymbol{M}^\T \\[0.3em] \boldsymbol{M} & \boldsymbol{N} \end{pmatrix}\right),\\
    		&= \left(\frac{1}{ln}\right)^{2t-1} \text{det}(\boldsymbol{N}) \text{det}\left\{\left(\sigma_b^2 + \frac{2D-1}{D}\sigma_e^2\right) - \boldsymbol{M}^\T \boldsymbol{N}^{-1}\boldsymbol{M}\right\}.
    		\end{align*}
    	\end{scriptsize}
    	Now
    	\begin{align*}
    	\boldsymbol{N}^{-1} &= \frac{1}{\sigma_e^2}\begin{pmatrix} \boldsymbol{P}^-1 & \boldsymbol{0} \\[0.3em] \boldsymbol{0} & \boldsymbol{P}^-1 \end{pmatrix},\\
    	\text{det}(\boldsymbol{N}) &= \sigma_e^{4(t-1)}\text{det}(\boldsymbol{P})^2.
    	\end{align*}
    	We are left therefore to find \(\boldsymbol{P}^{-1}\). We have \(\boldsymbol{P}\boldsymbol{Q}=\boldsymbol{I}_{D-1}\) for
    	\[ \boldsymbol{Q}=\frac{1}{t} \begin{pmatrix} t-1 & -1 & \dots & -1 \\[0.3em]
    	-1 & \ddots & \ddots & \vdots \\[0.3em]
    	\vdots & \ddots & \ddots & -1 \\[0.3em]
    	-1 & \dots & -1 & t-1
    	\end{pmatrix}. \]
    	Thus \(\boldsymbol{P}^{-1}=\boldsymbol{Q}\), and
    	\begin{tiny}
    		\begin{align*}
    		\text{det}\left\{\frac{1}{ln} \begin{pmatrix}
    		\sigma_b^2 + \frac{2D-1}{D}\sigma_e^2 & \boldsymbol{M}^\T \\[0.3em] \boldsymbol{M} & \boldsymbol{N} \end{pmatrix}\right\} &= \left(\frac{1}{ln}\right)^{2t-1} \text{det}\left(\begin{pmatrix}
    		\sigma_b^2 + \frac{2D-1}{D}\sigma_e^2 & \boldsymbol{M}^\T \\[0.3em] \boldsymbol{M} & \boldsymbol{N} \end{pmatrix}\right),\\
    		&= \left(\frac{1}{ln}\right)^{2t-1} \text{det}(\boldsymbol{N}) \text{det}\left\{\left(\sigma_b^2 + \frac{2D-1}{D}\sigma_e^2\right) - \boldsymbol{M}^\T \boldsymbol{N}^{-1}\boldsymbol{M}\right\},\\
    		&= \left(\frac{1}{ln}\right)^{2t-1} \sigma_e^{4(t-1)}\text{det}(\boldsymbol{P})^2\\ & \qquad \times \text{det}\left\{\left(\sigma_b^2 + \frac{2D-1}{D}\sigma_e^2\right) - \sigma_e^2 \begin{pmatrix} -1 \\[0.3em] \vdots \\[0.3em] -1 \end{pmatrix}^\T \begin{pmatrix} \boldsymbol{Q} & \boldsymbol{0} \\[0.3em] \boldsymbol{0} & \boldsymbol{Q} \end{pmatrix} \begin{pmatrix} -1 \\[0.3em] \vdots \\[0.3em] -1 \end{pmatrix} \right\}.
    		\end{align*}
    	\end{tiny}
    	This will be strictly positive provided
    	\[ \text{det}\left\{\left(\sigma_b^2 + \frac{2D-1}{D}\sigma_e^2\right) - \sigma_e^2 \begin{pmatrix} -1 \\[0.3em] \vdots \\[0.3em]
    	-1 \end{pmatrix}^\T \begin{pmatrix} \boldsymbol{Q} & \boldsymbol{0} \\[0.3em] \boldsymbol{0} & \boldsymbol{Q} \end{pmatrix}
    	\begin{pmatrix} -1 \\[0.3em] \vdots \\[0.3em] -1 \end{pmatrix} \right\} > 0. \]
    	But
    	\begin{tiny}
    		\begin{align*}
    		\text{det}\left\{\left(\sigma_b^2 +
    		\frac{2D-1}{D}\sigma_e^2\right) - \sigma_e^2 \begin{pmatrix} -1 \\[0.3em] \vdots \\[0.3em]
    		-1 \end{pmatrix}^\T \begin{pmatrix} \boldsymbol{Q} & \boldsymbol{0} \\[0.3em] \boldsymbol{0} & \boldsymbol{Q} \end{pmatrix}
    		\begin{pmatrix} -1 \\[0.3em] \vdots \\[0.3em] -1 \end{pmatrix} \right\} &=
    		\text{det}\left\{\left(\sigma_b^2 +
    		\frac{2D-1}{D}\sigma_e^2\right) - \sigma_e^2 \begin{pmatrix} -1/t \\[0.3em] \vdots \\[0.3em]
    		-1/t \end{pmatrix}^\T \begin{pmatrix} -1 \\[0.3em] \vdots \\[0.3em] -1 \end{pmatrix} \right\},\\
    		&= \text{det}\left\{ \left(\sigma_b^2 + \frac{2D-1}{D}\sigma_e^2\right) - \sigma_e^2\left(\frac{2t-2}{t}\right) \right\},\\
    		&= \sigma_b^2 + \sigma_e^2\left(\frac{2}{t} - \frac{1}{D} \right) > 0,
    		\end{align*}
    	\end{tiny}
    	since \(t \le D \). Thus, we have the required result.
    \end{proof}
    
    \section*{APPENDIX D: PROGRAMMES} \label{appD}
	
	\subsection{R}
	
	The \texttt{R} package \texttt{groupSeqCrossover} allows the determination,
	and exploration of, group sequential power family crossover trial designs.
	The function \texttt{gsco} is used to determine the design,
	taking inputs for the value of \(L\), \(\alpha\), \(\beta\), \(\delta\), $\sigma_e^2$, $\Delta$ and sequence type (\texttt{"latin"} or \texttt{"williams"}). The function \texttt{plot} can then be used through S3 methods to plot power curves, the expected sample size, and
	the expected number of observations, for varying true treatment effects. Moreover, the function \texttt{simulategsco} can be used to simulate group sequential crossover trials in order to assess their operating characteristics. This is especially useful in the case of small sample size designs. The code below for example identifies the discussed design for $\Delta=0$ and then plots the expected sample size curve
	
	\begin{verbatim}
	# Identify the design
	Delta.0 <- gsco(Delta = 0)
	# Plot E(N | tau_1 = ... = tau_(D-1) = theta)
	plot(Delta.0)
	\end{verbatim}
	
	Similarly, the following code would allow the determination of the familywise error rate under the global null hypothesis (when analysing using maximum likelihood estimation and using quantile substitution on the identified boundaries) of the design discussed in Appendix B
	
	\begin{verbatim}
	simulate.fwer <- simulategsco(REML = F, adjust = T)
	\end{verbatim}
	
	\subsection{Matlab}
	
	In order to ease the understanding of the forms of Equations (4.1) through (4.4),
	Matlab code employing symbolic algebra to return their forms is available. The user details a value for \(D\), and four matrices are then returned. For example, consider the case \(D = 4\)
	
	\begin{verbatim}
	>> [eq4_1, eq4_2, eq4_3, eq4_4] = groupSeqCrossoverMatrices(4);
	\end{verbatim}
	
	\texttt{eq4\_2} contains the \(\Sigma_r\) for \(r=2,3,4\). Specifically
	
	\begin{verbatim}
	>> eq4_2
	eq4_2 = [ b^2 + e^2,       b^2,       b^2,       b^2]
	[       b^2, b^2 + e^2,       b^2,       b^2]
	[       b^2,       b^2, b^2 + e^2,       b^2]
	[       b^2,       b^2,       b^2, b^2 + e^2]
	[ b^2 + e^2,       b^2,       b^2,         0]
	[       b^2, b^2 + e^2,       b^2,         0]
	[       b^2,       b^2, b^2 + e^2,         0]
	[         0,         0,         0,         0]
	[ b^2 + e^2,       b^2,         0,         0]
	[       b^2, b^2 + e^2,         0,         0]
	[         0,         0,         0,         0]
	[         0,         0,         0,         0]
	\end{verbatim}
	From this we observe
	\begin{align*}
	\boldsymbol{\Sigma}_2 &= \begin{pmatrix} \sigma_e^2 + \sigma_b^2 & \sigma_b^2 \\[0.3em]
	\sigma_b^2 & \sigma_e^2 + \sigma_b^2
	\end{pmatrix}, \\
	\boldsymbol{\Sigma}_3 &= \begin{pmatrix} \sigma_e^2 + \sigma_b^2 & \sigma_b^2 & \sigma_b^2 \\[0.3em]
	\sigma_b^2 & \sigma_e^2 + \sigma_b^2 & \sigma_b^2 \\[0.3em]
	\sigma_b^2 & \sigma_b^2 & \sigma_e^2 + \sigma_b^2
	\end{pmatrix}, \\
	\boldsymbol{\Sigma}_4 &= \begin{pmatrix} \sigma_e^2 + \sigma_b^2 & \sigma_b^2 & \sigma_b^2 & \sigma_b^2 \\[0.3em]
	\sigma_b^2 & \sigma_e^2 + \sigma_b^2 & \sigma_b^2 & \sigma_b^2 \\[0.3em]
	\sigma_b^2 & \sigma_b^2 & \sigma_e^2 + \sigma_b^2 & \sigma_b^2 \\[0.3em]
	\sigma_b^2 & \sigma_b^2 & \sigma_b^2 & \sigma_e^2 + \sigma_b^2
	\end{pmatrix}.
	\end{align*}
	Note that we have to remove the rows and columns of zeroes from the
	\(\boldsymbol{\Sigma}_r\) for \(r<D\), and we use b and e for \(\sigma_b\) and
	\(\sigma_e\) respectively.
	
	Similarly, \texttt{eq4\_1} contains the forms for \(\boldsymbol{\Sigma}_r^{-1}\) for \(r=2,3,4\).
	
	\texttt{eq4\_3} and \texttt{eq4\_4} correspond to the case
	\(\boldsymbol{\beta}=(\mu_0,\pi_2,\dots,\pi_D,\tau_1,\dots,\tau_{D-1})^\T\). \texttt{eq4\_3}
	contains
	\[ \sum_{i=1}^{|S_r|} \frac{n}{|S_r|} \boldsymbol{X}_{s_{ri}}^\T \boldsymbol{\Sigma}_r^{-1}\boldsymbol{X}_{s_{ri}} \quad (r=2,3,4).\]
	Precisely, the first \(2D-1\) rows correspond to the case \(r=4\), the next \(2D-1\) to \(r=3\), and so forth.
	
	Finally, \texttt{eq4\_4} contains the matrix \( \text{cov}(\hat{\boldsymbol{\beta}}_l, \hat{\boldsymbol{\beta}}_l \mid \boldsymbol{\omega}=(L,\dots,L)^\T,\boldsymbol{\psi}) \).
	
	\section*{ACKNOWLEDGEMENTS}
	This work was supported by the Wellcome Trust [grant number 099770/Z/12/Z to M.J.G.]; the Medical Research Council [grant number MC\_UP\_1302/2 to A.P.M.]; and the National Institute for Health Research Cambridge Biomedical Research Centre [MC\_UP\_1302/6 to J.M.S.W.].

\end{document}